\def\diag{\operatorname{diag}}
\def\RR{\mathbb{R}}
\def\max{\mathrm{max}}
\def\min{\mathrm{min}}
\def\stab{\operatorname{stab}}
\def\geom{\operatorname{geom}}
\def\S{\mathcal{S}}
\def\C{\mathcal{C}}
\def\OOplus={{ {{{\mathcal{O}}}}_+}}
\def\OObar={\overline {\mathcal{O}_+}}
\def\RR{\mathbb{R}}
\def\IR{\mathbb{R}}
\def\P{\mathcal{P}}
\def\sgn{\operatorname{sgn}}
\newcommand{\od}{\stackrel{\mbox {\tiny {def}}}{=}}
\newcommand{\norm}[1]{\Vert #1\Vert}
\newcommand{\supp}{\operatorname{supp}}
\def\C{\mathcal{C}}
\def\CM{\mathrm{cm}}
\definecolor{gold}{rgb}{0.85,.66,0}
\definecolor{cherry}{rgb}{0.9,.1,.2}
\definecolor{burgundy}{rgb}{0.8,.2,.2}
\definecolor{orangered}{rgb}{0.85,.3,0}
\definecolor{orange}{rgb}{0.85,.4,0}
\definecolor{olive}{rgb}{.45,.4,0}
\definecolor{lime}{rgb}{.6,.9,0}
\definecolor{green}{rgb}{.2,.7,0}
\definecolor{grey}{rgb}{.4,.4,.2}
\definecolor{brown}{rgb}{.4,.2,.1}
\definecolor{blue}{rgb}{0,.0, .81}
\def\ssec#1{\noindent {\bf #1.}}
\def\q#1#2{\noindent {\it Question #1: #2}}
\newtheorem{corollary}{Corollary}
\newtheorem{proposition}{Proposition}
\newtheorem{theorem}{Theorem}
\newtheorem{lemma}{Lemma}
\newenvironment{definition}[1][Definition.]{\begin{trivlist}
\item[\hskip \labelsep {\bfseries #1}]}{\end{trivlist}}
\begin{document}

\title{Encoding binary neural codes in networks \\ of threshold-linear neurons}
\author{Carina Curto$^1$ $\cdot$ Anda Degeratu$^2$ $\cdot$ Vladimir Itskov$^1$}

\date{May 16, 2013}

\maketitle
$^1$ Department of Mathematics, University of Nebraska-Lincoln\\
\indent $^2$ Department of Mathematics, Albert-Ludwig-Universit\"at, Freiburg, Germany\\
\indent\; ccurto2@math.unl.edu, anda.degeratu@math.uni-freiburg.de, vladimir.itskov@math.unl.edu\\
\vspace{.25in}

\begin{abstract} 
Networks of neurons in the brain encode preferred patterns of neural activity via their synaptic connections.  
Despite receiving considerable attention, the precise relationship between network connectivity and encoded patterns is still poorly understood.  
Here we consider this problem for networks of threshold-linear neurons whose computational function
is to learn and store a set of binary patterns (e.g., a {\it neural code}) as ``permitted sets'' of the network.  We introduce a simple Encoding Rule that selectively turns ``on''  synapses between neurons that co-appear in one or more patterns.  The rule uses synapses that are {\it binary}, in the sense of having only two states (``on'' or ``off''), but also {\it heterogeneous}, with weights drawn from an underlying  synaptic strength matrix $S$.  Our main results precisely describe the stored patterns that result from the Encoding Rule -- including unintended ``spurious'' states -- and give an explicit characterization of the dependence on $S$.  
In particular, we find that binary patterns are successfully stored in these networks when the excitatory connections
between neurons are {\it geometrically balanced} -- i.e., they satisfy a set of geometric constraints.
Furthermore, we find that certain types of neural codes are {\it natural} in the context of these networks, meaning that
the full code can be accurately learned from a highly undersampled set of patterns.  
Interestingly, many commonly observed neural codes in cortical and hippocampal areas are natural in this sense.
As an application, we construct networks that encode hippocampal place field codes nearly exactly,
following presentation of only a small fraction of patterns.
To obtain our results, we prove new theorems
using classical ideas from convex and distance geometry, such as Cayley-Menger determinants, revealing a novel connection between these areas of mathematics and coding properties of neural networks. 
\end{abstract}

\begin{small}
\tableofcontents
\end{small}
\bigskip

\section{Introduction}

Recurrent networks in cortex and hippocampus exhibit highly constrained patterns of neural activity, even in the absence of sensory inputs \cite{kenet:2003,yuste:2005,luczak:2009,fiser:2011}.  These patterns are strikingly similar in both stimulus-evoked and spontaneous activity \cite{kenet:2003,luczak:2009}, 
suggesting that cortical networks store neural codes consisting of a relatively small number of allowed activity patterns \cite{yuste:2005,fiser:2011}.  What is the relationship between the stored patterns of a network and its underlying connectivity? 
 More specifically, given a prescribed set of binary patterns (e.g., a binary neural code), how can one arrange the connectivity of a network such that precisely those patterns are encoded as fixed point attractors of the dynamics, while minimizing the emergence of unwanted ``spurious'' states?  This problem, which  we refer to as the {\it Network Encoding (NE) Problem}, dates back at least to 1982 and has been most commonly studied in the context of the Hopfield model \cite{Hopfield:1982,Hopfield:1984,Amitbook,HertzKroghPalmer}.   A major challenge in this line of work has been to characterize the spurious states \cite{amit_spin-glass_1985, AGS87, Amit1989, HertzKroghPalmer, RoudiTreves2003}.

In this article we take a new look at the NE problem for networks of threshold-linear neurons whose computational function is to learn and store binary neural codes.  Following \cite{Seung:2002,Seung:2003}, we regard stored patterns of a threshold-linear network as ``permitted sets'' (a.k.a. ``stable sets'' \cite{CDI12}), corresponding to subsets of neurons that may be co-active at stable fixed points of the dynamics in the presence of one or more external inputs.  Although our main results do not make any special assumptions about the prescribed sets of patterns to be stored, many commonly observed neural codes are sparse and have a rich internal structure, with correlated patterns reflecting similarities between represented stimuli.   Our perspective thus differs somewhat from the traditional Hopfield model \cite{Hopfield:1982, Hopfield:1984}, where binary patterns are typically assumed to be uncorrelated and dense \cite{Amitbook,HertzKroghPalmer}. 

To tackle the NE problem we introduce a simple learning rule, called the Encoding Rule, that constructs a network $W$ from a set of prescribed binary patterns $\C$.
The rule selectively turns ``on'' connections between neurons that co-appear in one or more of the presented patterns, and uses synapses that are {\it binary} (in the sense of having only two states -- ``on'' or ``off''), but also {\it heterogeneous}, with weights drawn from an underlying  synaptic strength matrix $S$.  
Our main result, Theorem~\ref{thm:main-result}, precisely characterizes the full set of permitted sets 
$\P(W)$ for any network constructed using the Encoding Rule, and shows explicitly the dependence
on $S$.  In particular, we find that binary patterns can be successfully stored in these networks if and only if the strengths of excitatory connections
among co-active neurons in a pattern are {\it geometrically balanced} -- i.e., they satisfy a set of geometric constraints.  
Theorem~\ref{thm:main-result2} shows that any set of binary patterns that can be exactly encoded 
as $\C = \P(W)$ for symmetric $W$ can in fact be exactly encoded using our Encoding Rule.
Furthermore, when a set of binary patterns $\C$ is {\it not} encoded exactly, we are able to completely describe the spurious states, and find that they correspond to cliques in the ``co-firing'' graph $G(\C)$.  

An important consequence of these findings is that certain neural codes are {\it natural} in the context of symmetric threshold-linear networks;
i.e., the structure of the code closely matches the structure of emerging spurious states via the Encoding Rule, 
allowing the full code to be accurately learned from a highly undersampled set of patterns.  Interestingly, using Helly's theorem \cite{Barvinok2002} 
we can show that many commonly observed neural codes in cortical and hippocampal areas are natural in this sense.
As an application, we construct networks that encode hippocampal place field codes nearly exactly,
following presentation of only a small and randomly-sampled fraction of patterns in the code.

The organization of this paper is as follows.  In Section~\ref{sec:background} we introduce some necessary background on binary neural codes, threshold-linear networks, and permitted sets.  In Section~\ref{sec:results} we introduce the Encoding Rule and present our Results.  The proofs of our main results are given in Section~\ref{sec:proofs} and use ideas from classical distance and convex geometry, such as Cayley-Menger determinants \cite{Blumenthal}, 
establishing a novel connection between these areas of mathematics and neural network theory.  
Finally, Sections~\ref{sec:discussion} and~\ref{sec:appendices} contain the Discussion and Appendices, respectively.

\section{Background} \label{sec:background}

 \subsection{Binary neural codes} \label{sec:binary}

A binary pattern on $n$ neurons is simply a string of $0$s and $1s$, with a $1$ for each active neuron and a $0$ denoting silence; 
equivalently, it is a subset of (active) neurons 
$$\sigma \subset \{1,\ldots,n\} \od [n].$$  
A {\it binary neural code} (a.k.a. a combinatorial neural code \cite{neuro-coding,OsborneBialek08}) is a collection of binary patterns $\C \subset 2^{[n]}$, where $2^{[n]}$ denotes the set of all subsets of $[n]$.  

Experimentally observed neural activity in cortical and hippocampal areas suggests that neural codes are {\it sparse} \cite{Hromadka:2008,Barth:2012}, meaning
that relatively few neurons are co-active in response to any given stimulus. 
Correspondingly, we say that a binary neural code $\C \subset 2^{[n]}$ is {\it $k$-sparse}, for $k < n$, if 
all patterns  $\sigma \in \C$ satisfy $|\sigma| \leq k$.  Note that in order for a code $\C$ to have good error-correcting capability, the total number of codewords $|\C|$ must be considerably smaller than $2^n$ \cite{MacWilliamsSloane83,HuffmanPless03,neuro-coding}, a fact that may account for the limited repertoire of observed neural activity.  

Important examples of binary neural codes are classical population codes, such as {\it receptive field codes} (RF codes) \cite{neuro-coding}.
A simple yet paradigmatic example is the hippocampal {\it place field code} (PF code), where single neuron activity is characterized by place fields \cite{O:1976,ON:1978}. 
We will consider general RF codes in Section~\ref{sec:natural}, and specialize to sparse PF codes in Section~\ref{sec:PFcodes}.

\subsection{Threshold-linear networks} 
A {\it threshold-linear network} \cite{Seung:2003, CDI12} is a firing rate model for a recurrent network \cite{DayanAbbott,ErmentroutTerman} 
where the neurons all have threshold nonlinearity, $\phi(z) = [z]_+ = \max\{z,0\}$.  
The dynamics are given by
\begin{equation*} \frac{d x_i}{dt}=-\frac1{\tau_i}x_i+\phi\left(\sum_{j=1}^n  W_{ij}x_j+e_i-\theta_i  \right), \; \;\; i = 1,...,n,
\end{equation*}
where $n$ is the number of neurons,  $x_i(t)$ is the firing rate of the $i$th neuron at time $t$, $e_i$ is the external input to the $i$th neuron, and $\theta_i>0$ is its threshold.  
The matrix entry $W_{ij}$ denotes the effective strength of the connection from the $j$th to the $i$th neuron, and the timescale $\tau_i > 0$ gives the rate at which a neuron's activity decays to zero in the absence of any inputs (see Figure 1).
 
\begin{wrapfigure}{r}{.4\linewidth} 


\begin{center}
\includegraphics[width=2.25in]{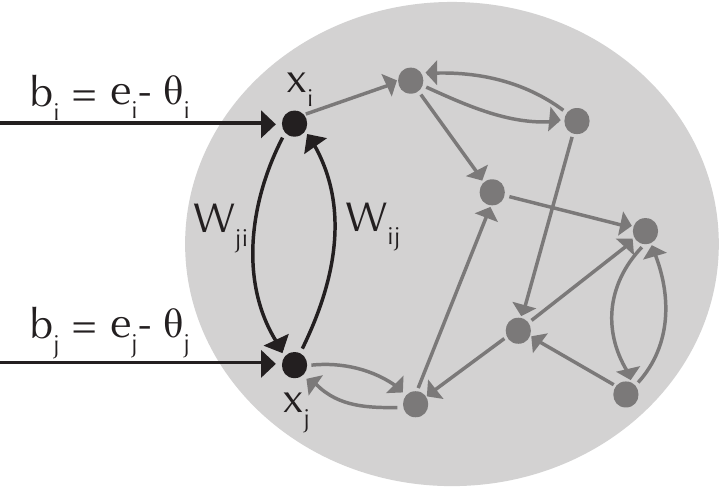}
\end{center}


\caption{\small A recurrent network receiving an input vector $b = (b_1,\ldots,b_n)$.  The firing rate of each neuron
is given by $x_i = x_i(t),$ and evolves in time according to equation~\eqref{eq:dynamics}.  The strengths of recurrent connection are captured by the matrix $W$.}
\end{wrapfigure}

Although sigmoids more closely match experimentally measured input-output curves for neurons, the above threshold nonlinearity
is often a good approximation when neurons are far from saturation \cite{DayanAbbott,Shriki:2003}.  Assuming that encoded patterns of 
a network are in fact realized by neurons that are firing far from saturation, it is reasonable to approximate 
them as stable fixed points of the threshold-linear dynamics.

These dynamics can be expressed more compactly as
\begin{equation} \label{eq:dynamics}
 \dot x = -Dx+\left[W x + b  \right]_+\!,
\end{equation}
where $D\od \operatorname{diag}(1/{\tau_1},...,1/{\tau_n})$ is the diagonal matrix of inverse time constants, $W$ is the synaptic connectivity matrix,
$b = (b_1,...,b_n) \in \RR^n$ with $b_i = e_i-\theta_i$, and $[ \cdot ]_+$ is applied elementwise.  Note that, unlike in the Hopfield model, the ``input'' to the network comes in the form of a constant (in time) external drive $b$, rather than an initial condition $x(0)$.  We think of equation~\eqref{eq:dynamics} as describing the fast-timescale dynamics
of the network, and $b$ as representing the effect of an external stimulus.  So long as $b$ changes slowly as compared to the fast network dynamics, the neural
responses to individual stimuli are captured by the steady states of~\eqref{eq:dynamics} in the presence of a constant input vector $b$.

 In the Encoding Rule (Section~\ref{sec:encoding-rule}), we assume homogeneous
 timescales and use $D = I$ (the identity matrix).  Nevertheless, all results apply equally well to heterogeneous timescales -- i.e., for any diagonal $D$ having strictly positive diagonal.  We also assume that $-D+W$ has strictly negative diagonal, so that the activity of an individual neuron always decays to zero in the absence of external or recurrent inputs.  Although we consider responses to the full range of inputs $b \in \RR^n$, the possible steady states of~\eqref{eq:dynamics} are sharply constrained by the connectivity matrix $W$.  Assuming fixed $D$, we refer to a particular threshold-linear network simply as $W$.

\begin{table}
\begin{center}
\begin{tabular}{l|l}
notation & meaning\\
\hline \hline
$[n]$ & $\{1,...,n\}, \;\; n = \text{ \# of neurons}$\\
$2^{[n]}$ & the set of all subsets of $[n]$\\
$\sigma \subset [n]$ & a subset of neurons; a binary pattern; a codeword; a permitted set\\
$|\sigma|$ & \# of elements (neurons) in the set $\sigma$\\
$\C \subset 2^{[n]}$ & a prescribed set of binary patterns -- e.g., a binary neural code\\
$G(\C)$ & the co-firing graph of $\C$; $(ij) \in G(\C) \Leftrightarrow \{i,j\} \subset \sigma$ for some $\sigma \in \C$\\
$X(G), X(G(\C))$ & the clique complex of the graph $G$ or $G(\C)$, respectively\\
$\supp(x)$ & $\{i \in [n] \mid x_i>0 \}$, for $x \in \RR^n_{\geq 0}$ a nonnegative vector\\
$W$ & an $n \times n$ connectivity matrix; the network with dynamics~\eqref{eq:dynamics}\\
$D$ & fixed diagonal matrix of inverse time constants\\
$\P(W)$ & $\{\sigma \subset [n] \mid \sigma \text{ a permitted set of } W\}$; set of all permitted sets of $W$\\
$A$ & an $n \times n$ matrix\\
$A_\sigma, \text{ for } \sigma \subset [n]$ & the principal submatrix of  $A$ with index set $\sigma$\\
$\stab(A)$ & $\{\sigma \subset [n] \mid A_\sigma \text{ is a stable matrix}\}$\\ 
$\CM(A)$ & Cayley-Menger determinant of $A$\\
$1 \in \RR^n$ & the column vector with all entries equal to $1$\\
$-11^T$ & $n \times n$ rank 1 matrix with all entries equal to $-1$
\end{tabular}
\caption{Frequently used notation.}
\end{center}
\end{table}

\subsection{Permitted sets of threshold-linear networks}\label{sec:permitted}

We consider threshold-linear networks whose computational function is to encode a set of binary patterns.  
These patterns are stored as ``permitted sets'' of the network.
The theory of permitted (and forbidden) sets was first introduced in \cite{Seung:2002, Seung:2003}, and many interesting results were obtained in the case of symmetric threshold-linear networks.  Here we review some definitions and results that apply more generally, though later we will also restrict ourselves to the symmetric case.

Informally, a {\it permitted set} of a recurrent network is a binary pattern $\sigma \subset [n]$ that can be {\it activated}. 
This means there exists an external input to the network such that the neural activity $x(t) = (x_1(t),\ldots,x_n(t))$ converges to a steady state $x^* \in \RR^n_{\geq 0}$ (i.e., $x^*$ is a stable fixed point with all firing rates nonnegative) having support $\sigma$:
$$\sigma = \supp(x^*) \od \{i \in [n] \mid x_i^* > 0\}.$$

\begin{definition}
A {\it permitted set} of the network~\eqref{eq:dynamics} is a subset of neurons $\sigma \subset [n]$ with the property that for at least one external input $b \in \RR^n$, there exists an asymptotically stable fixed point $x^{*} \in \RR^n_{\geq 0}$ such that $\sigma = \supp(x^*)$ \cite{Seung:2003}.  
For a given choice of network dynamics, the connectivity matrix $W$ determines the set  of all permitted sets of the network, denoted $\P(W)$.
\end{definition}

For threshold-linear networks of the form~\eqref{eq:dynamics}, 
it has been previously shown that permitted sets of $W$ correspond to stable principal submatrices of $-D+W$ \cite{Seung:2003,CDI12}.  Recall that a {\it stable} matrix is one whose eigenvalues all have strictly negative real part.  For any $n \times n$ matrix $A$, the notation $A_\sigma$ denotes the {\em principal submatrix} obtained by restricting to the index set $\sigma$; if $\sigma = \{s_1,...,s_k\}$, then $A_\sigma$ is the $k  \times k$ matrix with $(A_\sigma)_{ij} = A_{s_i s_j}$.  
We denote the set of all stable principal submatrices of $A$ as
$$\stab(A) \od \{\sigma \subset [n] \mid A_\sigma \text{ is a stable matrix}\}.$$
With this notation we can now restate our prior result, which generalizes an earlier result of \cite{Seung:2003} to non-symmetric networks.

\begin{theorem}[{\cite[Theorem 1.2]{CDI12}}\footnote{Note that in \cite[Theorem 1.2]{CDI12}, permitted sets were called ``stable sets.''  See also \cite{Seung:2003} for an earlier proof specific to the symmetric case.}]\label{thm:thm1}\label{thm:stable-set}
Let $W$ be a threshold-linear network on $n$ neurons (not necessarily symmetric) with dynamics given by equation~\eqref{eq:dynamics}, and let $\P(W)$ be the set of all permitted sets of $W$.  Then
$$\P(W) = \stab(-D+W).$$
\end{theorem}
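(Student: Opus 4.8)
The plan is to prove the two inclusions $\P(W)\subseteq\stab(-D+W)$ and $\stab(-D+W)\subseteq\P(W)$ separately, in each case translating the dynamical condition (existence of an asymptotically stable fixed point of support $\sigma$) into the algebraic condition that the principal submatrix $(-D+W)_\sigma$ be stable. Throughout I would fix an ordering of $[n]$ that lists $\sigma$ first, write $\tau=[n]\setminus\sigma$, and exploit the fact that $\phi(z)=[z]_+$ is affine away from $z=0$, so that~\eqref{eq:dynamics} is an affine vector field on each open region determined by the signs of the pre-activations $(Wx+b)_i$.

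First I would record the fixed-point equations. If $x^*\in\RR^n_{\ge0}$ has $\supp(x^*)=\sigma$, then for $i\in\sigma$ we have $x_i^*>0$, which forces $(Wx^*+b)_i>0$, so the threshold is inactive and the fixed-point condition on the support becomes the affine system $(-D+W)_\sigma\,x^*_\sigma+b_\sigma=0$; for $i\in\tau$ we have $x_i^*=0$, so $[\,\cdot\,]_+$ must vanish, i.e. $(Wx^*+b)_i\le0$. Hence fixed points of support $\sigma$ are in affine bijection with strictly positive solutions $x^*_\sigma$ of a linear system in the $\sigma$-coordinates.

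Next I would compute local stability through the Jacobian of $f(x)=-Dx+[Wx+b]_+$ at such a point. On the open region where $(Wx+b)_i>0$ for $i\in\sigma$ and $(Wx+b)_i<0$ for $i\in\tau$, differentiating gives the rows of $-D+W$ for $i\in\sigma$ and only the diagonal entry $-D_i$ for $i\in\tau$; with $\sigma$ listed first this is block upper-triangular,
\[ J=\mtx{(-D+W)_\sigma}{*}{0}{-D_\tau}, \]
so $\Spec(J)=\eig\big((-D+W)_\sigma\big)\cup\eig(-D_\tau)$. Since $D$ has strictly positive diagonal, the block $-D_\tau$ is automatically stable, whence $J$ is stable if and only if $(-D+W)_\sigma$ is. When $x^*$ lies in the interior of this region the dynamics near $x^*$ are exactly affine, so asymptotic stability is literally equivalent to stability of $J$, and both inclusions follow in this non-degenerate situation. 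For the constructive direction $\stab(-D+W)\subseteq\P(W)$ this interior case is all I need: given $(-D+W)_\sigma$ stable, hence invertible, I would pick $b_\sigma$ so that $x^*_\sigma=-(-D+W)_\sigma^{-1}b_\sigma>0$ componentwise, and then choose the entries $b_i$ for $i\in\tau$ sufficiently negative that $(Wx^*+b)_i<0$; this places $x^*$ in the interior of the region, makes $J$ its exact local matrix, and so exhibits $\sigma$ as a permitted set.

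I expect the main obstacle to be the non-smoothness of $f$ at the kinks $z=0$, which matters only for the forward inclusion $\P(W)\subseteq\stab(-D+W)$ when the given input is \emph{degenerate}, i.e. $(Wx^*+b)_i=0$ for some $i\in\tau$. Then $x^*$ sits on the boundary between two affine regions, a single Jacobian no longer governs its stability, and I cannot simply invoke the computation above. To handle this I would argue with the piecewise-linear flow directly: in the non-degenerate case the active face $\{x_i=0,\ i\in\tau\}$ is locally forward-invariant and carries the reduced linear flow $\dot u=(-D+W)_\sigma\,(u-x^*_\sigma)$, whose asymptotic stability is precisely stability of $(-D+W)_\sigma$; for a degenerate $b$ I would perturb the off-support entries $b_i$ ($i\in\tau$) to be strictly smaller, which changes neither $b_\sigma$ nor the fixed point $x^*_\sigma$ nor the block $J$, and verify that this perturbation reduces the degenerate configuration to the interior analysis. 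Making this reduction fully rigorous -- ruling out that a boundary fixed point could be asymptotically stable while $(-D+W)_\sigma$ is not -- is the technical heart of the argument.
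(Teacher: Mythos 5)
A point of reference first: the paper does not actually prove this theorem --- it is imported verbatim from \cite[Theorem 1.2]{CDI12} --- so the only in-paper material to compare against is Proposition~\ref{prop:prop2.1} in Appendix D, which records precisely the fixed-point and stability criteria you rederive. Your translation of the fixed-point condition into the affine system $(-D+W)_\sigma x^*_\sigma + b_\sigma = 0$ on the support together with $(Wx^*+b)_{\bar\sigma}\le 0$ off it, your block-triangular Jacobian computation, and your constructive proof of $\stab(-D+W)\subseteq \P(W)$ (choose $b_\sigma=(D-W)_\sigma y$ for some $y>0$ and take $b_{\bar\sigma}$ very negative) are all correct and follow the same route as the cited proposition.

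The genuine gap is exactly where you locate it, and the repair you propose does not close it. For $\P(W)\subseteq\stab(-D+W)$ you must handle a witness $b$ with $(Wx^*+b)_i=0$ for some $i\notin\sigma$. Your plan is to lower such $b_i$ to $b_i'<b_i$ and invoke the interior analysis for the perturbed system; but that analysis only says $x^*$ is asymptotically stable \emph{for the perturbed input} if and only if $(-D+W)_\sigma$ is stable, whereas your hypothesis is asymptotic stability for the \emph{original} input. You therefore need the unproven implication ``asymptotically stable for $b$ $\Rightarrow$ asymptotically stable for $b'$,'' and this is not automatic: the two vector fields differ exactly on the half-space $\{(Wx+b)_i>0\}$, whose boundary passes through $x^*$, so a trajectory of the original system starting arbitrarily close to $x^*$ can enter that region, activate neuron $i$, and feed back into the $\sigma$-coordinates through $W_{\sigma i}$, whose sign is unrestricted. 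A comparison or monotonicity argument would rescue this only for cooperative systems (nonnegative off-diagonal $W$), which is not assumed. It is telling that the paper's own Proposition~\ref{prop:prop2.1} asserts the equivalence between asymptotic stability of $x^*$ and stability of $(-D+W)_\sigma$ only under the strict inequality $b_{\bar\sigma}<-W_{\bar\sigma\sigma}x^*_\sigma$, i.e., it deliberately excludes the degenerate configuration. Ruling out an asymptotically stable corner fixed point with unstable $(-D+W)_\sigma$ requires a direct argument about the piecewise-affine flow at the kink (or some other device), which your sketch acknowledges but does not supply; as written, the forward inclusion is established only for non-degenerate witnesses $b$.
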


Theorem~\ref{thm:thm1} implies that a binary neural code $\C$ can be exactly encoded as the set of permitted sets in a threshold-linear network if and only if there exists a pair of $n \times n$ matrices $(D,W)$ such that 
$\C = \stab(-D+W)$.
From this observation, it is not difficult to see that not all codes are realizable by threshold-linear networks.
This follows from a simple lemma.

\begin{lemma}\label{lemma:2x2}
Let $A$ be an $n \times n$ real-valued matrix (not necessarily symmetric) with strictly negative diagonal and $n \geq 2$.  If $A$ is stable, then there exists a $2 \times 2$ principal submatrix of $A$ that is also stable.
\end{lemma}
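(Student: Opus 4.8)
The plan is to reduce the problem to a single inequality about $2\times 2$ principal minors and then extract that inequality from the characteristic polynomial of $A$. First I would recall that a real $2\times 2$ matrix is stable precisely when its trace is negative and its determinant is positive. For any pair $\{i,j\}$ the submatrix $A_{\{i,j\}}$ has trace $A_{ii}+A_{jj}<0$ automatically, since $A$ has strictly negative diagonal. Hence stability of $A_{\{i,j\}}$ is equivalent to the single condition $\det A_{\{i,j\}} = A_{ii}A_{jj}-A_{ij}A_{ji}>0$, and it suffices to produce one pair $i\neq j$ with a strictly positive $2\times 2$ principal minor.

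The key step is to show that the \emph{sum} of all $2\times 2$ principal minors of $A$ is strictly positive; positivity of the sum then forces at least one summand to be positive. I would identify this sum with a coefficient of the characteristic polynomial: writing $\det(xI-A)=x^n + c_1 x^{n-1}+\cdots+c_n$, the coefficient $c_2$ equals the second elementary symmetric function of the eigenvalues, which is exactly the sum $\sum_{i<j}\det A_{\{i,j\}}$ of $2\times 2$ principal minors of $A$.

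It then remains to argue $c_2>0$. Here I would invoke the standard fact that a stable (Hurwitz) real matrix has a characteristic polynomial all of whose coefficients are strictly positive. The cleanest route is to factor $\det(xI-A)$ over $\RR$ into linear factors $(x+a)$ with $a>0$ (coming from the negative real eigenvalues) and quadratic factors $x^2+px+q$ with $p,q>0$ (coming from conjugate pairs with negative real part); each factor has strictly positive coefficients, and a product of polynomials with strictly positive coefficients again has strictly positive coefficients. In particular $c_2>0$, so some $\det A_{\{i,j\}}>0$, and the corresponding $A_{\{i,j\}}$ is stable.

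I expect the main (and essentially only) obstacle to be the positivity of $c_2$: the sum of $2\times 2$ principal minors is a sum of products of eigenvalues that may be complex, so one cannot argue termwise and must instead pass through the real factorization above (or, equivalently, cite the necessary sign condition for Hurwitz polynomials). Everything else — the trace/determinant criterion for $2\times 2$ stability, and the passage from a positive sum to a single positive summand — is elementary and uses only the strictly negative diagonal hypothesis.
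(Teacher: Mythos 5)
Your proposal is correct and follows essentially the same route as the paper: both identify the sum of $2\times 2$ principal minors with the second elementary symmetric function of the eigenvalues, deduce its positivity from stability, and conclude via the negative-trace/positive-determinant criterion. In fact you are slightly more careful than the paper, which simply asserts $\sum_{i<j}\lambda_i\lambda_j>0$, whereas you justify it via the real factorization into Hurwitz linear and quadratic factors.
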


\begin{proof}
We use the formula for the characteristic polynomial in terms of sums of principal minors:
\begin{eqnarray*}
p_A(X) &=& (-1)^n X^n + (-1)^{n-1}m_1(A)X^{n-1} + (-1)^{n-2}m_2(A)X^{n-2} + .... + m_n(A),
\end{eqnarray*}
where $m_k(A)$ is the sum of the $k \times k$ principal minors of $A$.  Writing the characteristic polynomial in terms of symmetric polynomials in the eigenvalues $\lambda_1,\lambda_2,...,\lambda_n$, and assuming $A$ is stable, we have
$m_2(A) = \sum_{i<j} \lambda_i\lambda_j > 0.$
This implies that at least one $2 \times 2$ principal minor is positive.  Since the corresponding $2 \times 2$ principal submatrix has negative trace, it must be stable. 
\end{proof}

\noindent Combining Lemma~\ref{lemma:2x2} with Theorem~\ref{thm:stable-set} then gives:

\begin{corollary}\label{cor:2-restrict}
Let $\C \subset 2^{[n]}$.  If there exists a pattern $\sigma \in \C$ such that no order 2 subset of $\sigma$ belongs to $\C$, then it is not realizable as $\C = \P(W)$ for any threshold-linear network $W$.
\end{corollary}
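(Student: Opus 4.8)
The plan is to argue by contradiction, combining the two results just established. Suppose that $\C$ \emph{is} realizable, so that $\C = \P(W)$ for some threshold-linear network $W$ (with its fixed diagonal $D$). By Theorem~\ref{thm:stable-set} this means $\C = \stab(A)$, where $A \od -D+W$. The only structural fact I need about $A$ is that it has strictly negative diagonal, which holds by the standing assumption on $-D+W$ made in Section~\ref{sec:background}.

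Next I would invoke the hypothesis to produce the offending pattern $\sigma \in \C$ with $|\sigma| \geq 2$ and no two-element subset in $\C$. Since $\sigma \in \C = \stab(A)$, the principal submatrix $A_\sigma$ is stable. Crucially, $A_\sigma$ inherits a strictly negative diagonal from $A$ and has size $|\sigma| \geq 2$, so it meets exactly the hypotheses of Lemma~\ref{lemma:2x2}.

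Applying Lemma~\ref{lemma:2x2} to $A_\sigma$ then yields a stable $2 \times 2$ principal submatrix of $A_\sigma$. The key bookkeeping step is to observe that a principal submatrix of the principal submatrix $A_\sigma$ is itself a principal submatrix of $A$: concretely, this $2\times 2$ block equals $A_\tau$ for some two-element subset $\tau \subset \sigma$. Stability of $A_\tau$ gives $\tau \in \stab(A) = \C$, so $\tau$ is an order-$2$ subset of $\sigma$ lying in $\C$ -- directly contradicting the choice of $\sigma$. Hence no such $W$ can exist, proving the corollary.

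I expect no serious obstacle here: the corollary is essentially a repackaging of Lemma~\ref{lemma:2x2} through the dictionary provided by Theorem~\ref{thm:stable-set}. The only points requiring care are (i) noting that $A$ has strictly negative diagonal so that the lemma genuinely applies to $A_\sigma$, and (ii) correctly matching the stable $2\times 2$ block to a bona fide two-element subset of $\sigma$, i.e.\ the submatrix-of-a-submatrix identification. Implicit throughout is that $|\sigma|\geq 2$, which is forced by the phrase ``order $2$ subset of $\sigma$.''
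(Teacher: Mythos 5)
Your argument is correct and is exactly the intended one: the paper gives no separate proof, simply stating that the corollary follows by ``combining Lemma~\ref{lemma:2x2} with Theorem~\ref{thm:stable-set},'' and your write-up spells out precisely that combination (stability of $(-D+W)_\sigma$, the negative diagonal, and the submatrix-of-a-submatrix identification). Your parenthetical that $|\sigma|\geq 2$ must be read into the hypothesis is also the right way to interpret the statement.
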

  
Here we will not pay attention to the relationship between the input to the network $b$ and the corresponding permitted sets that may be activated, as it is beyond the scope of this paper.   In prior work, however, we were able to understand with significant detail the relationship between a given $b$ and the set of resulting fixed points of the dynamics \cite[Proposition 2.1]{CDI12}.  For completeness, we summarize these findings in Appendix D.
 
 \subsection*{Permitted sets of  symmetric threshold-linear networks}
In the remainder of this work, we will restrict attention to the case of symmetric networks.  With this assumption, we can immediately say more about the structure of permitted sets $\P(W)$.  Namely, if $W$ is symmetric then the permitted sets $\P(W)$ have the combinatorial structure of a simplicial complex.

\begin{definition} An (abstract) {\it simplicial complex} $\Delta \subset 2^{[n]}$ is a set of subsets of $[n] = \{1,\ldots,n\}$ such that the following two properties hold:
(1) for each $i \in [n], \{i\} \in \Delta$, and
(2) if $\sigma \in \Delta$ and $\tau \subset \sigma$, then $\tau \in \Delta$.
\end{definition}

\begin{lemma}\label{lemma:simplicial}
If $W$ is a symmetric threshold-linear network, then $\P(W)$ is a simplicial complex.
\end{lemma}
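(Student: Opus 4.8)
The plan is to reduce the statement to a standard fact about negative definite matrices. By Theorem~\ref{thm:thm1} we have $\P(W) = \stab(-D+W)$, so it suffices to show that $\stab(A)$ is a simplicial complex whenever $A \od -D+W$ is symmetric with strictly negative diagonal. The symmetry of $W$ (together with the diagonality of $D$) guarantees that $A$ is symmetric, so every principal submatrix $A_\sigma$ is symmetric and hence has real eigenvalues. For such a matrix the notion of stability simplifies: $A_\sigma$ is stable (all eigenvalues have negative real part) if and only if all its eigenvalues are strictly negative, i.e. if and only if $A_\sigma$ is negative definite. I would record this equivalence first, since it is the bridge between the dynamical definition of permitted sets and the quadratic-form argument to follow.

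With this in hand, I would verify the two defining properties of a simplicial complex. For property (1), that each singleton $\{i\}$ lies in $\stab(A)$: the $1 \times 1$ submatrix $A_{\{i\}} = (A_{ii})$ is just the $i$th diagonal entry, which is strictly negative by the standing assumption that $-D+W$ has strictly negative diagonal; a negative scalar is a stable matrix, so $\{i\} \in \P(W)$. For property (2), downward closure, I would show that if $\sigma \in \stab(A)$ and $\tau \subset \sigma$, then $\tau \in \stab(A)$. By the equivalence above, $A_\sigma$ is negative definite, and the claim is exactly that a principal submatrix of a negative definite matrix is again negative definite. This follows from the variational characterization of definiteness: given any nonzero $y \in \RR^\tau$, extend it to a vector $\widetilde{y} \in \RR^\sigma$ by setting all coordinates outside $\tau$ equal to zero; then $\widetilde{y} \neq 0$ and $y^T A_\tau y = \widetilde{y}^T A_\sigma \widetilde{y} < 0$, so $A_\tau$ is negative definite and hence stable.

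There is no serious obstacle here; the only point requiring care is the passage from the eigenvalue definition of ``stable'' to negative definiteness, which relies essentially on the symmetry of $W$ and would fail for general (non-symmetric) networks. Indeed, downward closure can break down in the non-symmetric case, which is precisely why the simplicial-complex structure is special to the symmetric setting. Once the equivalence between symmetric stability and negative definiteness is established, both properties are immediate, and restricting a negative definite quadratic form to a coordinate subspace handles the inheritance step cleanly.
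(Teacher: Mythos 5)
Your proof is correct, and it follows the same overall skeleton as the paper's: reduce via Theorem~\ref{thm:thm1} to showing that $\stab(-D+W)$ is a simplicial complex, get property (1) from the strictly negative diagonal of $-D+W$, and get downward closure from the fact that stability of a symmetric matrix is inherited by its principal submatrices. The one genuine difference is in how that inheritance step is justified. The paper invokes Cauchy's interlacing theorem (Corollary~\ref{cor:simplicialprop} in Appendix~A): the eigenvalues of $A_\tau$ are sandwiched between those of $A_\sigma$, so if the latter are all negative, so are the former. You instead pass through the equivalence, valid only for symmetric matrices, between stability and negative definiteness, and then restrict the quadratic form to the coordinate subspace indexed by $\tau$ via extension by zero. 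Your route is more elementary in the sense that it avoids quoting interlacing as a black box — the zero-extension argument is self-contained and one line — whereas the paper's choice is natural in context because Cauchy interlacing is already set up in Appendix~A and is reused repeatedly in the harder proofs (Lemma~\ref{lemma:signcondition}, Lemma~\ref{lemma:D0}, Theorem~\ref{thm:perturbation}), where the finer quantitative information about \emph{where} the eigenvalues of submatrices sit is actually needed. For this particular lemma the two arguments buy the same thing, and you correctly identify that both hinge on symmetry and that downward closure can fail for non-symmetric $W$.
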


\noindent In other words, if $W$ is symmetric then every subset of a permitted set is permitted, and every superset of a set that is not permitted is also not permitted.   This was first observed in \cite{Seung:2003}, using an earlier version of Theorem~\ref{thm:stable-set} for symmetric $W$.  
It follows from the fact that $\P(W) = \stab(-D+W)$, by Theorem~\ref{thm:stable-set}, and $\stab(A)$ is a simplicial complex for any symmetric $n \times n$ matrix $A$ having strictly negative diagonal (see Corollary~\ref{cor:stab-simplicial} in Appendix A).  The proof of this fact is a straightforward application of Cauchy's interlacing theorem (Appendix A), which applies only to symmetric matrices.

We are not currently aware of any simplicial complex $\Delta$ that is not realizable as $\Delta = \P(W)$ for a symmetric threshold-linear network, although we believe such examples are likely to exist.
 
\section{Results} \label{sec:results}
 
Theorem~\ref{thm:stable-set} allows one to find all permitted sets $\P(W)$ of a given network $W$.  Our primary
interest, however, is in the inverse problem.  

\begin{quote}
\noindent{\it \textbf{NE problem}: Given a set of binary patterns $\C \subset 2^{[n]}$, how can one construct a network $W$ such that 
$\C \subseteq \P(W),$ while minimizing the emergence of unwanted spurious states?}
\end{quote}

\noindent Note that {\it spurious states} are elements of $\P(W)$ that were 
{\it not} in the prescribed set of patterns to be stored; these are precisely the elements of $\P(W) \setminus \C$.
If $\C \subset \P(W)$, so that all patterns in $\C$ are stored as permitted sets of $W$ but $\P(W)$ may contain additional spurious states, 
then we say that $\C$ has been {\it encoded} by the network $W$.  If  $\C = \P(W)$, so that there are no spurious states, then we say that $\C$
has been {\it exactly encoded} by $W$.

We tackle the NE problem by analyzing a novel learning rule, called the Encoding Rule.  In what follows, the problem is broken into four motivating questions that address (1) the learning rule, (2) the resulting structure of permitted sets, (3) binary codes that are exactly encodable, and (4) the structure of spurious states when codes are not encoded exactly.  In Section~\ref{sec:natural} we use our results to uncover ``natural'' codes for symmetric threshold-linear networks, and illustrate this phenomenon in the case of hippocampal PF codes in Section~\ref{sec:PFcodes}.

\subsection{The Encoding Rule}\label{sec:encoding-rule}

\q{1}{Is there a biologically plausible learning rule that allows arbitrary neural codes to be stored as permitted sets in threshold-linear networks?}
\medskip

In this section we introduce a novel encoding rule that constructs a network $W$ from a prescribed set of binary patterns $\C$. The rule is similar to the classical Hopfield learning rule \cite{Hopfield:1982} in that it updates the weights of the connectivity matrix $W$ following sequential presentation of binary patterns, and strengthens excitatory synapses between co-active neurons in the patterns.   In particular, the rule is Hebbian and {\it local}; i.e., each synapse is updated only in response to the co-activation of the two adjacent neurons, and the updates can be implemented by presenting only one pattern at a time \cite{Hopfield:1982,DayanAbbott}. 
A key difference from the Hopfield rule is that the synapses are {\it binary}: once a synapse $(ij)$ has been turned ``on,'' the value of $W_{ij}$ stays the same irrespective of the remaining patterns.\footnote{The learning rule in \cite{Seung:2002} also had binary synapses in this sense.}  A new ingredient is that synapses are allowed to be {\it heterogeneous}: in other words, the actual weights of connections are varied among ``on'' synapses.  These weights are assigned according to a predetermined {\it synaptic strength matrix} $S$, which is considered fixed and reflects the underlying architecture of the network.   For example, if no physical connection exists between neurons $i$ and $j$, then $S_{ij} = 0$ indicating that no amount of co-firing can cause a direct excitatory connection betwen those neurons.  On the other hand, if two neurons have multiple points of physical contact, then $S_{ij}$ will be greater than if there are only a few anatomical contacts.
 There is, in fact, experimental evidence in hippocampus for synapses that appear binary and heterogeneous in this sense \cite{Petersen:1998:PNAS}, with individual synapses exhibiting potentiation in an all-or-nothing fashion,  but having different ``thresholds'' for potentiation and heterogeneous synaptic strengths.
 
 Here we describe the Encoding Rule in general, with minimal assumptions on $S$.  Later, in Sections~\ref{sec:exact} and~\ref{sec:spurious}, we will investigate the consequences of various choices of $S$ on the network's ability to encode different types of binary neural codes.
\bigskip
\pagebreak

\ssec{Encoding Rule} This is a prescription for constructing (i.e., ``learning'') a network $W$ from a set of binary patterns on $n$ neurons, $\C \subset 2^{[n]}$ (e.g., $\C$ is a binary neural code).  It consists of three steps: two initialization steps (Steps 1-2), followed by an update step (Step 3). 

\begin{quote}
\noindent {\bf Step 1}: Fix an $n \times n$ synaptic strength matrix $S$ and an $\varepsilon>0$.  We think of $S$ and $\varepsilon$ as {\it intrinsic} properties of the underlying network architecture, established prior to learning.  
Because $S$ contains synaptic strengths for symmetric excitatory connections, we 
 require that $S_{ij} = S_{ji} \geq 0$ and $S_{ii}=0$.  
\medskip

\noindent {\bf Step 2}:  The network $W$ is initialized to be symmetric with effective connection strengths $W_{ij} = W_{ji} < -1$ for $i \neq j$, and $W_{ii} = 0$.  
(Beyond this requirement, the initial values of $W$ do not affect the results.)  
\medskip

\noindent {\bf Step 3}: Following presentation\footnote{By ``presentation'' of each pattern we mean that patterns are considered one at a time in building the $W$ matrix, without regard to the dynamics of equation~\eqref{eq:dynamics} (c.f. \cite{Hopfield:1982, Seung:2002}).}
 of each pattern $\sigma \in \C$, we turn ``on'' all excitatory synapses between neurons that co-appear in $\sigma$.  This means we update the relevant entries of $W$ as follows:
$$W_{ij} := -1 + \varepsilon S_{ij} \;\; \text{if} \; \; i,j \in \sigma  \text{ and }  i \neq j.$$
Note that the order of presentation does not matter; once an excitatory connection has been turned ``on,'' the value of $W_{ij}$ stays the same irrespective of remaining patterns.  
\end{quote}
\medskip

To better understand what kinds of networks result from the Encoding Rule, observe that any initial $W$ in Step 2 can be written as $W_{ij} = -1 - \varepsilon R_{ij}$, where $R_{ij} = R_{ji}>0$ for $i \neq j$ and $R_{ii} = -1/\varepsilon$, so that $W_{ii}=0$.
Assuming a threshold-linear network with homogeneous timescales, i.e. fixing $D = I$, 
the final network $W$ obtained from $\C$ after Step 3 satisfies,

\begin{equation}\label{eq:encoding-rule}
 (-D + W)_{ij} = \left\{\begin{array}{ccc} -1 + \varepsilon S_{ij}, & \text{if} & (ij) \in G(\C) \\ -1, &\text{if} & i = j \\ -1 - \varepsilon R_{ij} & \text{if} & (ij) \notin G(\C),\end{array}\right.
\end{equation}

\noindent where $G(\C)$ is the graph on $n$ vertices (neurons) having an edge for each pair of neurons that co-appear in one or more patterns of $\C$.  We call this graph the {\it co-firing graph} of $\C$.
In essence, the  rule allows the network to ``learn'' $G(\C)$, selecting which excitatory synapses are turned ``on'' and assigned to their predetermined weights.  

Consequently, any matrix $-D+W$ obtained via the Encoding Rule thus has the form 
$$-11^T+\varepsilon A,$$
where $-11^T$ denotes the $n \times n$ matrix of all $-1$s and 
$A$ is a symmetric matrix with zero diagonal and off-diagonal entries
$A_{ij} = S_{ij} \geq 0$ or $A_{ij} = -R_{ij} < 0$, depending on $\C$.
It then follows from Theorem~\ref{thm:thm1} that the permitted sets of this network are
$$\P(W) = \stab(-11^T + \varepsilon A).$$
Furthermore, it turns out that $\P(W)$ for {\it any} symmetric $W$ is of this form, even if $-D + W$ is not of the form $ -11^T+\varepsilon A.$

\begin{lemma} \label{lemma3}
 If $W$ is a symmetric threshold-linear network (with $D$ not necessarily equal to the identity matrix $I$), then there exists a symmetric $n \times n$ matrix $A$ with zero diagonal such that
$\P(W) = \stab(-11^T+A).$
\end{lemma}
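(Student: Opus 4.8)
The plan is to reduce an arbitrary symmetric network $W$ (with general positive-diagonal $D$) to the canonical form $-11^T + A$ by absorbing the scaling from $D$ and shifting the diagonal appropriately, all while preserving the set $\stab(\cdot)$. By Theorem~\ref{thm:thm1} we have $\P(W) = \stab(-D+W)$, so it suffices to produce a symmetric matrix $A$ with zero diagonal such that $\stab(-D+W) = \stab(-11^T + A)$. Set $M \od -D+W$; this is symmetric with strictly negative diagonal, say $M_{ii} = -d_i$ with $d_i > 0$.

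First I would observe that $\stab$ is invariant under conjugation by a positive diagonal matrix: if $E = \diag(e_1,\ldots,e_n)$ with each $e_i > 0$, then for any $\sigma$ the submatrices $M_\sigma$ and $(E M E)_\sigma = E_\sigma M_\sigma E_\sigma$ are similar (via $E_\sigma$), hence have the same eigenvalues, so $\stab(M) = \stab(EME)$. Choosing $e_i = 1/\sqrt{d_i}$ rescales the diagonal of $M$ to all $-1$s, giving a symmetric matrix $\widetilde M$ with $\widetilde M_{ii} = -1$. Now write $\widetilde M = -11^T + A$ where $A \od \widetilde M + 11^T$. By construction $A$ is symmetric, and its diagonal entries are $A_{ii} = \widetilde M_{ii} + 1 = -1 + 1 = 0$, so $A$ has zero diagonal exactly as required. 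Then $\stab(-D+W) = \stab(\widetilde M) = \stab(-11^T + A)$, which is the claim.

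The only genuine step requiring care is the conjugation-invariance of $\stab$, and the main subtlety there is making sure the argument respects the restriction to each index set $\sigma$: the point is that $(EME)_\sigma = E_\sigma M_\sigma E_\sigma$ because $E$ is diagonal, so conjugation by $E$ commutes with passing to principal submatrices, and each $E_\sigma$ is itself invertible. Similarity preserves the spectrum, so stability of $M_\sigma$ is equivalent to stability of $(EME)_\sigma$ for every $\sigma$, yielding equality of the two $\stab$ sets rather than merely an inclusion. I expect this to be the main (though modest) obstacle; the remaining diagonal shift is purely algebraic and automatic.

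One remark worth including: unlike the matrices produced directly by the Encoding Rule, here there is no $\varepsilon$ factored out and no sign constraint on the off-diagonal entries of $A$ — the lemma asserts only the structural form $-11^T + A$ with $A$ symmetric and zero-diagonal, which is precisely what is needed so that the analysis of $\stab(-11^T + \varepsilon A)$ developed for the Encoding Rule applies to every symmetric network after rescaling $A$.
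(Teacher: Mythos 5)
Your overall strategy is exactly the paper's: normalize the diagonal of $-D+W$ to $-1$ by a positive diagonal congruence and then peel off $-11^T$, using the fact that $\stab(\cdot)$ is unchanged by such a transformation (this invariance is the paper's Lemma~\ref{lemma:DMD}, and your $\widetilde M = EME$ with $e_i = 1/\sqrt{d_i}$ is the same decomposition as the paper's $-D+W = \diag(x)(-11^T+A)\diag(x)$, read in the other direction). However, the justification you give for the key invariance step is wrong. The map $M \mapsto EME$ is a \emph{congruence}, not a similarity: $E_\sigma M_\sigma E_\sigma$ is similar to $M_\sigma E_\sigma^2$ (conjugate by $E_\sigma$), not to $M_\sigma$, and it does not have the same eigenvalues as $M_\sigma$ in general. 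For instance, with $M = -I$ and $E = \diag(1,2)$ one gets $EME = \diag(-1,-4)$, whose spectrum differs from that of $M$. So the sentence ``similar (via $E_\sigma$), hence have the same eigenvalues'' is false, and since you yourself identify this as the one step requiring care, it is a genuine gap as written.

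The conclusion you want is nevertheless true, and there are two standard repairs, both available in the paper's Appendix A. Either: (i) for a symmetric matrix, stability is equivalent to negative definiteness, and negative definiteness is preserved under congruence by an invertible matrix (Sylvester's law of inertia; concretely, $y^T E_\sigma M_\sigma E_\sigma y = x^T M_\sigma x$ with $x = E_\sigma y$ ranging over all nonzero vectors); or (ii) use the determinant-sign criterion of Lemma~\ref{lemma:signcondition}: $\sigma \in \stab(M)$ iff $(-1)^{|\tau|}\det(M_\tau) > 0$ for all $\tau \subseteq \sigma$, and $\det(E_\tau M_\tau E_\tau) = \det(E_\tau)^2\det(M_\tau)$ has the same sign as $\det(M_\tau)$. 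Route (ii) is precisely how the paper proves Lemma~\ref{lemma:DMD}. Your observation that the congruence commutes with passing to principal submatrices because $E$ is diagonal is correct and is indeed the point that makes either repair go through; everything else in your argument (the choice of $E$, the definition $A = \widetilde M + 11^T$, the symmetry and zero diagonal of $A$) is fine.
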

\noindent The proof is given in Appendix B (Lemma~\ref{lemma:13}). 

In addition to being symmetric, the Encoding Rule (for small enough $\varepsilon$) generates ``lateral inhibition'' networks where the matrix $-D+W$ has strictly negative entries.  In particular, this means that the matrix $D-W$ is {\it copositive} -- i.e., $x^T(D-W)x>0$ for all nonnegative $x$ except $x = 0$.  
It follows from \cite[Theorem 1]{Seung:2003} that for all input vectors $b \in \RR^n$ and for all initial conditions, the network dynamics~\eqref{eq:dynamics} converge to an equilibrium point.  This was proven by constructing a Lyapunov-like function, similar to the strategy in \cite{cohen-grossberg:1983}.\footnote{Note that threshold-linear networks do not directly fall into the very general class of networks discussed in \cite{cohen-grossberg:1983}.}

\subsection{Main Result}\label{sec:main-result}

\q{2}{What is the full set of permitted sets $\P(W)$  stored in a network constructed using the Encoding Rule?}
\medskip

Our main result, Theorem~\ref{thm:main-result}, characterizes the full set of permitted sets $\P(W)$ obtained using the Encoding Rule, revealing a detailed understanding of the structure of spurious states.   Recall from Lemma~\ref{lemma3} that the set of permitted sets of any symmetric network on $n$ neurons has the form
$\P(W) = \stab(-11^T + \varepsilon A),$
for $\varepsilon>0$ and $A$ a symmetric $n \times n$ matrix with zero diagonal.\footnote{In fact, any $\P(W)$ of this form can be obtained by perturbing around any rank 1 matrix -- not necessarily symmetric -- having strictly negative diagonal (Proposition~\ref{prop:-1perturbation}, in Appendix B).}  
Describing $\P(W)$ thus requires understanding the stability of the principal submatrices $(-11^T + \varepsilon A)_\sigma$ for each $\sigma \subset [n]$.  Note that these submatrices all have the same form:  $-11^T+\varepsilon A_\sigma$,
where $-11^T$ is the all $-1$s matrix of size $|\sigma| \times |\sigma|$.  
Proposition~\ref{prop:main-result2} (below) provides an unexpected connection between the stability of these matrices and classical {\it distance geometry}.\footnote{Distance geometry is a field of mathematics that was developed in the early 20th century, motivated by the following problem:  find necessary and sufficient conditions such that a finite set of distances can be realized from a configuration of points in Euclidean space.
The classical text on this subject is Blumenthal's 1953 book \cite{Blumenthal}.}   
We will first present Proposition~\ref{prop:main-result2}, and then show how it leads us to Theorem~\ref{thm:main-result}. 

For symmetric $2 \times 2$ matrices of the form 
$-11^T + \varepsilon A = \left[\begin{array}{cc} -1 & -1 + \varepsilon A_{12} \\  -1 + \varepsilon A_{12} & -1 \end{array}\right]$, with $\varepsilon > 0$, it is easy to identify the conditions for the matrix to be stable.  Namely, one needs the determinant to be positive, so $A_{12}>0$ and $\varepsilon < 2/A_{12}$.  For $3 \times 3$ matrices, the conditions are more interesting, and the connection to distance geometry emerges.

\begin{lemma} \label{lemma:3by3}
Consider the $3 \times 3$ matrix $-11^T+\varepsilon A$, for a fixed symmetric $A$ with zero diagonal:
$$\left[\begin{array}{ccc} 
-1 & -1+ \varepsilon A_{12} & -1 + \varepsilon A_{13} \\ 
 -1+ \varepsilon A_{12} & -1 & -1 + \varepsilon A_{23} \\
 -1+ \varepsilon A_{13} & -1+\varepsilon A_{23} & -1
 \end{array}\right].$$
There exists an $\varepsilon>0$ such that this matrix is stable if and only if 
$\sqrt{A_{12}}, \sqrt{A_{13}},$ and $\sqrt{A_{23}}$ are valid edge lengths for a nondegenerate triangle in $\RR^2$.
\end{lemma}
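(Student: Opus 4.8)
\emph{Proof idea.} The plan is to use the fact that a symmetric matrix is stable exactly when it is negative definite, and to convert the existential statement ``there is an $\varepsilon>0$ making $-11^T+\varepsilon A$ stable'' into a single $\varepsilon$-independent condition on $A$. Writing $M_\varepsilon \od -11^T+\varepsilon A$ and evaluating the associated quadratic form gives
\[
x^T M_\varepsilon x = -(1^T x)^2 + \varepsilon\, x^T A x .
\]
On the hyperplane $H \od \{x\in\RR^3 : 1^T x = 0\}$ the first term vanishes, so $x^T M_\varepsilon x = \varepsilon\, x^T A x$ there. This already yields one direction: if $M_\varepsilon\prec 0$ for some $\varepsilon>0$, then $x^T A x<0$ for all nonzero $x\in H$, i.e.\ the restriction $A|_H$ is negative definite. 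So the whole lemma reduces to characterizing when $A|_H\prec 0$.

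The first genuine step is the converse reduction: if $A|_H\prec 0$, then $M_\varepsilon\prec 0$ for all sufficiently small $\varepsilon>0$. I would decompose $\RR^3 = H\oplus\langle 1\rangle$ and write the form in block form with respect to this splitting. The $H$-block is $\varepsilon A|_H\prec 0$, the $1$-direction diagonal entry is $-3+O(\varepsilon)$, which is negative for small $\varepsilon$, and the coupling between the two blocks is $O(\varepsilon)$. A Schur-complement computation eliminating the $1$-direction then shows the effective form on $H$ is $\varepsilon A|_H + O(\varepsilon^2)$, which remains negative definite once $\varepsilon$ is small, because $A|_H$ is \emph{strictly} negative definite. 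Equivalently, one may argue by eigenvalue perturbation: at $\varepsilon=0$ the matrix $-11^T$ has eigenvalue $-3$ along $1$ and a doubly degenerate $0$ eigenvalue on $H$, and the degenerate eigenvalues split to first order as $\varepsilon\,\lambda_i(A|_H)<0$. I expect this perturbation/decoupling step to be the main (though routine) obstacle, since it is the only place where the ``there exists $\varepsilon$'' quantifier does real work; everything else is exact algebra.

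Having reduced to the condition $A|_H\prec 0$, the second step makes it explicit. Choosing the basis $v_1=(1,-1,0)$ and $v_2=(1,0,-1)$ of $H$ and computing the $2\times 2$ matrix $G$ with $G_{ij}=v_i^T A v_j$ gives
\[
G = \left[\begin{array}{cc} -2A_{12} & A_{23}-A_{12}-A_{13} \\ A_{23}-A_{12}-A_{13} & -2A_{13}\end{array}\right].
\]
Since congruence preserves signature, $A|_H\prec 0$ iff $G\prec 0$, and by Sylvester's criterion this holds iff $G_{11}<0$ and $\det G>0$, i.e.\ $A_{12}>0$ and
\[
\det G = 2A_{12}A_{13}+2A_{13}A_{23}+2A_{23}A_{12}-A_{12}^2-A_{13}^2-A_{23}^2 \;>\;0 .
\]

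Finally, I would recognize the last expression as Heron's formula: $\det G$ equals $16$ times the squared area of a triangle with squared side lengths $A_{12},A_{13},A_{23}$ (equivalently, $-\det G$ is, up to sign, the Cayley--Menger determinant of the three putative vertices). Thus $\det G>0$ says precisely that $\sqrt{A_{12}},\sqrt{A_{13}},\sqrt{A_{23}}$ satisfy the strict triangle inequalities and bound a nondegenerate triangle in $\RR^2$. The conditions $G_{11}<0$ and $\det G>0$ force $G_{22}=-2A_{13}<0$ and then, by a short sign check, $A_{23}>0$ as well, so all three square roots are real and positive. Combining this with the reduction step gives the claimed equivalence.
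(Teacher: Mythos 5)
Your argument is correct, and it takes a genuinely different route from the paper. The paper disposes of this lemma in one line --- ``straightforward computation, using Heron's formula and the characteristic polynomial of the matrix'' --- i.e.\ it applies the sign-alternation criterion for stability of a symmetric matrix to the explicit $3\times 3$ principal minors and recognizes Heron's formula in the determinant; the general $n\times n$ statement (Proposition~\ref{prop:main-result2}) is instead derived from the determinant identity $\det(-11^T+tA)=t^n\det A+t^{n-1}\CM(A)$ together with Menger's and Schoenberg's classical characterizations of square distance matrices. You instead work with the quadratic form: the identity $x^T(-11^T+\varepsilon A)x=-(1^Tx)^2+\varepsilon\,x^TAx$ reduces the existential-in-$\varepsilon$ statement to negative definiteness of $A$ restricted to the hyperplane $1^\perp$, which you then make explicit in the basis $\{(1,-1,0),(1,0,-1)\}$ and identify with Heron's formula. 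This is in effect a self-contained rederivation, in the $3\times 3$ case, of Schoenberg's criterion (nondegenerate square distance matrix $\Leftrightarrow$ negative definite on $1^\perp$), which the paper imports as a black box for the general case. What your approach buys: it is more conceptual, it isolates exactly where the quantifier over $\varepsilon$ does real work (the degenerate-perturbation/Schur step), and it generalizes verbatim to $n\times n$ to give condition (a) of Proposition~\ref{prop:main-result2}. What it does not give is the sharp threshold $\varepsilon<|\CM(A)/\det A|$ of condition (b), which the determinant-identity route produces and which the authors explicitly flag as the nontrivial part; but the lemma as stated only asks for existence of some $\varepsilon>0$, so nothing is missing. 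Two small points worth writing out in a final version: the verification that two of $-a+b+c$, $a-b+c$, $a+b-c$ cannot be simultaneously negative (so that, given $a,b,c>0$, positivity of the Heron expression really is equivalent to the strict triangle inequalities), and the sign check that $\det G>0$ forces $A_{23}>0$ (if $A_{23}\le 0$ then $\det G\le -(A_{12}-A_{13})^2-A_{23}^2\le 0$).
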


\noindent In other words, the numbers $\sqrt{A_{ij}}$ must satisfy the triangle inequalities $\sqrt{A_{ij}} < \sqrt{A_{ik}} + \sqrt{A_{jk}}$ for distinct $i,j,k$.
This can be proven by straightforward computation, using Heron's formula and the characteristic polynomial of the matrix.  The upper bound on $\varepsilon$, however, is not so easy to identify.

Remarkably, the above observations completely generalize to $n \times n$ matrices of the form $-11^T + \varepsilon A$, and the precise limits on $\varepsilon$ can also be computed for general $n$.  This is the content of Proposition~\ref{prop:main-result2}, below.  To state it, however, we first need a few notions from distance geometry.

\begin{definition}
An $n \times n$ matrix $A$ is a (Euclidean) {\em square distance matrix} if there exists a configuration of points $p_1,...,p_n \in \RR^{n-1}$ (not necessarily distinct) such that $A_{ij} = \norm{p_i - p_j}^2$.   
$A$ is a {\em nondegenerate} square distance matrix if the corresponding points are affinely independent; i.e., if the convex hull of $p_1,...,p_n$ is a simplex with nonzero
volume in $\RR^{n-1}$. 
\end{definition}

\noindent Clearly, all square distance matrices are symmetric and have zero diagonal.
Furthermore, a $2 \times 2$ matrix $A$ is a {\it nondegenerate} square distance matrix if and only if the off-diagonal entry satisfies the additional condition $A_{12}>0$.  For a $3 \times 3$ matrix $A$, the necessary and sufficient condition to be a nondegenerate square distance matrix is that the entries $\sqrt{A_{12}}, \sqrt{A_{13}},$ and $\sqrt{A_{23}}$ are valid edge lengths for a nondegenerate triangle in $\RR^2$ (this was precisely the condition in Lemma~\ref{lemma:3by3}).  For larger matrices, however, the conditions are less intuitive.  A key object for determining whether or not and $n \times n$ matrix $A$ is a nondegenerate square distance matrix is the
{\it Cayley-Menger determinant}: 
$$\CM(A)\od \det \left[\begin{array}{cc} 0 & 1^T \\ 1 &A\end{array}\right],$$ 
where $1 \in \RR^n$ is the column vector of all ones. 
If $A$ is a square distance matrix, then $\CM(A)$ is proportional to the square volume of the simplex obtained as the convex hull of the points $\{p_i\}$  (see Lemma~\ref{lemma:cayley-menger} in Appendix A).  In particular, $\CM(A) \neq 0$ (and hence $|\CM(A)|>0$) if $A$ is a {\it nondegenerate} square distance matrix, while $\CM(A) = 0$ for any other (degenerate) square distance matrix.  


\begin{proposition} \label{prop:main-result2}
Let $A$ be a symmetric $n \times n$ matrix with zero diagonal, and $\varepsilon > 0$.  Then the matrix
$$-11^T+\varepsilon A$$  is stable if and only if the following two conditions hold: 
\begin{enumerate}
\item[(a)] $A$ is a nondegenerate square distance matrix, and
\item[(b)] $0<\varepsilon < |\CM(A)/\det(A)|.$
\end{enumerate}
\end{proposition}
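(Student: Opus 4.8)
The plan is to exploit the symmetry of $M \od -11^T + \varepsilon A$ to replace stability by negative definiteness, and then analyze the quadratic form $x^T M x = -(1^T x)^2 + \varepsilon\, x^T A x$ according to whether $x$ lies in the line $\RR 1$ or in its orthogonal complement $1^\perp$. Since $M$ is symmetric, it is stable if and only if all eigenvalues are negative, i.e. $M \prec 0$. Restricting the form to $1^\perp$, where $1^T x = 0$, gives $x^T M x = \varepsilon\, x^T A x$; hence $M \prec 0$ forces $x^T A x < 0$ for every nonzero $x \perp 1$, a condition independent of $\varepsilon$. By Schoenberg's criterion from classical distance geometry (cf.~\cite{Blumenthal} and Lemma~\ref{lemma:cayley-menger} in Appendix A), this is exactly the statement that $A$ is a nondegenerate square distance matrix. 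This simultaneously shows that condition (a) is necessary and that (a) is equivalent to $A|_{1^\perp}$ being negative definite, which is the input I will feed into the analysis of $\varepsilon$.

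With (a) in hand, $A$ is invertible, so the matrix determinant lemma together with a Schur-complement expansion of the bordered Cayley--Menger matrix yields the factorization $\det M = \varepsilon^{n-1}\bigl(\varepsilon \det A + \CM(A)\bigr)$, using the identity $\det(A)\,1^T A^{-1} 1 = -\CM(A)$. The unique positive root of this polynomial in $\varepsilon$ is $\varepsilon^* \od -\CM(A)/\det(A)$. I then record the two sign facts for a nondegenerate square distance matrix $A$ on $n$ points in $\RR^{n-1}$, namely $\sgn(\det A) = (-1)^{n-1}$ and $\sgn(\CM(A)) = (-1)^n$, both following from the Cayley--Menger volume formula of Lemma~\ref{lemma:cayley-menger}. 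These signs are opposite, so $\varepsilon^* = |\CM(A)/\det(A)| > 0$, matching the bound in (b).

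The core step is to show that the set $I \od \{\varepsilon > 0 : M(\varepsilon) \prec 0\}$ is exactly $(0, \varepsilon^*)$. Because $M(\varepsilon)$ is affine in $\varepsilon$ and the negative definite matrices form a convex cone, $I$ is convex, hence an interval. Using (a), a compactness estimate on the unit sphere (equivalently, first-order perturbation off the spectrum $\{-n, 0, \ldots, 0\}$ of $-11^T$) shows $M(\varepsilon) \prec 0$ for all sufficiently small $\varepsilon > 0$, so $I$ contains some $(0,\delta)$; since $M(0) = -11^T$ is only negative semidefinite, $0 \notin I$, and therefore $I = (0, \varepsilon^{**})$ for a right endpoint $\varepsilon^{**}$. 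By continuity of eigenvalues, $M(\varepsilon^{**})$ is negative semidefinite with a zero eigenvalue, so $\det M(\varepsilon^{**}) = 0$; as $\varepsilon^*$ is the only positive root of $\det M$, we get $\varepsilon^{**} = \varepsilon^*$. This gives $M \prec 0 \iff 0 < \varepsilon < \varepsilon^* = |\CM(A)/\det(A)|$, which together with the necessity of (a) proves the proposition.

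I expect the main obstacle to be the exact identification of the upper endpoint and the proof that $M$ cannot lose and then regain negative definiteness as $\varepsilon$ grows. The convexity of the negative definite cone combined with the affine dependence of $M(\varepsilon)$ on $\varepsilon$ is precisely what makes $I$ an interval and lets a single determinant root pin down its endpoint, avoiding the need to track all $n$ eigenvalues individually. A secondary technical point is the sign bookkeeping for $\det A$ and $\CM(A)$, which must be handled carefully to confirm both that $\varepsilon^* > 0$ and that it equals the absolute value appearing in (b); this is where the distance-geometry content is genuinely used.
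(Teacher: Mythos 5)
Your proof is correct, and it takes a genuinely different route from the paper's. The paper obtains Proposition~\ref{prop:main-result2} as the $v=1$ case of Theorem~\ref{thm:perturbation}, whose proof works entirely at the level of principal submatrices: stability is tested through the signs of all principal minors (Lemma~\ref{lemma:signcondition}), the square-distance condition enters through Menger's criterion on the Cayley--Menger determinants of every $A_\sigma$ (Lemma~\ref{lemma:realizability}), and the necessity of condition (a) is a contradiction argument (Lemma~\ref{lemma:D0}) running Cauchy interlacing on the bordered Cayley--Menger matrix. You instead split $\RR^n = \RR 1 \oplus 1^\perp$ and read off the necessity of (a) directly from the restriction of the quadratic form to $1^\perp$, invoking Schoenberg's criterion (negative definiteness on $1^\perp$) in place of Menger's; you then pin down the $\varepsilon$-interval by convexity of the negative-definite cone together with the unique positive root of $\det M(\varepsilon) = \varepsilon^{n}\det A + \varepsilon^{n-1}\CM(A)$ --- the same factorization as Lemma~\ref{lemma:gendet}, which you re-derive via the matrix determinant lemma and a Schur complement (legitimate here since (a) gives $\det A \neq 0$). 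Your route is shorter, and it absorbs the non-Hebbian case automatically: negative definiteness on $1^\perp$ with zero diagonal forces $A_{ij}>0$ (test $x = e_i - e_j$), whereas the paper must dispose of negative off-diagonal entries separately in its proof of the proposition. What the paper's submatrix-based route buys is the extra generality of Theorem~\ref{thm:perturbation} (perturbations $-vv^T + \varepsilon A^v$) and, through Lemma~\ref{lemma:inequality}, the nested structure of $\geom_\varepsilon(A)$ used downstream. Two small repairs: the sign fact $\sgn(\det A) = (-1)^{n-1}$ is Schoenberg's eigenvalue count (Proposition~\ref{prop:sd-det}), not a consequence of the volume formula of Lemma~\ref{lemma:cayley-menger} (though it also follows from your own setup, since $A$ is negative definite on the $(n-1)$-dimensional space $1^\perp$, invertible, and traceless, hence has exactly one positive eigenvalue); and you should say explicitly why the interval $I$ is bounded --- namely, $M(\varepsilon^*)$ is singular, hence not negative definite, so the interval $I$, which contains points near $0$, must lie entirely to the left of $\varepsilon^*$.
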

\noindent  Proposition~\ref{prop:main-result2} is essentially a special case of Theorem~\ref{thm:perturbation} -- our core technical result -- whose statement and proof are given in Section~\ref{sec:thm4}.  The proof of Proposition~\ref{prop:main-result2} is then given in Section~\ref{sec:thm2-proof}. 
To our knowledge, Theorem~\ref{thm:perturbation} is novel and connections to distance geometry have not previously been used in the study of neural networks or, more generally, the stability of fixed points in systems of ODEs.

The ratio $|\CM(A)/\det(A)|$ has a simple geometric interpretation in cases where condition (a) of Proposition~\ref{prop:main-result2} holds.  Namely, if $A$ is an $n \times n$ nondegenerate square distance matrix (with $n > 1$), then $|\CM(A)/\det(A)| = \dfrac{1}{2\rho^2},$
where $\rho$ is the radius of the unique sphere circumscribed on any set of points in Euclidean space that can be used to generate $A$ 
(see Remark 1 in Appendix C).
Moreover, since $|\CM(A)|>0$ whenever $A$ is a nondegenerate square distance matrix, there always exists an $\varepsilon$ small enough to satisfy the second condition, provided the first condition holds.  
Combining Proposition~\ref{prop:main-result2} together with Cauchy's interlacing theorem yields:

\begin{lemma}\label{lemma:inequality}
If $A$ is an $n \times n$ nondegenerate square distance matrix, then 
$$ 0 < \left| \dfrac{\CM(A_\sigma)}{\det(A_\sigma)} \right| \leq \left| \dfrac{\CM(A_\tau)}{\det(A_\tau)} \right| \;\; \text{if} \;\; \tau \subseteq \sigma \subseteq [n].$$
\end{lemma}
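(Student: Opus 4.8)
The plan is to read the ratio $|\CM(A_\sigma)/\det(A_\sigma)|$ as the sharp upper bound on $\varepsilon$ supplied by Proposition~\ref{prop:main-result2}, and then to exploit the fact that $\stab(-11^T+\varepsilon A)$ is downward closed as a consequence of Cauchy's interlacing theorem. First I would dispatch the positivity claim. Since $A$ is a nondegenerate square distance matrix, it is realized by affinely independent points $p_1,\dots,p_n$, and every sub-collection $\{p_i : i \in \sigma\}$ is again affinely independent; hence each principal submatrix $A_\sigma$ is itself a nondegenerate square distance matrix. In particular $\CM(A_\sigma)\neq 0$, and $\det(A_\sigma)\neq 0$ by the standard invertibility of the distance matrix of a nondegenerate simplex (equivalently visible from the circumradius identity $|\CM(A_\sigma)/\det(A_\sigma)|=1/(2\rho_\sigma^2)$). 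This gives $0<|\CM(A_\sigma)/\det(A_\sigma)|$ for every $\sigma$.

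For the monotonicity, fix $\tau\subseteq\sigma\subseteq[n]$ and abbreviate $\varepsilon^*(B):=|\CM(B)/\det(B)|$ for a nondegenerate square distance matrix $B$. By Proposition~\ref{prop:main-result2}, for each such $B$ the matrix $-11^T+\varepsilon B$ is stable precisely when $0<\varepsilon<\varepsilon^*(B)$. The key bookkeeping observation is that the principal submatrix of $-11^T+\varepsilon A$ on index set $\sigma$ is exactly $-11^T+\varepsilon A_\sigma$ (the all $-1$s block restricts to the smaller all $-1$s block), and likewise for $\tau$. I would then invoke Cauchy's interlacing theorem in the form already used in the paper: for each fixed $\varepsilon>0$, $\stab(-11^T+\varepsilon A)$ is a simplicial complex (Corollary~\ref{cor:stab-simplicial}), hence downward closed. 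Concretely, if $-11^T+\varepsilon A_\sigma$ is stable, then its principal submatrix $-11^T+\varepsilon A_\tau$ is stable as well, since the largest eigenvalue of a principal submatrix of a symmetric matrix never exceeds the largest eigenvalue of the whole matrix.

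Putting these together, for every $\varepsilon>0$ the implication $\varepsilon<\varepsilon^*(A_\sigma)\Rightarrow \varepsilon<\varepsilon^*(A_\tau)$ holds. If we had $\varepsilon^*(A_\tau)<\varepsilon^*(A_\sigma)$, choosing $\varepsilon$ strictly between the two would render $-11^T+\varepsilon A_\sigma$ stable while $-11^T+\varepsilon A_\tau$ is unstable, contradicting downward closure. Therefore $\varepsilon^*(A_\sigma)\le\varepsilon^*(A_\tau)$, which is exactly the asserted inequality; combined with the positivity established above, this completes the proof.

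I expect the only real subtlety to be bookkeeping rather than a genuine obstacle: one must verify that restricting $-11^T+\varepsilon A$ to $\sigma$ reproduces $-11^T+\varepsilon A_\sigma$ with the correctly sized all $-1$s matrix, and that Proposition~\ref{prop:main-result2} legitimately applies to each $A_\sigma$, which is precisely why the ``nondegeneracy is inherited by principal submatrices'' step is placed first. A self-contained alternative that avoids the threshold translation is to prove the geometric inequality $\rho_\tau\le\rho_\sigma$ directly: the orthogonal projection of the circumcenter of $\sigma$ onto the affine span of $\tau$ is the circumcenter of $\tau$, so Pythagoras gives $\rho_\tau^2=\rho_\sigma^2-\delta^2\le\rho_\sigma^2$ for a suitable displacement $\delta$, and the identity $|\CM(A_\sigma)/\det(A_\sigma)|=1/(2\rho_\sigma^2)$ then yields the same conclusion.
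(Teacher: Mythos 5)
Your proof is correct and follows essentially the same route as the paper, which offers no separate argument beyond the remark that the lemma results from ``combining Proposition~\ref{prop:main-result2} together with Cauchy's interlacing theorem''; your write-up is precisely that combination (thresholds $\varepsilon^*(A_\sigma)$ from Proposition~\ref{prop:main-result2}, downward closure of $\stab(-11^T+\varepsilon A)$ via Corollary~\ref{cor:simplicialprop}, and the observation that restriction of $-11^T+\varepsilon A$ to $\sigma$ gives $-11^T+\varepsilon A_\sigma$). The only caveat, shared with the paper's own convention, is the singleton case $A_\tau=[0]$ where $\det(A_\tau)=0$ and the ratio must be read as $+\infty$.
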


Given any symmetric $n\times n$ matrix $A$ with zero diagonal, and $\varepsilon > 0$, it is now natural to define the following simplicial complexes:
\begin{eqnarray*}
\geom_{\varepsilon}(A) &\od& \{ \sigma \subseteq [n] \mid A_\sigma \text{ a nondeg. sq. dist. matrix}
\text{ and } \left|\dfrac{\CM(A_\sigma)}{\det(A_\sigma)}\right| > \varepsilon \}, \; \text{and}\;\\
\geom(A) &\od& \lim_{\varepsilon \rightarrow 0} \geom_{\varepsilon}(A)
= \{ \sigma \subseteq [n] \mid A_\sigma \text{ a nondeg. sq. dist. matrix}\}.
\end{eqnarray*}
Lemma~\ref{lemma:inequality} implies
that $\geom_{\varepsilon}(A)$ and $\geom(A)$ are simplicial complexes.
Note that if $\sigma = \{i\}$, we have $A_\sigma = [0]$.  In this case, $\{i\} \in \geom(A)$ and $\{i\} \in \geom_\varepsilon(A)$ for all $\varepsilon>0$, by our convention.  
Also,
$\geom_\varepsilon(A) = \geom(A)$ if and only if  $0 < \varepsilon < \delta(A)$, where 
$$
\delta(A) \od \min \left\{\left| \dfrac{\CM(A_\sigma)}{\det(A_\sigma)} \right|\right\}_{\sigma \in \geom(A)}.
$$
If $A$ is a nondegenerate square distance matrix, then $\delta(A) = |\CM(A)/\det(A)|$.
 
To state our main result, Theorem~\ref{thm:main-result}, we also need a few standard notions from graph theory.
A {\em clique} in a graph $G$ is a subset of vertices that is all-to-all connected.\footnote{For recent work encoding cliques in Hopfield networks, see \cite{hillar:2012}.}  
The {\em clique complex} of $G$, denoted $X(G)$, is the set of all cliques in $G$;
this is a simplicial complex for any $G$.  Here we are primarily interested in the graph $G(\C)$, the co-firing graph of a set of binary patterns $\C \subset 2^{[n]}$.

\begin{theorem}\label{thm:main-result}
Let $S$ be an $n \times n$ synaptic strength matrix satisfying $S_{ij}=S_{ji} \geq 0$ and $S_{ii}=0$ for all $i,j \in [n]$, and fix $\varepsilon > 0$.
Given a set set of prescribed patterns $\C \subset 2^{[n]}$, let $W$ be the threshold-linear network (equation~\eqref{eq:encoding-rule})
 obtained from $\C$  using $S$ and $\varepsilon$ in the Encoding Rule. 
Then, $$\P(W) = \geom_\varepsilon(S) \cap X(G(\C)).$$
If we further assume that $\varepsilon < \delta(S)$, then $\P(W) = \geom(S) \cap X(G(\C)).$

In other words, a binary pattern $\sigma \subset [n]$ is a permitted set of $W$
if and only if $S_\sigma$ is a nondegenerate square distance matrix, $\varepsilon < |\CM(S_\sigma)/\det(S_\sigma)|$, and $\sigma$ is 
a clique in the graph $G(\C)$.
\end{theorem}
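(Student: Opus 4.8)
The plan is to assemble the statement directly from Theorem~\ref{thm:thm1} and Proposition~\ref{prop:main-result2}, using the explicit form of the matrix produced by the Encoding Rule. First I would invoke Theorem~\ref{thm:thm1} to convert the question into one about stable principal submatrices: $\sigma \in \P(W)$ if and only if $(-D+W)_\sigma$ is stable. By equation~\eqref{eq:encoding-rule}, the full matrix is $-D+W = -11^T + \varepsilon A$, where $A$ is symmetric with zero diagonal and off-diagonal entries $A_{ij} = S_{ij} \geq 0$ when $(ij) \in G(\C)$ and $A_{ij} = -R_{ij} < 0$ when $(ij) \notin G(\C)$, the latter being strictly negative because Step 2 forces $R_{ij} > 0$. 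Restricting to $\sigma$, the relevant submatrix is $(-11^T + \varepsilon A)_\sigma = -11^T + \varepsilon A_\sigma$, again of this form but of size $|\sigma|$.

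Next I would apply Proposition~\ref{prop:main-result2} to $A_\sigma$: the matrix $-11^T + \varepsilon A_\sigma$ is stable if and only if $A_\sigma$ is a nondegenerate square distance matrix and $0 < \varepsilon < |\CM(A_\sigma)/\det(A_\sigma)|$. By the definition of $\geom_\varepsilon$, these two conditions together say exactly that $\sigma \in \geom_\varepsilon(A)$, so at this point $\P(W) = \geom_\varepsilon(A)$. The remaining work is to show $\geom_\varepsilon(A) = \geom_\varepsilon(S) \cap X(G(\C))$, which is where the combinatorial structure of $\C$ enters.

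The crux is a clean case split on whether $\sigma$ is a clique of $G(\C)$. If $\sigma$ is a clique, then every pair $i \neq j$ in $\sigma$ satisfies $(ij) \in G(\C)$, so $A_{ij} = S_{ij}$ for all such pairs and hence $A_\sigma = S_\sigma$; the stability condition then reads precisely $\sigma \in \geom_\varepsilon(S)$. If $\sigma$ is not a clique, there is a pair $i,j \in \sigma$ with $(ij) \notin G(\C)$, whence $A_{ij} = -R_{ij} < 0$; since any square distance matrix has nonnegative off-diagonal entries, $A_\sigma$ fails condition (a), the submatrix is unstable, and $\sigma \notin \P(W)$. (Singletons and the empty set are covered by the stated conventions: $[-1]$ is stable, and each $\{i\}$ lies in both $X(G(\C))$ and $\geom_\varepsilon(S)$.) Combining the two cases gives $\P(W) = \geom_\varepsilon(S) \cap X(G(\C))$, and the final ``in other words'' sentence is just the unpacked definition of $\geom_\varepsilon(S)$. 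For the refined statement I would invoke the equivalence recorded just before the theorem, namely $\geom_\varepsilon(S) = \geom(S)$ exactly when $0 < \varepsilon < \delta(S)$, and substitute to obtain $\P(W) = \geom(S) \cap X(G(\C))$.

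Because Proposition~\ref{prop:main-result2} (and behind it Theorem~\ref{thm:perturbation}) already carries the entire analytic weight, the theorem itself is largely an assembly and I expect no serious obstacle. The one genuinely load-bearing observation is that a non-edge forces a \emph{strictly} negative off-diagonal entry in $A_\sigma$, which breaks the square-distance property and thereby produces the clique constraint; this is precisely the step that decouples the permitted-set condition into a geometric factor depending only on $S$ and a combinatorial factor depending only on $G(\C)$.
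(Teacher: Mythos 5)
Your proposal is correct and follows essentially the same route as the paper: reduce to stability of principal submatrices via Theorem~\ref{thm:thm1}, apply Proposition~\ref{prop:main-result2} submatrix-by-submatrix to get $\P(W)=\geom_\varepsilon(A)$, and then use the clique/non-clique dichotomy (a non-edge forces a strictly negative entry, killing the square-distance property) to split this into $\geom_\varepsilon(S)\cap X(G(\C))$. The paper merely packages your last two steps as Corollaries~\ref{cor:cor4} and~\ref{cor:cor5}; the content is identical.
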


\noindent The proof is given in Section~\ref{sec:thm2-proof}.   Theorem~\ref{thm:main-result} answers Question 2, and makes explicit how $\P(W)$ depends on 
$S$, $\varepsilon$, and $\C$.  One way of interpreting this result is to observe that a binary pattern $\sigma \in \C$ is successfully stored as a permitted set of $W$
if and only if the excitatory connections between the neurons in $\sigma$, given by $\tilde{S}_\sigma = \varepsilon S_{\sigma}$, are {\it geometrically balanced} -- i.e.,
\begin{itemize}
\item $\tilde{S}_\sigma$ is a nondegenerate square distance matrix, and
\item $|\det(\tilde{S}_\sigma)| < |\CM(\tilde{S}_\sigma)|$.
\end{itemize}
The first condition ensures a certain balance among the relative strengths of excitatory connections in the clique $\sigma$, while the second condition bounds
the overall excitation strengths relative to inhibition (which has been normalized to $-1$ in the Encoding Rule).

We next turn to an example that illustrates
how this theorem can be used to solve the NE problem explicitly for a small binary neural code.  In the following section, Section~\ref{sec:exact}, we address more generally the question of what neural codes can be encoded exactly, and what is the structure of spurious states when a code is encoded inexactly.

\subsection{An example} \label{sec:example}
Suppose $\C$ is a binary neural code on $n = 6$ neurons, consisting of 
maximal patterns 
$$\{110100, 101010, 011001, 000111\},$$
corresponding to subsets $\{124\}, \{135\}, \{236\},$ and $\{456\}$, 
together with all subpatterns (smaller subsets) of the maximal ones, thus ensuring that $\C$ is a simplicial complex.
This is depicted in Figure 2A, using a standard method of illustrating simplicial complexes geometrically.  The four maximal patterns correspond to the shaded triangles, while patterns with only one or two co-active neurons comprise the vertices and edges of the co-firing graph $G(\C)$.\footnote{In this example, there are no patterns having four or more neurons, but these would be illustrated by tetrahedra and other higher-order simplices.}

\begin{figure}[!h]  \label{fig:example}
\begin{center}
\includegraphics[width=6.5in]{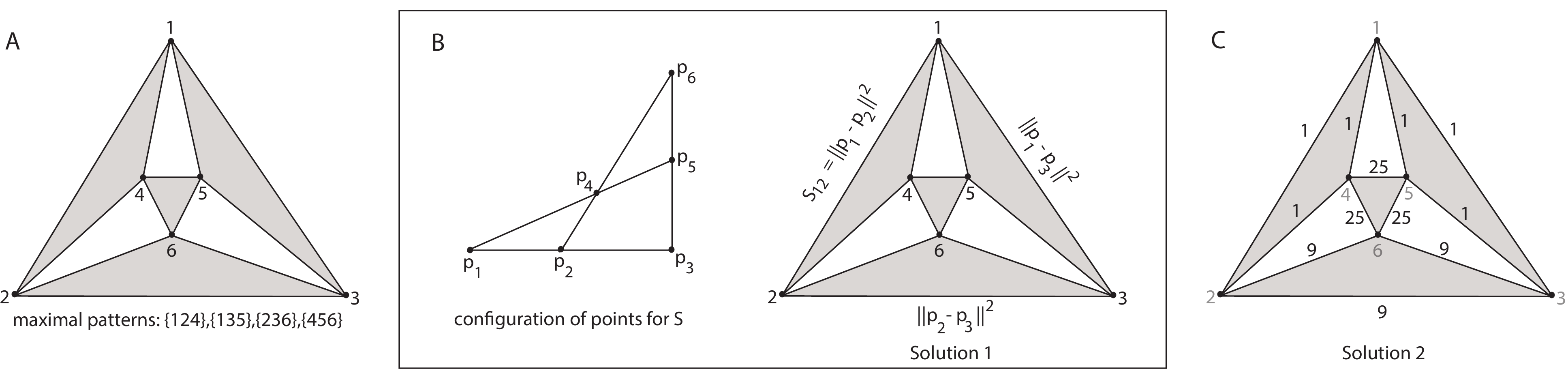}
\end{center}
\vspace{-.15in}
\caption{\small An example on $n = 6$ neurons. (A) The simplicial complex $\C$ consists of 4 two-dimensional facets (shaded triangles).  The graph $G(\C)$ contains the 6 vertices and 12 depicted edges; these are also included in $\C$, so the size of the code is $|\C| = 22$.
(B) A configuration of points $p_1,...,p_6 \in \RR^2$ that can be used to exactly encode $\C$.
Lines indicate triples of points that are collinear.  From this configuration we construct a $6 \times 6$ synaptic strength matrix $S$, with $S_{ij} = \norm{p_i-p_j}^2,$ and choose $0<\varepsilon<\delta(S)$.   
The geometry of the configuration implies that
$\geom(S)$ does not contain any patterns of size greater than $3$, nor does it contain
the triples $\{1 2 3\}, \{1 4 5\}, \{2 4 6\},$ or $\{3 5 6\}$. It is straightforward to check that $\C = \geom(S) \cap X(G(\C))$.
(C) Another solution for exactly encoding $\C$ is provided by choosing the matrix $S$ with $S_{ij}$ given by the labeled edges in the figure.  The square distances in $S_{ij}$ were chosen to satisfy the triangle inequalities for shaded triangles, but to violoate them for empty triangles.}
\end{figure}  

Without Theorem~\ref{thm:main-result}, it is difficult to find a network $W$ that encodes $\C$ exactly -- i.e., such that $\P(W) = \C$.  This is in part because each connection strength $W_{ij}$ belongs to two $3 \times 3$ matrices that must satisfy opposite stability properties.  For example, the subset $\{124\}$ must be a permitted set of $\P(W)$ while $\{123\}$ is not permitted, imposing competing conditions on the entry $W_{12}$.  In general, it may be difficult to ``patch'' together local ad-hoc solutions to obtain a single matrix $W$ having all the desired stability properties.  

Using Theorem~\ref{thm:main-result}, on the other hand, we can easily construct many exact solutions for encoding $\C$ as a set of permitted sets $\P(W)$.  The main idea is as follows.  Consider the Encoding Rule with synaptic strength matrix $S$ and $0<\varepsilon < \delta(S)$.  Applying the rule to $\C$ yields a network with permitted sets 
$$\P(W) = \geom(S) \cap X(G(\C)).$$
The goal is thus to find $S$ so that $\C = \geom(S) \cap X(G(\C)).$
From the co-firing graph $G(\C)$, we see that the clique complex $X(G(\C))$ contains all triangles depicted in Figure 2A, including the ``empty'' (non-shaded) triangles: $\{1 2 3\}, \{1 4 5\}, \{2 4 6\},$ and $\{3 5 6\}$.
The matrix $S$ must therefore be chosen so that these triples are not in $\geom(S)$, while ensuring that $\{124\}, \{135\}, \{236\},$ and $\{456\}$ are included.  In other words, to obtain an exact solution we must find $S$ such that $S_\sigma$ is a nondegenerate square distance matrix for each $\sigma \in \{\{124\}, \{135\}, \{236\},\{456\} \}$, but {\it not} for $\sigma$ corresponding to an empty triangle.
\medskip

\ssec{Solution 1}  Consider the configuration of points $p_1,...,p_6 \in \RR^2$ in Figure 2B, and let $S$ be the $6 \times 6$ square distance matrix with entries $S_{ij} = \norm{p_i-p_j}^2.$  Because the points lie in the plane, the largest principal submatrices of $S$ that can possibly be nondegenerate square distance matrices are $3 \times 3$.  This means $\geom(S)$ has no elements of size greater than $3$.  Because no two points have the same position, $\geom(S)$ contains the complete graph with all edges $(ij)$.  It remains only to determine which triples are in $\geom(S)$.  The only $3 \times 3$ principal submatrices of $S$ that are nondegenerate square distance matrices correspond to triples of points in general position.  From Figure 2B (left) we see that $\geom(S)$ includes all triples {\it except} $\{1 2 3\}, \{1 4 5\}, \{2 4 6\},$ and $\{3 5 6\}$, since these correspond to triples of points that are collinear (and thus yield {\it degenerate} square distance matrices).  Although $\C \neq X(G(\C))$ and $\C \neq \geom(S)$, it is now easy to check that $\C = \geom(S) \cap X(G(\C))$.  Using Theorem~\ref{thm:main-result}, we can conclude that $\C = \P(W)$ exactly, where $W$ is the network obtained using the Encoding Rule with this $S$ and any $0< \varepsilon < \delta(S)$.\medskip

\ssec{Solution 2} Let $S$ be the symmetric matrix defined by the following equations for $i < j$: $S_{ij} = 1$ if $i=1$; $S_{24} = S_{35} = 1$; $S_{23} = S_{26} = S_{36} = 3^2$; and $S_{ij} = 5^2$ if $i = 4$ or $5$.  Here we've only assigned values corresponding to each edge in $G(\C)$ (see Figure 2C); remaining entries may be chosen arbitrarily, as they play no role after we intersect $\geom(S) \cap X(G(\C))$.
Note that $S$ is {\it not} a square distance matrix at all, not even a degenerate one.  Nevertheless, $S_\sigma$ {\it is} a nondegenerate square distance matrix for $\sigma \in \{\{124\}, \{135\}, \{236\},\{456\} \}$, because the distances correspond to nondegenerate triangles.  For example, the triple $\{124\}$ has pairwise distances $(1, 1, 1)$, which satisfy the triangle inequality.  In contrast, the triple $\{123\}$ has pairwise distances $(1, 1, 3)$, which violate the triangle inequality; hence, $S_{\{123\}}$ is {\it not} a square distance matrix.  Similarly,  the triangle inequality is violated for each of $\{1 4 5\}, \{2 4 6\},$ and $\{3 5 6\}$.
It is straightforward to check that, among all cliques of $X(G(\C))$, only the desired patterns in $\C$ are also elements of $\geom(S)$,  so $\C = \geom(S) \cap X(G(\C))$.\medskip

By construction, Solution 1 and Solution 2 produce networks $W$ (obtained using the Encoding Rule with $\varepsilon, S$ and $\C$) with exactly the same set of permitted sets $\P(W)$.  Nevertheless, the solutions are functionally different in that the resulting input-output relationships associated to the equation~\eqref{eq:dynamics} dynamics are different, as they depend on further details of $W$ not captured by $\P(W)$ (see Appendix D).

\subsection{Binary neural codes that can be encoded exactly}\label{sec:exact}

\q{3}{What binary neural codes can be encoded exactly as $\C = \P(W)$ for a symmetric threshold-linear network $W$?}
\medskip

\q{4}{If encoding is not exact, what is the structure of spurious states?}
\medskip

From Theorem~\ref{thm:main-result}, it is clear that if the set of patterns to be encoded happens to be of the form $\C = \geom(S) \cap X(G(\C))$, then $\C$ can be exactly encoded as $\P(W)$ for small enough $\varepsilon$ and the same choice of $S$.  Similarly, if the set of patterns has the form $\C = \geom_\varepsilon(S) \cap X(G(\C))$, then $\C$ can be exactly encoded as $\P(W)$ using our Encoding Rule (Section~\ref{sec:encoding-rule}) with the same $S$ and $\varepsilon$.  Can any other sets of binary patterns be encoded exactly via symmetric threshold-linear networks?  The next theorem assures us that the answer is ``No.''  This means that by focusing attention on networks constructed using our Encoding Rule, we are not missing any binary neural codes that could arise as $\P(W)$ for other symmetric networks.

\begin{theorem}\label{thm:main-result2}
Let $\C \subset 2^{[n]}$ be a binary neural code.  There exists a symmetric threshold-linear network $W$ such that
$\C = \P(W)$ if and only if
$\C$ is a simplicial complex of the form
\begin{equation}\label{eq:P-form}
\hspace{.75in} \C = \geom_\varepsilon(S) \cap X(G(\C)),
\end{equation}
for some $\varepsilon > 0$ and $S$ an $n \times n$ matrix satisfying $S_{ij} = S_{ji} \geq 0$ and $S_{ii} = 0$ for all $i,j \in [n]$.  Moreover, $W$ can be constructed using the Encoding Rule on $\C$, using this choice of $S$ and $\varepsilon$.
\end{theorem}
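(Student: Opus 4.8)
The plan is to prove the two directions separately, with the forward (``if'') implication being essentially immediate and all of the substance living in the converse. For the ``if'' direction, suppose $\C = \geom_\varepsilon(S) \cap X(G(\C))$ for some $\varepsilon > 0$ and some $S$ with $S_{ij} = S_{ji} \geq 0$, $S_{ii} = 0$. I would simply run the Encoding Rule on $\C$ using exactly this $S$ and $\varepsilon$; Theorem~\ref{thm:main-result} then hands us a network $W$ with $\P(W) = \geom_\varepsilon(S) \cap X(G(\C)) = \C$. This single application settles both the existence of $W$ and the ``Moreover'' clause at once.

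The work is in the ``only if'' direction. Assume $W$ is symmetric with $\C = \P(W)$. First, $\C$ is a simplicial complex by Lemma~\ref{lemma:simplicial}. Next, Lemma~\ref{lemma3} produces a symmetric matrix $A$ with zero diagonal such that $\C = \stab(-11^T + A)$. The key translation step is to apply Proposition~\ref{prop:main-result2} to each principal submatrix $(-11^T + A)_\sigma = -11^T + A_\sigma$ with $\varepsilon = 1$: such a submatrix is stable if and only if $A_\sigma$ is a nondegenerate square distance matrix and $|\CM(A_\sigma)/\det(A_\sigma)| > 1$, which is precisely the condition $\sigma \in \geom_1(A)$. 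Hence $\C = \geom_1(A)$, and I have recovered the $\geom$ form with $\varepsilon = 1$, but with a matrix $A$ that may have negative off-diagonal entries and so is not yet a legitimate synaptic strength matrix.

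To close this gap I would define $S$ by $S_{ij} := A_{ij}$ when $(ij) \in G(\C)$, and $S_{ij} := 0$ otherwise (with $S_{ii} = 0$). Two observations make this work. First, every edge $(ij) \in G(\C)$ satisfies $\{i,j\} \in \C = \geom_1(A)$ because $\geom_1(A)$ is a simplicial complex, so $A_{\{i,j\}}$ is a nondegenerate square distance matrix and thus $A_{ij} > 0$; this makes $S$ a valid synaptic strength matrix. Second, for any clique $\sigma \in X(G(\C))$ all pairs in $\sigma$ are edges, so $S_\sigma = A_\sigma$ and therefore $\sigma \in \geom_1(S) \Leftrightarrow \sigma \in \geom_1(A)$. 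Consequently $\geom_1(S) \cap X(G(\C)) = \geom_1(A) \cap X(G(\C))$. Finally, since every $\sigma \in \geom_1(A) = \C$ has all its pairs co-appearing in $\sigma \in \C$, we get $\geom_1(A) \subseteq X(G(\C))$, so $\geom_1(A) \cap X(G(\C)) = \geom_1(A) = \C$. This yields $\C = \geom_1(S) \cap X(G(\C))$, with $\varepsilon = 1$, as required; the ``Moreover'' then follows by running the Encoding Rule with this $S$ and $\varepsilon = 1$ and invoking Theorem~\ref{thm:main-result} once more.

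The one genuinely clever point — the main obstacle — is the construction of $S$ together with the realization that intersecting with the clique complex $X(G(\C))$ masks out exactly the off-diagonal entries of $A$ lying on non-edges, whose signs we cannot control. Replacing those entries by nonnegative values therefore leaves the intersection untouched, while the entries we must keep (those on edges of $G(\C)$) are automatically positive because the corresponding pairs are permitted, hence nondegenerate square distance $2 \times 2$ matrices. Everything else is a direct substitution into the already-established Lemma~\ref{lemma3}, Proposition~\ref{prop:main-result2}, and Theorem~\ref{thm:main-result}.
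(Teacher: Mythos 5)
Your proof is correct and follows essentially the same route as the paper: reduce to $\C = \stab(-11^T+A) = \geom_1(A)$ via Lemma~\ref{lemma3} and Proposition~\ref{prop:main-result2}, then replace the uncontrolled off-diagonal entries of $A$ by nonnegative values, observing that intersecting with the clique complex masks exactly those entries. The only (immaterial) difference is that the paper matches $S$ to $A$ on the graph $G = \{(ij): A_{ij}\geq 0\}$ and then shows $\geom_1(S)\cap X(G) = \geom_1(S)\cap X(G(\C))$, whereas you match directly on $G(\C)$ and verify the two inclusions by hand.
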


\noindent The proof is given in Section~\ref{sec:thm2-proof}.
Theorem~\ref{thm:main-result2} allows us to make a preliminary classification of binary neural codes that can be encoded exactly,
giving a partial answer to Question 3.
To do this, it is useful to distinguish three different types of $S$ matrices that can be used in the Encoding Rule.  

\begin{table}
\begin{center}
\begin{tabular}{c|c|c|c}
type of & $\C$ that can be {\it exactly}  & $\C$ that can be & spurious states\\
$S$ matrix & encoded: $\C = \P(W)$ & encoded: $\C \subset \P(W)$ &  $\P(W) \setminus \C$\\
\hline
\hline
&&&\\
universal $S$ & any clique complex & all codes & cliques of $G(\C)$\\
& $X(G)$ & & that are not in $\C$\\
&&&\\
\hline
&&&\\
$k$-sparse & any $k$-skeleton $X_k(G)$ & all $k$-sparse codes & cliques of $G(\C)$ of size $\leq k$,\\
universal $S$ & of a clique complex & ($|\sigma| \leq k$ for all $\sigma \in \C$)&  that are not in $\C$ \\
&&&\\
\hline
&&&\\
specially-  & $\C$ is of the form & depends on $S$ & cliques of $G(\C)$ that are\\
tuned $S$ & $\geom(S) \cap X(G)$ & & in $\geom(S)$ but not in $\C$\\
&&&\\
\hline
\end{tabular}
\caption{\small Classification of $S$ matrices, together with encodable codes and spurious states.
Note: the above assumes using the Encoding Rule on the code $\C$ with synaptic strength matrix
$S$ and $0<\varepsilon<\delta(S)$.  Additional codes may be exactly
encodable for other choices of $\varepsilon$.}
\end{center}
\end{table}

\begin{itemize}
\item {\bf universal $S$}.  We say that a matrix $S$ is ``universal'' if it is an $n \times n$ nondegenerate square distance matrix.
In particular, any principal submatrix $S_\sigma$ is also a nondegenerate square distance matrix, so if we
 let $0 < \varepsilon < \delta(S) = \left|\CM(S)/\det(S)\right|$ then any $\sigma \in \C$ has corresponding excitatory connections
 $\varepsilon S_\sigma$ that are geometrically balanced (see Section~\ref{sec:main-result}).  Furthermore, $\geom_\varepsilon(S) = \geom(S) = 2^{[n]}$,  and hence $\geom_\varepsilon(S) \cap X(G(\C)) = X(G(\C))$, irrespective of $S$.  It follows that if $\C = X(G)$ for any graph $G$, 
 then $\C$ can be {\it exactly} encoded using any universal $S$ and any $0<\varepsilon<\delta(S)$ in the Encoding Rule.\footnote{Note that if $\C = X(G)$ is any clique complex with underlying graph $G$, then we automatically know that $G(\C) = G$, and hence $X(G(\C)) = X(G) = \C$.} Moreover, since $\C \subset X(G(\C))$ for any code $\C$, it follows that any code can be encoded -- albeit inexactly -- using
 a universal $S$ in the Encoding Rule.  Finally, the spurious states $\P(W) \setminus \C$ can be completely understood: they consist of all cliques in the graph $G(\C)$ that are not elements of $\C$.

\item {\bf $k$-sparse universal $S$}.  We say that a matrix $S$ is ``$k$-sparse universal''  if it is a (degenerate) $n \times n$ square distance matrix for a configuration of $n$ points that are 
in {\it general position}\footnote{This guarantees that all $k \times k$ principal submatrices of $S$ are nondegenerate square distance matrices.} in $\RR^{k},$ for $k < n-1$ (otherwise $S$ is universal).    Let $0<\varepsilon<\delta(S)$.  
Then, $\geom_\varepsilon(S) = \geom(S) =  \{\sigma \subset [n] \mid |\sigma| \leq k+1 \}$; this is
 the {\it $k$-skeleton}\footnote{The $k$-skeleton $\Delta_k$ of a simplicial complex $\Delta$ is obtained by restricting to faces of dimension $\leq k$, which corresponds to keeping only elements $\sigma \subset \Delta$ of size $|\sigma| \leq k+1$.  Note that $\Delta_k$ is also a simplicial complex.} of the complete simplicial complex $2^{[n]}$.  
 This implies that $\geom_\varepsilon(S) \cap X(G(\C)) = X_k(G(\C))$, where $X_k$ denotes the $k$-skeleton of the clique complex $X$:
 $$X_k(G(\C)) \od \{ \sigma \in X(G(\C)) \mid |\sigma| \leq k+1 \}.$$
It follows that any $k$-skeleton of a clique complex, $\C = X_k(G)$ for any graph $G$, can be encoded exactly.
Furthermore, since any $k$-sparse code $\C$ satisfies $\C \subseteq X_k(G(\C))$, any $k$-sparse code can be encoded using this type of $S$ matrix in the Encoding Rule.  The spurious states in this case are cliques of $G(\C)$ that have size no greater than $k$.

\item {\bf specially-tuned $S$}.  We will refer to all $S$ matrices that do not fall into the universal or $k$-sparse universal categories as ``specially-tuned.''  In this case, we cannot say anything general about the codes that are exactly encodable without further knowledge about $S$.  If we let $0<\varepsilon<\delta(S)$, as above, Theorem~\ref{thm:main-result2} tells us that the binary codes $\C$ can be encoded exactly (via the Encoding Rule) are of the form $\C = \geom(S) \cap X(G(\C))$.  Unlike in the universal and $k$-sparse universal cases, the encodable codes depend on the precise form of $S$.
Note that the example code $\C$ discussed in Section~\ref{sec:example} was not a clique complex nor the $k$-skeleton of a clique complex.  Nevertheless, it could be encoded exactly for the ``specially-tuned" choices of $S$ exhibited in Solution 1 and Solution 2 (see Figure 2B,C).
\end{itemize}

\noindent A summary of what codes are encodable and exactly encodable for each type of $S$ matrix is shown in Table 2, under the assumption that $0<\varepsilon<\delta(S)$ in the Encoding Rule.

We end this section with several technical Remarks, along with some open questions for further mathematical investigation.

\subsection*{Remarks}

\noindent{\bf 1. Fine-tuning?}
It is worth noting here that solutions obtained by choosing $S$ to be a {\it degenerate} square distance matrix, as in the
 $k$-sparse universal $S$ or the specially-tuned $S$ of Figure 2B,
   are not as fine-tuned as they might first appear.  This is because the ratio $|\CM(S_\sigma)/\det(S_\sigma)|$ approaches zero as subsets of points $\{p_i\}_{i\in\sigma}$ 
used to generate $S$ become {\it approximately} degenerate, allowing elements to be eliminated from $\geom_\varepsilon(S)$ because of violations to condition (b) in Proposition~\ref{prop:main-result2}, even
if condition $(a)$ is not quite violated. This means the appropriate matrices do not have to be exactly degenerate, but only approximately degenerate (see Remark 2 in Appendix C).  In particular, the collinear points in Figure 2B need not be exactly collinear for Solution 1 to hold.
\bigskip

\ssec{2. Controlling spurious cliques in sparse codes}
If the set of patterns $\C \subset 2^{[n]}$ to be encoded is a {\it $k$-sparse} code, i.e. if $|\sigma| \leq k < n$ for all $\sigma \in \C$, then any clique of size $k+1$ or greater
in $G(\C)$ is potentially a spurious clique.  We can eliminate these spurious states, however, by choosing 
a $k$-sparse universal $S$ in the Encoding Rule.  This guarantees that $\geom_\varepsilon(S)$ does not include any element of size greater than $k$, and hence $\P(W) \subseteq X_{k-1}(G(\C))$.  
\bigskip

\ssec{3. Uniform $S$} 
To use truly binary synapses, we can choose $S$ in the Encoding Rule to be the uniform synaptic strength matrix having $S_{ij} = 1$ for $i \neq j$ and $S_{ii} = 0$ for all $i \in [n]$.  
In fact, $S$ is a nondegenerate square distance matrix, so this is a special case of a ``universal'' S.
Here $\delta(S)$ turns out to have a very simple form: 
 $$\delta(S) = \left|\CM(S)/\det(S)\right| = \dfrac{n}{n-1}.$$ 
Similarly, any $k \times k$ principal submatrix $S_\sigma$, with $|\sigma| = k$, satisfies
$\delta(S_\sigma) = \dfrac{k}{k-1}.$  This implies that $\geom_\varepsilon(S)$ is the $k$-skeleton of the complete simplicial complex on $n$ vertices if and only if
$$\dfrac{k+2}{k+1} < \varepsilon < \dfrac{k+1}{k}.$$  It follows that for this choice of $S$ and $\varepsilon$ (note that $\varepsilon > \delta(S))$, the Encoding Rule yields $\P(W) = X_k(G(\C))$, just as in the case of $k$-sparse universal $S$.  If, on the other hand, we choose $0< \varepsilon \leq 1 < \delta(S)$, then $\geom_\varepsilon(S) = \geom(S) = 2^{[n]}$,
then we have the usual properties for universal $S$ (c.f. \cite{Seung:2002}).
\bigskip

 \ssec{4. Matroid complexes} In the special case where $S$ is a square distance matrix, $\geom(S)$ is a 
 {\it representable matroid complex} -- i.e., it is the independent set complex of
a real-representable matroid \cite{Oxley2011}.  Moreover, all representable matroid complexes
are of this form, and can thus be encoded exactly.  To see this, take any code $\C$ having $G(\C) = K_n$, the complete graph on $n$ vertices.  Then $X(G(\C)) = 2^{[n]}$, and the Encoding Rule (for $\varepsilon < \delta(S)$) yields 
$$\P(W) = \geom(S).$$
Note that although the example code $\C$ of Section~\ref{sec:example} is not a matroid complex (in particular, it violates the independent set exchange property \cite{Oxley2011}), $\geom(S)$ for the matrix $S$ given in Solution 1 (Figure 2B) {\it is} a representable matroid complex, showing that $\C$ is the intersection of a representable matroid complex and the clique complex $X(G(\C))$. 
\bigskip

\ssec{\bf 5. Open questions}
Can a combinatorial description be found for all simplicial complexes that are of the form $\geom_\varepsilon(S)$ or $\geom(S)$, where $S$ and $\varepsilon$ satisfy the conditions in Theorem~\ref{thm:main-result2}?  For such complexes, can the appropriate $S$ and $\varepsilon$ be obtained constructively?  Does every simplicial complex $\C$ admit an exact solution to the NE problem via a {\it symmetric} network $W$? 
I.e., is every simplicial complex of the form $\geom_\varepsilon(S) \cap X(G(\C))$, as in equation~ \eqref{eq:P-form}?
If not, what are the obstructions?
More generally, does every simplicial complex admit an exact solution (not necessarily symmetric) to the NE problem? We have seen that all matroid complexes for representable matroids can be exactly encoded as $\geom(S)$.  Can non-representable matroids also be exactly encoded?

\subsection{Spurious states and ``natural'' codes}\label{sec:spurious}

Although it may be possible, as in the example of Section~\ref{sec:example}, to precisely tune the synaptic strength matrix $S$ to exactly encode a particular neural code, this is somewhat contrary to the spirit of the Encoding Rule, which  assumes $S$ to be an intrinsic property of the underlying network.  
Fortunately, as seen in Section~\ref{sec:exact}, Theorem~\ref{thm:main-result} implies that certain ``universal'' choices of $S$ enable any $\C \subset 2^{[n]}$ to be encoded.  The price to pay, however, is the emergence of spurious states.

Recall that spurious states are permitted sets that arise in $\P(W)$ that  were {\it not} in the prescribed list $\C$ of binary patterns to be encoded.  Theorem~\ref{thm:main-result} immediately implies that all spurious states lie in $X(G(\C))$ -- i.e., every spurious state is a clique of the co-firing graph $G(\C)$.
We can divide them into two types: 

\begin{itemize}
\item {\bf Type 1: spurious subsets.}  These are permitted sets $\sigma \in \P(W) \setminus \C$ that are subsets of patterns in $\C$.  Note that if $\C$ is a simplicial complex, there will not be any spurious states of this type.  On the other hand, if $\C$ is {\it not} a simplicial complex, then type 1 spurious states are guaranteed to be present for any symmetric encoding rule, because $\P(W) = \stab(-D+W)$ is a simplicial complex for symmetric $W$ (Lemma~\ref{lemma:simplicial}).

\item {\bf Type 2: spurious cliques.}  These are permitted sets $\sigma \in \P(W) \setminus \C$ that are {\it not} of the first type.  Note that, technically, the type 1 spurious states are also cliques in $G(\C)$, but we will use the term ``spurious clique'' to refer only to spurious states that are not spurious subsets. \end{itemize}

Perhaps surprisingly, some common neural codes have the property that the full set of patterns to be encoded naturally contains a large fraction of the cliques in the code's co-firing graph.  In such cases,  
$\C \approx X(G(\C)),$ or $\C \approx X_k(G(\C))$.  These neural codes therefore have very few spurious states
when encoded using a universal or $k$-sparse universal $S$, even though $S$ has not been specially-tuned for the given code.  We will refer to these as {\it natural} codes for symmetric threshold-linear networks, because they have two important properties that make them particularly fitting for these networks:
\begin{enumerate}
\item[P1.] Natural codes can be encoded exactly or nearly-exactly, using any universal or k-sparse universal 
matrix $S$ in the Encoding Rule, and 
\item[P2.] Natural codes can be fully encoded following presentation of only a small (randomly-sampled) fraction of the patterns in the code.
\end{enumerate}
In other words, not only can natural codes be generically encoded with very few spurious states, but they can also be encoded from a highly undersampled set of codewords.  This is because the network naturally ``fills in'' the missing elements via spurious states that emerge after encoding only part of the code.
In the next two sections, we will explain why RF codes are ``natural'' in this sense, and illustrate the above two properties with a concrete application of encoding two-dimensional PF codes, an important example of RF codes.

\subsection{Receptive field codes are natural codes}\label{sec:natural}

\begin{wrapfigure}{r}{.4\linewidth}
\begin{center}
\vspace{-.35in}
\includegraphics[width=2in]{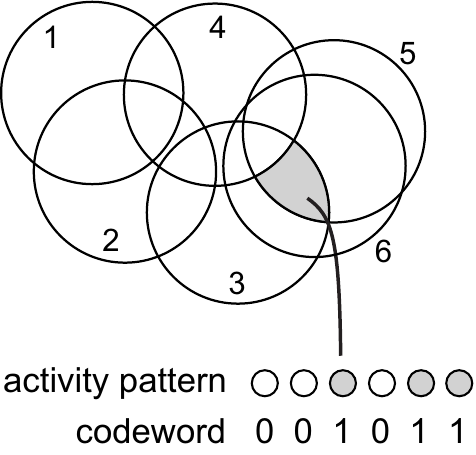}
\end{center}
\caption{\small Two-dimensional receptive fields for $6$ neurons.  The RF code $\C$ has a codeword
for each overlap region.  For example, the shaded region corresponds to the binary pattern $001011$; equivalently, we denote it as $\sigma = \{3,5,6\} \in \C$.  The corresponding {\it coarse} RF code also includes all subsets, such
as $\tau = \{3,5\},$ even if they are not part of the original RF code.}
\end{wrapfigure}

RF codes are binary neural codes consisting of activity patterns of populations of neurons that fire according to {\it receptive fields}.\footnote{In the vision literature, 
the term ``receptive field'' is reserved for subsets of the visual field; we use the term in a more general sense, applicable to any modality.}
Abstractly, a receptive field is a map $f_i:\S \rightarrow \RR_{\geq 0}$ from a space of stimuli $\S$ to the average firing rate $f_i(s)$ of a single neuron $i$ in response to each stimulus $s \in \S$.   Receptive fields are computed from experimental data by correlating neural
responses to external stimuli.  We follow a common abuse of language, where both the map and its support (i.e., the subset $U_i \subset \S$ where $f_i$ takes on
strictly positive values) are referred to as ``receptive fields.''  
If the stimulus space is $d$-dimensional, i.e. $\S \subset \RR^d$, we say that the receptive fields have {\it dimension} $d$. The paradigmatic examples of neurons with receptive fields are orientation-selective neurons in visual cortex \cite{Ben-Yishai1995} and 
hippocampal place cells \cite{PathIntegration}.  Orientation-selective neurons have {\it tuning curves} that reflect a neuron's preference for a particular angle. Place cells are neurons that have {\it place fields}  \cite{O:1976,ON:1978}; i.e., each neuron has a preferred (convex) region of the animal's physical environment where it has a high firing rate.  Both tuning curves and place fields are examples of low-dimensional receptive fields, having typical dimension $d=1$ or $d=2$.

The elements of a RF code $\C$ correspond to subsets of neurons that may be co-activated in response to a stimulus $s \in \RR^d$ (see Figure 3).
Here we define two variations of this notion, which we refer to as RF codes and {\it coarse} RF codes.

\begin{definition}
Let $\{U_1,...,U_n\}$ be a collection of convex open sets in $\RR^d$, where each $U_i$ is the receptive field corresponding to the $i$th neuron.  To such a set of receptive fields, we associate a $d$-dimensional {\it RF code} $\C$, defined as follows: 
for each $\sigma \in 2^{[n]}$, 
$$\sigma \in \C \text{\;\; if and only if \;\;} \bigcap_{i \in \sigma} U_i \setminus \bigcup_{j \notin \sigma} U_j \neq \emptyset.$$   
  This definition was previously introduced in \cite{neuro-coding,neural-ring}.  A {\it coarse} RF code is obtained from a RF code by including all subsets of codewords, so that for each $\sigma \in 2^{[n]}$, 
$$\sigma \in \C  \text{\;\; if and only if \;\;} \bigcap_{i \in \sigma} U_i \neq \emptyset.$$
\end{definition}

Note that the codeword $\sigma = \{3,5,6\}$ in Figure 3 corresponds to stimuli in the shaded region, and {\it not} to the full intersection $U_3 \cap U_5 \cap U_6$.  Moreover, the subset $\tau  = \{3,5\} \subset \sigma$ is { not} an element of the RF code, since $U_3 \cap U_5 \subset U_6$.  Nevertheless, it often makes sense to also consider such subsets as codewords; for example, the cofiring of neurons 3 and 5 may still be observed, as neuron 6 may fail to fire even if the stimulus is in its receptive field.  
This is captured by the corresponding coarse RF code.

Coarse RF codes carry less detailed information about the underlying stimulus space~\cite{gap,neural-ring}, but turn out to be more ``natural'' in the context of symmetric threshold-linear networks because they have the structure of a simplicial complex.\footnote{In topology, this simplicial complex is called the {\it nerve} of the cover $\{U_1,\ldots,U_n\}$
(see \cite{BottTu, gap}).}  
This implies that coarse RF codes do not yield any Type 1 spurious states -- the spurious subsets -- when encoded in a network using the Encoding Rule.
Furthermore, both RF codes and coarse RF codes with {\it low-dimensional} receptive fields contain surprisingly few Type 2 spurious states -- the spurious cliques.  This follows from Helly's theorem, a classical theorem in convex geometry:
\medskip

\noindent{\bf Helly's theorem} \cite{Barvinok2002}.
Suppose that $U_1,...,U_k$ is a finite collection of convex subsets of $\RR^d$, for $d < k$.  If the intersection of any $d+1$ of these sets is nonempty, then the full intersection $\bigcap_{i=1}^k U_i$ is also nonempty.  
\medskip

\noindent To see the implications of Helly's theorem for RF codes, we define the notion of {\it Helly completion}.  

\begin{definition}
Let $\Delta$ be a simplicial complex on $n$ vertices, and let $\Delta_d = \{\sigma \in \Delta \mid |\sigma| \leq d+1\}$ denote its $d$-skeleton.  The {\em Helly completion}
 is the largest simplicial complex, $\bar{\Delta}_d,$ on $n$ vertices that has $\Delta_d$ as its $d$-skeleton.
\end{definition}
In other words, the Helly completion of a $d$-dimensional simplicial complex $\Delta_d$ is obtained by adding in all higher-dimensional faces in a way that is consistent with the existing lower-dimesional faces.
In particular, the Helly completion of any graph $G$ is the clique complex $X(G)$.  For a two-dimensional simplicial complex, $\Delta_2$, the Helly completion includes only cliques of the underlying graph $G(\Delta_2)$ that are consistent with $\Delta_2$.  For example, the Helly completion of the code in Section~\ref{sec:example} does {\it not} include the 3-cliques corresponding to empty (non-shaded) triangles in Figure 3A.
With this notion, Helly's theorem can now be reformulated as:

\begin{lemma}\label{lemma:helly}
Let $\C$ be a coarse $d$-dimensional RF code, corresponding to a set of place fields $\{U_1,...,U_n\}$ where each $U_i$ is a convex open set in $\RR^d$.
Then $\C$ is the Helly completion of its own $d$-skeleton: $\C  = {\bar \C}_d$.
\end{lemma}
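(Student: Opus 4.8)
The plan is to show the two inclusions $\C \subseteq \bar{\C}_d$ and $\bar{\C}_d \subseteq \C$ separately, where $\bar{\C}_d$ denotes the Helly completion of the $d$-skeleton $\C_d$. The forward inclusion is essentially automatic: since $\bar{\C}_d$ is by definition the \emph{largest} simplicial complex having $\C_d$ as its $d$-skeleton, and $\C$ is itself a simplicial complex whose $d$-skeleton is $\C_d$, maximality immediately gives $\C \subseteq \bar{\C}_d$. So the only real content is the reverse inclusion $\bar{\C}_d \subseteq \C$, which is where Helly's theorem enters.

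For the reverse inclusion, I would take an arbitrary $\sigma \in \bar{\C}_d$ and show $\sigma \in \C$. By the definition of coarse RF code, this amounts to proving $\bigcap_{i \in \sigma} U_i \neq \emptyset$. First I would dispose of the trivial cases $|\sigma| \leq d+1$: such $\sigma$ lie in the $d$-skeleton $\bar{\C}_{d,d} = \C_d \subseteq \C$ by construction of the Helly completion, so there is nothing to prove. For $|\sigma| \geq d+2$, the key observation is that membership $\sigma \in \bar{\C}_d$ forces every face $\tau \subseteq \sigma$ with $|\tau| \leq d+1$ to lie in $\C_d$, hence in $\C$ (this is what it means for $\C_d$ to be the $d$-skeleton of $\bar{\C}_d$, together with the fact that simplicial complexes are closed under taking subsets). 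In particular, every subset $\tau \subseteq \sigma$ of size exactly $d+1$ satisfies $\bigcap_{i \in \tau} U_i \neq \emptyset$, again by the coarse RF code definition.

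Now I would apply Helly's theorem directly to the finite collection $\{U_i\}_{i \in \sigma}$ of convex open sets in $\RR^d$. With $k = |\sigma| \geq d+2 > d$, the hypothesis of Helly's theorem is exactly the statement just established: the intersection of any $d+1$ of the sets $\{U_i\}_{i \in \sigma}$ is nonempty. Helly's theorem then yields $\bigcap_{i \in \sigma} U_i \neq \emptyset$, which is precisely the condition for $\sigma \in \C$. This completes the reverse inclusion and hence the proof.

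The main subtlety — more bookkeeping than genuine obstacle — is to verify carefully that $\sigma \in \bar{\C}_d$ really does propagate down to give the $(d+1)$-fold intersection hypothesis for \emph{every} $(d+1)$-subset of $\sigma$; this rests on correctly interpreting the Helly completion as sharing its $d$-skeleton with $\C$ and on $\C$ being a genuine simplicial complex (so that $d$-faces of $\sigma$ are codewords iff the corresponding intersections are nonempty). One should also confirm that the $U_i$ being convex open sets in $\RR^d$ matches Helly's hypotheses verbatim, and note that the case $|\sigma| = d+1$ needs the skeleton identification rather than Helly's theorem, since Helly requires $k > d$ strictly.
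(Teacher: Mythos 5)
Your proof is correct and follows exactly the route the paper intends: the paper presents this lemma as a direct reformulation of Helly's theorem and omits a written proof, and your argument (trivial inclusion $\C \subseteq \bar{\C}_d$ by maximality, then Helly applied to the $(d+1)$-wise intersections of $\{U_i\}_{i\in\sigma}$ for $|\sigma|\geq d+2$) is precisely the omitted verification, with the edge cases handled properly.
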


This lemma indicates that low-dimensional RF codes -- whether coarse or not -- have a relatively small number of spurious cliques, 
since most cliques in $X(G(\C))$ are also in the Helly completion ${\bar \C}_d$ for small $d$.  In particular,
it implies that coarse RF codes of dimensions $d=1$ and $d=2$ are very natural codes for symmetric threshold-linear networks.

\begin{corollary}\label{cor:1dRF}
If $\C$ is a coarse one-dimensional RF code, then it is a clique complex: $\C = \bar \C_1 = X(G(\C))$.  Therefore, $\C$ can be exactly encoded 
using any universal $S$ in the Encoding Rule.  
\end{corollary}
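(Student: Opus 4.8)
The plan is to recognize this as the $d=1$ specialization of Lemma~\ref{lemma:helly}, combined with the universal-$S$ encoding result already established in Section~\ref{sec:exact}. All the geometric work has been done upstream, so the corollary should follow by unwinding definitions.

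First I would apply Lemma~\ref{lemma:helly} with $d=1$. This gives immediately that $\C = \bar\C_1$: the coarse one-dimensional RF code equals the Helly completion of its own $1$-skeleton $\C_1 = \{\sigma \in \C \mid |\sigma| \le 2\}$. The substantive content here is Helly's theorem in dimension one, but since Lemma~\ref{lemma:helly} is stated and proved for all $d$, I can invoke it directly rather than re-deriving it.

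Next I would identify the $1$-skeleton $\C_1$ with the co-firing graph $G(\C)$. For a coarse RF code, $(ij)\in G(\C)$ iff $\{i,j\}$ co-appears in some pattern, iff $U_i \cap U_j \neq \emptyset$, iff $\{i,j\}\in\C$; hence the edges of $G(\C)$ are exactly the size-$2$ elements of $\C$, and $\C_1$ is precisely $G(\C)$ regarded as a $1$-dimensional complex. Since the Helly completion of a graph is its clique complex (as observed right after the definition of Helly completion), we get $\bar\C_1 = X(G(\C))$. Chaining this with the previous step yields $\C = \bar\C_1 = X(G(\C))$, so $\C$ is a clique complex.

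Finally, because $\C = X(G(\C))$, the encoding claim is exactly the universal-$S$ case of Section~\ref{sec:exact}: for any universal (i.e.\ nondegenerate square distance) matrix $S$ and any $0<\varepsilon<\delta(S)$ one has $\geom_\varepsilon(S) = \geom(S) = 2^{[n]}$, so Theorem~\ref{thm:main-result} gives $\P(W) = \geom_\varepsilon(S)\cap X(G(\C)) = X(G(\C)) = \C$. I do not expect a genuine obstacle: the only points requiring care are the bookkeeping identification of $\C_1$ with $G(\C)$ and the fact that, in dimension one, the Helly completion of a graph coincides with its clique complex. Everything harder is already packaged inside Lemma~\ref{lemma:helly}.
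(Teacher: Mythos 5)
Your proposal is correct and follows exactly the route the paper intends: Lemma~\ref{lemma:helly} with $d=1$ gives $\C = \bar\C_1$, the identification of the $1$-skeleton with $G(\C)$ plus the observation that the Helly completion of a graph is its clique complex gives $\bar\C_1 = X(G(\C))$, and the universal-$S$ case of Theorem~\ref{thm:main-result} (with $\geom_\varepsilon(S) = 2^{[n]}$ for $0<\varepsilon<\delta(S)$) yields exact encoding. The paper treats this corollary as immediate from those same ingredients, so there is nothing to add.
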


\begin{corollary}
If $\C$ is a coarse two-dimensional RF code, then it is the Helly completion of its own $2$-skeleton, $\C = \bar \C_2$, which can be obtained 
from knowledge of all pairwise and triple intersections of receptive fields. 
\end{corollary}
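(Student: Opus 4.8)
The plan is to deduce this corollary directly from Lemma~\ref{lemma:helly} by specializing to $d = 2$. Since $\C$ is a coarse two-dimensional RF code, corresponding to convex open sets $\{U_1, \ldots, U_n\}$ in $\RR^2$, Lemma~\ref{lemma:helly} immediately gives that $\C = \bar{\C}_2$, the Helly completion of its own $2$-skeleton. So the first half of the statement requires no new argument; it is a verbatim instance of the lemma with $d = 2$.

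The remaining content is the claim that $\C = \bar{\C}_2$ can be reconstructed from knowledge of all pairwise and triple intersections of the receptive fields. First I would observe that, by definition of a coarse RF code, $\sigma \in \C$ if and only if $\bigcap_{i \in \sigma} U_i \neq \emptyset$. In particular, the $2$-skeleton $\C_2 = \{\sigma \in \C \mid |\sigma| \leq 3\}$ consists of precisely those subsets $\sigma$ of size at most $3$ whose receptive fields have nonempty common intersection -- that is, the vertices $\{i\}$, the edges $\{i,j\}$ with $U_i \cap U_j \neq \emptyset$, and the triples $\{i,j,k\}$ with $U_i \cap U_j \cap U_k \neq \emptyset$. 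Thus $\C_2$ is exactly the data encoded by all pairwise and triple intersections.

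Next I would invoke the definition of Helly completion: $\bar{\C}_2$ is the largest simplicial complex on $[n]$ having $\C_2$ as its $2$-skeleton, so it is entirely determined by $\C_2$ via the completion operation (adjoining every higher face all of whose $2$-dimensional sub-faces already lie in $\C_2$). Combining this with $\C = \bar{\C}_2$ shows that $\C$ is a function of $\C_2$ alone, hence a function of the pairwise and triple intersection data. The one point needing care is to confirm that the Helly completion really is computable from the $2$-skeleton without any additional information about the $U_i$: this is precisely where Helly's theorem (for $d = 2$) does the work, guaranteeing that whenever every triple among a collection of convex sets in $\RR^2$ intersects, the whole collection intersects, so no larger face can be ``missed'' by the completion. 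I expect no genuine obstacle here, since the substantive geometric input is already packaged in Lemma~\ref{lemma:helly}; the corollary is essentially an unpacking of the definitions of $2$-skeleton and Helly completion, and the main thing to get right is the clean identification of ``$2$-skeleton'' with ``all pairwise and triple intersections.''
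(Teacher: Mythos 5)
Your proof is correct and takes essentially the same route as the paper, which states this corollary as an immediate specialization of Lemma~\ref{lemma:helly} to $d=2$; your unpacking of the $2$-skeleton as the pairwise/triple intersection data and of the Helly completion as a function of that skeleton alone is exactly the intended reading.
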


\noindent For coarse two-dimensional RF codes, the only possible spurious cliques are therefore spurious triples and the larger cliques of $G(\C)$ that contain them. 
The spurious triples emerge when three receptive fields $U_i,U_j$ and $U_k$ have 
the property that each pair intersect, but $U_i \cap U_j \cap U_k = \emptyset$.   For generic arrangements 
of receptive fields, this is relatively rare, allowing these codes to be encoded nearly exactly using any universal $S$ in the Encoding Rule.  
In the next section, we illustrate this phenomenon in the case of two-dimensional place field codes.

\subsection{Encoding sparse place field codes in threshold-linear networks} \label{sec:PFcodes}

As seen in the last section, Helly's theorem sharply limits the number of spurious cliques that result from encoding
low-dimensional RF codes.  Here we illustrate this phenomenon explicitly in the case of sparse place field codes (PF codes).
In particular, we find that PF codes can be encoded nearly exactly from a very small, randomly selected sample of patterns.
The near-exact encoding of PF codes from highly undersampled data shows that they are ``natural'' codes for symmetric threshold-linear networks, as defined in Section~\ref{sec:spurious}.
\medskip

\noindent{\bf PF codes.} 
Let $\{U_1,...,U_n\}$ be a collection of convex open sets in $\RR^d$, where each $U_i$ is the {\it place field} corresponding to the $i$th neuron  \cite{O:1976,ON:1978}.  To such a set 
of place fields we associate a {\it $d$-dimensional PF code}, $\C$, defined as follows: 
for each $\sigma \in 2^{[n]}$, $\sigma \in \C$ if and only if the intersection $\bigcap_{i \in \sigma} U_i$ is nonempty.  

Note that in this definition, PF codes are coarse RF codes.
PF codes are experimentally observed in recordings of neural activity in rodent hippocampus \cite{PathIntegration}.  The elements of $\C$ correspond to subsets of neurons that may be co-activated as the animal's trajectory passes through a corresponding set of overlapping place fields.  Typically $d=1$ or $d=2$, corresponding to the standard ``linear track'' and ``open field'' environments \cite{Muller:1996}; recently, it has also been shown that flying bats possess $d=3$ place fields \cite{Ulanovsky:2013}.
\medskip

It is clear from Corollary~\ref{cor:1dRF} above that one-dimensional PF codes can be encoded exactly -- i.e., without any spurious states -- using any universal $S$ matrix in the Encoding Rule.  Two-dimensional PF codes have no Type 1 spurious states, but may have Type 2 spurious cliques.  For {\it sparse} PF codes, however, the spurious cliques can be further restricted (beyond what is expected from Helly's theorem) by choosing a $k$-sparse universal $S$.
  \medskip

\noindent{\bf Near-exact encoding of sparse PF codes.}
$\:$ Consider a two-dimensional PF code $\C$ that is {\it $k$-sparse}, so that no more than $k$ neurons can co-fire in a single pattern -- even if there are
higher-order overlaps of place fields.  Experimental evidence suggests that the fraction of active neurons is typically on the order of $5-10\%$ \cite{HP}, so we make
the conservative choice of $k = n/10$ (our results improve with smaller $k$).  In what follows, $S$ was chosen to
be $k$-sparse universal and $\varepsilon$ so that $0 < \varepsilon < \delta(S)$,
in order to control spurious cliques of size greater than $k$.  We also assume the worst-case-scenario of
$\P(W) = X_{k-1}(G(\C))$, providing an upper bound on the number of spurious cliques resulting from our Encoding Rule.
What fraction of the stored patterns are spurious?
This can be quantified by the following {\it error probability},
$$P_{\mathrm{error}} \od \dfrac{|\P(W) \setminus \C|}{|\P(W)|} = \dfrac{|X_{k-1}(G(\C))| - |\C|}{|X_{k-1}(G(\C))|},$$
which assumes all permitted sets are equally likely to be retrieved from among the stored patterns in $\P(W)$.
For exact encoding, $P_{\mathrm{error}} = 0$, while large numbers of spurious states will push $P_{\mathrm{error}}$ close to $1$.

\begin{figure}[h]  \label{fig:PFcode}
\begin{center}
\hspace{-.2in}
\includegraphics[width=6in]{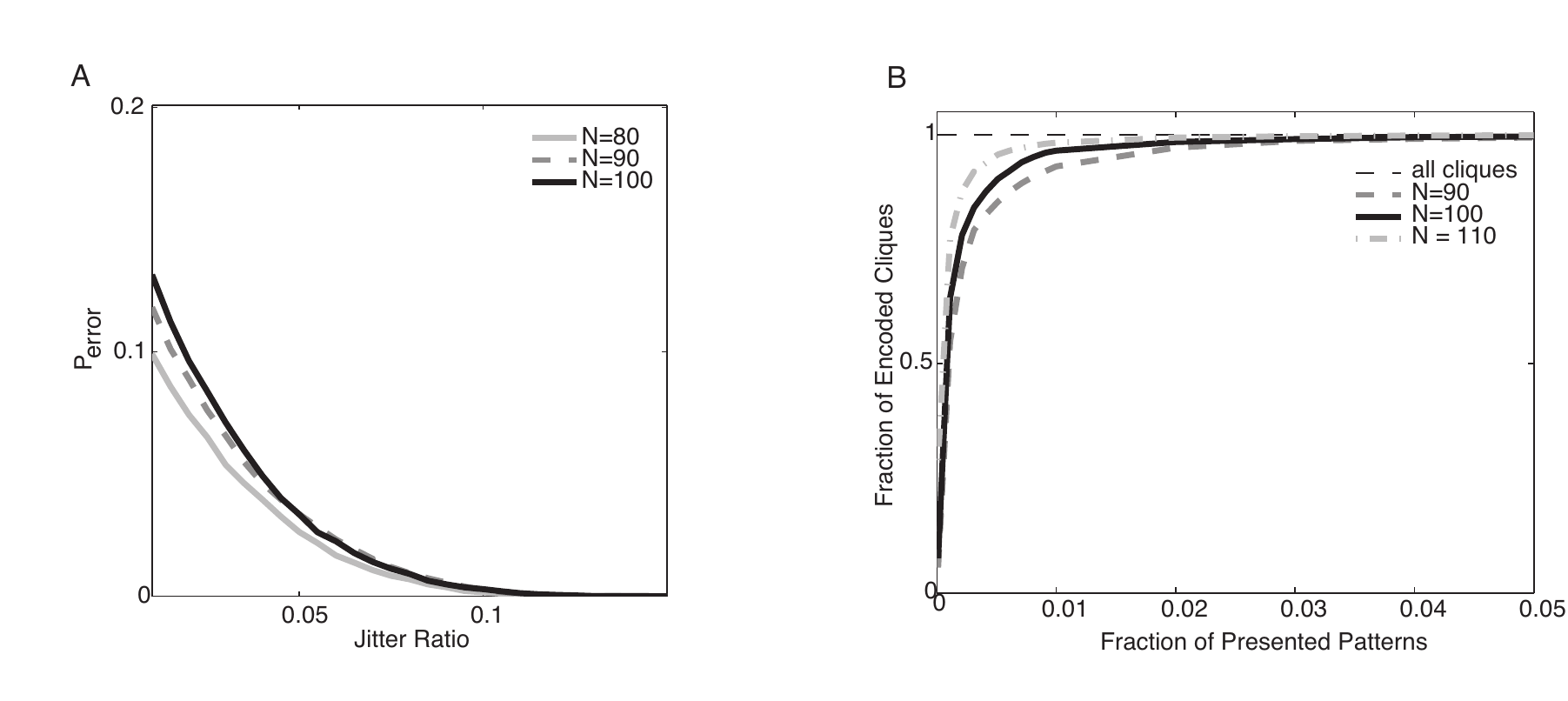}
\end{center}
\caption{\small PF encoding is near-exact, and can be achieved by presenting a small fraction of patterns.  (A) $P_{\mathrm{error}}$ was computed for randomly generated $k$-sparse PF codes
having $n = 80, 90$ and $100$ neurons and $k=n/10$.  For each jitter ratio, the average value of $P_{\mathrm{error}}$ over 100 codes is shown.
(B) For $n=90,100$ and $110$ neurons, $k$-sparse PF codes with jitter ratio $0.1$ were randomly generated and then randomly subsampled to contain a small fraction ($\leq 5\%$) of the total number of patterns.
After applying the Encoding Rule to the subsampled code, the number of encoded cliques was computed.   In each case, the fraction of encoded cliques for the subsampled code (as compared to the full PF code) was averaged over 10 codes. Cliques were counted using Cliquer \cite{Cliquer}, together with custom-made Matlab software.}
\end{figure}

To investigate how ``exactly''  two-dimensional PF codes are encoded, we generated random $k$-sparse PF codes with circular place fields, 
$n =$ 80-100 neurons, and $k = n/10$ (see Appendix E).   
Because experimentally observed place fields do not have precise boundaries, we also generated ``jittered'' codes, where spurious triples
were eliminated from the $2$-skeleton of the code if they did not survive after enlarging the place field radii from $r_0$ to $r_1$ by a {\em jitter ratio}, $(r_{1} - r_{0})/ r_{0}$.
This has the effect of eliminating spurious cliques that are unlikely to be observed in neural activity, as they correspond to very small regions in the underlying environment.
For each code and each jitter ratio (up to $\sim$ 0.1), we computed $P_{\mathrm{error}}$ using the formula above.   
Even without jitter, the error probability was small, and $P_{\mathrm{error}}$  decreased quickly to values near zero for 10\% jitter (Figure 4A).
\bigskip

\noindent{\bf Encoding full PF codes from highly undersampled sets of patterns.}
To investigate what fraction of patterns is needed to encode a two-dimensional  PF code using the Encoding Rule, we generated randomly subsampled codes from $k$-sparse PF codes.  
We then computed the number of patterns that would be encoded by a network if a 
subsampled code was presented.
Perhaps surprisingly, network codes obtained from highly subsampled PF codes (having only 1-5$\%$ of the patterns) 
are nearly identical to those obtained from full PF codes (Figure 4B).  This is because large numbers of ``spurious'' states emerge when encoding subsampled codes, 
but most correspond to patterns in the full code.  The spurious states of 
subsampled PF codes can therefore be {\it advantageous}, allowing networks to quickly encode full PF codes from only a small
fraction of the patterns.
\bigskip 

The results summarized in Figure 4 confirm the fact that sparse PF codes are natural codes, as they satisfy both properties P1 and P2 outlined in Section~\ref{sec:spurious}.
These codes can be encoded nearly exactly because they have very few spurious states.  The spurious cliques are limited by two factors: the implications of Helly's theorem (Section~\ref{sec:natural}) and their sparsity, enabling the choice of a $k$-sparse universal $S$ that automatically eliminates spurious cliques of size greater than $k$.

\section{Proofs} \label{sec:proofs}

To the best of our knowledge, all proofs in this section are original, as are the results presented in Theorems~\ref{thm:main-result},~\ref{thm:main-result2}, and~\ref{thm:perturbation}.  Theorem~\ref{thm:perturbation} is our core technical result, which we state and prove in Section~\ref{sec:thm4}.
It appears to be closely related to some relatively recent results in convex geometry, involving correlation matrices and the geometry of the ``elliptope'' \cite{MoniqueLaurent}.  
Our proof, however, relies only on classical distance geometry and well-known facts about stable symmetric matrices; these are summarized in Appendix A.
The key new insight that allows us to connect stability of matrices of the form $-11^T + \varepsilon A$ to Cayley-Menger determinants is Lemma~\ref{lemma:gendet}.  
In Section~\ref{sec:thm2-proof} we give the proofs of Proposition~\ref{prop:main-result2}, Theorem~\ref{thm:main-result} and Theorem~\ref{thm:main-result2}, which all rely on Theorem~\ref{thm:perturbation}.

\subsection{Statement of Theorem~\ref{thm:perturbation} and its proof}
\label{sec:thm4}

The statement of Theorem~\ref{thm:perturbation} uses the following definition and some new notation.

\begin{definition}
A {\em Hebbian} matrix $A$ is an $n \times n$ matrix satisfying $A_{ij}=A_{ji} \geq 0$ and $A_{ii} = 0$ for all $i,j \in [n]$. \end{definition}

\noindent The name reflects the fact that these are precisely the types of matrices that arise when synaptic weights are modified
by a Hebbian learning rule.  We also need the notation,
$$\RR_{\times}^n \od \{v \in \RR^n \mid v_i \neq 0 \;\text{for all}\; i \in [n]\}$$
for the set of vectors with all nonzero entries.  Note that for $v \in \RR^n_\times$, $- vv^T$ is a symmetric $n \times n$ rank 1 matrix with strictly negative diagonal. 
Next, given any $v \in \RR^n$ and any $n \times n$ matrix $A$, 
$$A^v \od \diag(v) A \diag(v)$$ 
denotes the matrix with entries $A^v_{ij} = v_i v_j A_{ij}$.  
We are now ready to state Theorem~\ref{thm:perturbation}.

\begin{theorem}\label{thm:perturbation}
Let $A$ be a Hebbian matrix and $\varepsilon > 0$.
For $v \in \RR^n_\times$, consider the perturbed matrix
$$M = -vv^T + \varepsilon A^v.$$ 
The following are equivalent:
\begin{enumerate}
\item $A$ is a nondegenerate square distance matrix.
\item There exists an $\varepsilon>0$ such that $M$ is stable.
\item There exists a $\delta > 0$ such that $M$ is stable for all $0 < \varepsilon < \delta$.
\item $0 < -\dfrac{\CM(A)}{\det A} < \infty$; and $M$ is stable if and only if $0<\varepsilon<-\dfrac{\CM(A)}{\det A}$.
\end{enumerate}
\end{theorem}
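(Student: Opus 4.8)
The plan is to reduce the general statement to the special case $v = 1$ (the all-ones vector) and then analyze $N := -11^T + \varepsilon A$ directly. First I would observe that $M = \diag(v)\,(-11^T + \varepsilon A)\,\diag(v)$, since $\diag(v)\,11^T\,\diag(v) = vv^T$ and $\diag(v)\,A\,\diag(v) = A^v$. Because $v \in \RR^n_\times$, the matrix $\diag(v)$ is invertible, so $M$ and $N$ are \emph{congruent} symmetric matrices; by Sylvester's law of inertia (Appendix A) they have the same number of negative eigenvalues. For a real symmetric matrix ``stable'' is the same as ``negative definite,'' so $M$ is stable if and only if $N$ is stable. This disposes of the role of $v$ entirely, and it remains to characterize stability of $N = -11^T + \varepsilon A$ for Hebbian $A$.

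I would then prove the four conditions equivalent by the cycle $(1)\Rightarrow(4)\Rightarrow(3)\Rightarrow(2)\Rightarrow(1)$, using two classical inputs from distance geometry (Appendix A): Schoenberg's criterion, that a symmetric zero-diagonal $A$ is a nondegenerate square distance matrix exactly when $x^TAx < 0$ for every nonzero $x \perp 1$; and the inertia fact that such an $A$ has signature $(1,n-1)$, hence is invertible with $\operatorname{sign}(\det A) = (-1)^{n-1}$. The implications $(4)\Rightarrow(3)\Rightarrow(2)$ are immediate (take $\delta = -\CM(A)/\det(A)$, then pick any $\varepsilon$ in the interval). The implication $(2)\Rightarrow(1)$ is also short: if $N$ is negative definite for some $\varepsilon>0$, then restricting the quadratic form to $x \perp 1$ annihilates the $-11^T$ term, giving $0 > x^TNx = \varepsilon\,x^TAx$, so $x^TAx<0$ on $1^\perp\setminus\{0\}$, which by Schoenberg's criterion is exactly condition (1).

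The substance is $(1)\Rightarrow(4)$, and the key computation --- the analogue of Lemma~\ref{lemma:gendet} --- is the determinant identity
$$\det\!\big(-11^T + \varepsilon A\big) = \varepsilon^{n-1}\big(\varepsilon\det A + \CM(A)\big).$$
I would derive this by the matrix determinant lemma, $\det(\varepsilon A - 11^T) = \varepsilon^n\det(A)\,(1 - \tfrac1\varepsilon\,1^TA^{-1}1)$, together with the Schur-complement expansion $\CM(A) = -\det(A)\,1^TA^{-1}1$ of the Cayley--Menger determinant; since both sides are polynomials in the entries, the identity extends to all $A$ by continuity. Granting (1), Schoenberg gives $A|_{1^\perp}$ negative definite, and a Schur-complement estimate in the orthogonal splitting $\RR^n = \RR\,1 \oplus 1^\perp$ shows $N$ is negative definite for all sufficiently small $\varepsilon>0$ (the $\RR\,1$ block is $\approx -n$, and the Schur complement is $\approx \varepsilon\,A|_{1^\perp}\prec 0$). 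Comparing signs in the determinant identity at small $\varepsilon$ forces $\operatorname{sign}(\CM(A)) = (-1)^n$; combined with $\operatorname{sign}(\det A) = (-1)^{n-1}$ this yields $\delta := -\CM(A)/\det(A) \in (0,\infty)$. Writing $\varepsilon\det A + \CM(A) = \det(A)(\varepsilon-\delta)$, the determinant is nonzero on $(0,\delta)$, so no eigenvalue of $N$ can cross $0$ there; by continuity from small $\varepsilon$, $N$ stays negative definite on all of $(0,\delta)$. For $\varepsilon = \delta$ the matrix is singular, and for $\varepsilon>\delta$ the determinant has sign $(-1)^{n-1}\neq(-1)^n$, so $N$ is not negative definite --- giving exactly the sharp threshold in (4).

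The main obstacle is the determinant identity and, with it, proving that the threshold $\delta$ is sharp across the \emph{entire} interval rather than merely for small $\varepsilon$. Naively one would track the $n-1$ eigenvalues bifurcating from the degenerate eigenvalue $0$ of $-11^T$ by first-order perturbation theory, but that is delicate precisely because the eigenvalue is degenerate. I would instead avoid perturbation theory after the initial small-$\varepsilon$ anchoring: the Cayley--Menger identity pins down the unique positive zero of $\det N$, and the elementary ``eigenvalues are continuous and cannot cross zero where the determinant is nonzero'' argument then propagates negative definiteness across all of $(0,\delta)$. This is both cleaner and genuinely self-contained, since it does not presuppose Proposition~\ref{prop:main-result2} or Lemma~\ref{lemma:inequality}, which themselves depend on this theorem.
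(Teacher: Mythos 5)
Your proposal is correct, but it takes a genuinely different route from the paper for both substantive implications. For $(2)\Rightarrow(1)$ the paper goes through Lemma~\ref{lemma:D0}, a page-long contradiction argument combining a convexity lemma, Cauchy interlacing applied to the Cayley--Menger matrix, and a two-case eigenvalue analysis, and then invokes Menger's criterion (sign conditions on $\CM(A_\sigma)$ for \emph{all} $\sigma$); you instead restrict the quadratic form of $-11^T+\varepsilon A$ to $1^\perp$, where the rank-one term vanishes, and read off negative definiteness of $A$ on $1^\perp$, which is Schoenberg's characterization of nondegenerate square distance matrices --- a two-line argument that bypasses the paper's hardest lemma entirely. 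For $(1)\Rightarrow(4)$ the paper establishes the alternating-sign condition $(-1)^{|\sigma|}\det(-vv^T+\varepsilon A^v)_\sigma>0$ for \emph{every} principal submatrix via Menger, Proposition~\ref{prop:sd-det}, and Lemma~\ref{lemma:gendet}, then applies the leading-principal-minor stability criterion (Lemma~\ref{lemma:signcondition}) to get stability iff $\varepsilon$ is below a minimum over all $\sigma$, and finally identifies that minimum with $-\CM(A)/\det A$ by a continuity argument; you anchor negative definiteness at small $\varepsilon$ by a Schur complement in the splitting $\RR 1\oplus 1^\perp$ and then propagate it across $(0,\delta)$ using only the single determinant identity $\det(-11^T+\varepsilon A)=\varepsilon^{n-1}(\varepsilon\det A+\CM(A))$ plus the fact that eigenvalues cannot cross zero where the determinant is nonzero. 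Both proofs share that determinant identity as the crucial link to Cayley--Menger determinants (your derivation via the matrix determinant lemma and the Schur expansion $\CM(A)=-\det(A)\,1^TA^{-1}1$ is equivalent to the paper's block-cofactor computation), and your reduction to $v=1$ by congruence is the same in substance as the paper's Lemma~\ref{lemma:DMD}. What the paper's submatrix-by-submatrix route buys is the per-$\sigma$ information that feeds directly into $\geom_\varepsilon$ and Lemma~\ref{lemma:inequality}; what yours buys is brevity and a cleaner conceptual picture, at the price of relying on the quadratic-form version of Schoenberg's theorem, which is classical and equivalent to the cited \cite{Schoenberg} but is not literally the statement recorded in Appendix A (Proposition~\ref{prop:sd-det} gives only the eigenvalue-sign consequences, not the converse direction you need for $(2)\Rightarrow(1)$); you would want to state and cite that criterion explicitly rather than attribute it to the appendix.
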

\medskip

The rest of this section is devoted to proving Theorem~\ref{thm:perturbation}.  A cornerstone of the proof  is the following lemma, which allows us to connect perturbations of rank 1 matrices to Cayley-Menger determinants. 

\begin{lemma}[determinant lemma]\label{lemma:gendet} Let $u,v \in \RR^n$.  For any real-valued $n \times n$ matrix $A$ and any $t \in \RR$,
$$\det( -uv^T + t \diag(u)A\diag(v)) = \det(\diag(u)\diag(v))\left(t^n\det A+t^{n-1}\CM(A)\right).$$
In particular, if $u = v \in \RR_\times^n$ and $t>0$, then
$$\sgn(\det( -vv^T + t A^v)) = \sgn\left(t\det A+\CM(A)\right),$$
where $\sgn: \RR \rightarrow \{\pm1,0\}$ is the sign function.  Moreover, taking $u = v = 1 \in \RR^n$ yields:
$$\det(-11^T + t A) = t^n\det A + t^{n-1}\CM(A).$$
\end{lemma}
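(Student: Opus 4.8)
I need to prove the determinant identity
$$\det(-uv^T + t\,\diag(u)A\diag(v)) = \det(\diag(u)\diag(v))\bigl(t^n\det A + t^{n-1}\CM(A)\bigr),$$
with the stated specializations following immediately. The plan is to factor out the diagonal matrices and reduce everything to a single clean identity about $\det(-11^T + tA)$.

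**Strategy: factor out the diagonals, then use multilinearity / the matrix determinant lemma.** First I would write $\diag(u)A\diag(v)$ so that the whole matrix becomes $\diag(u)\bigl(-11^T + tA\bigr)\diag(v)$; indeed $-uv^T = \diag(u)(-11^T)\diag(v)$ and $\diag(u)(tA)\diag(v) = t\,\diag(u)A\diag(v)$, so the claimed prefactor $\det(\diag(u)\diag(v)) = \det(\diag(u))\det(\diag(v))$ pops out by multiplicativity of the determinant. This reduces the general statement to the single-variable identity
$$\det(-11^T + tA) = t^n\det A + t^{n-1}\CM(A),$$
which is exactly the third displayed specialization. So the real content is proving this last line.

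**The core computation.** For the reduced identity, the cleanest route is the \emph{matrix determinant lemma} (a rank-one update formula): for invertible $M$ and vectors $x,y$, $\det(M + xy^T) = \det(M)(1 + y^TM^{-1}x)$. Here I would take $M = tA$ (working formally, or over the field of rational functions in $t$ to sidestep invertibility) and the rank-one piece $-11^T = (-1)(1)^T$, giving $\det(tA - 11^T) = \det(tA)\bigl(1 - 1^T(tA)^{-1}1\bigr) = t^n\det A - t^{n-1}\,1^T\!\mathrm{adj}(A)\,1$, using $(tA)^{-1} = \tfrac1t A^{-1}$ and $A^{-1} = \mathrm{adj}(A)/\det A$. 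The remaining task is to identify $-1^T\mathrm{adj}(A)\,1$ with $\CM(A)$. Expanding the Cayley-Menger determinant along its bordering row and column of ones gives precisely a sum of cofactors of $A$, i.e. $\CM(A) = \det\!\left[\begin{smallmatrix}0 & 1^T\\ 1 & A\end{smallmatrix}\right] = -\,1^T\mathrm{adj}(A)\,1$; this is a standard bordered-determinant (Schur complement) identity. Since both sides of $\det(-11^T+tA) = t^n\det A + t^{n-1}\CM(A)$ are polynomials in $t$, verifying the identity on the dense set where $\det A\neq 0$ establishes it everywhere by continuity/polynomial identity, which removes the invertibility worry.

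**Expected obstacle and the remaining specializations.** The main technical point — and the only genuinely nonroutine step — is the identification $\CM(A) = -1^T\mathrm{adj}(A)\,1$ via the bordered-determinant expansion; everything else is bookkeeping. An alternative that avoids $\mathrm{adj}$ entirely is to expand $\det(-11^T + tA)$ directly by multilinearity in the rows, writing each row of $-11^T+tA$ as $(-1^T) + (t\,\text{row}_i A)$; because $-11^T$ has rank one, every term with two or more $(-1^T)$ rows vanishes, leaving exactly the $t^n\det A$ term (no $-1^T$ rows) plus the $n$ terms with a single $-1^T$ row, whose sum is $t^{n-1}\CM(A)$ after recognizing the cofactor sum — I would present whichever of these two expansions is shortest. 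Finally, the two special cases in the lemma are immediate: setting $u=v\in\RR^n_\times$ and $t>0$, the prefactor $\det(\diag(v))^2 > 0$ is positive, so $\sgn\det(-vv^T + tA^v) = \sgn(t^n\det A + t^{n-1}\CM(A)) = \sgn(t\det A + \CM(A))$ after dividing by $t^{n-1}>0$; and setting $u=v=1$ recovers the reduced identity verbatim.
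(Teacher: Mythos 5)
Your proposal is correct, and the overall architecture (factor out the diagonal matrices via $-uv^T + t\,\diag(u)A\diag(v) = \diag(u)(-11^T+tA)\diag(v)$, then prove the single identity $\det(-11^T+tA) = t^n\det A + t^{n-1}\CM(A)$) matches the paper exactly. Where you diverge is in the core computation. The paper computes the bordered determinant $\det\left[\begin{smallmatrix} 1 & z^T \\ w & Q\end{smallmatrix}\right]$ in two ways: once by the $2\times 2$ block formula (Schur complement with respect to the scalar block $1$, which is always invertible), giving $\det(Q - wz^T)$, and once by cofactor expansion along the first row, giving $\det(Q) + \det\left[\begin{smallmatrix} 0 & z^T \\ w & Q\end{smallmatrix}\right]$; equating the two and setting $w=z=1$, $Q = tA$ finishes the proof with no invertibility hypothesis on $A$ and no appeal to the adjugate. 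Your route instead applies the matrix determinant lemma with $M = tA$, which forces you to (i) separately establish $\CM(A) = -1^T\mathrm{adj}(A)\,1$ (itself another bordered-determinant computation) and (ii) patch over singular $A$ by a polynomial-identity/density argument. Both steps are fine as stated, so your proof goes through, but the paper's version buys you the same identity in one stroke by choosing to invert the trivial $1\times 1$ corner rather than the $n\times n$ block $tA$; your alternative multilinearity expansion, had you carried it out, would likewise have avoided the density argument. In short: correct, same skeleton, slightly more machinery than necessary in the middle.
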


\begin{proof}[Proof of Lemma~\ref{lemma:gendet}]
Note that for any $n \times n$ matrix $A$, $t \in \RR$, and $u,v \in \RR^n$, we have
$$\det(-uv^T+t\diag(u)A\diag(v)) = \det(\diag(u)\diag(v))\det(-11^T + tA),$$
where ${-11^T}$ is as usual the rank 1 matrix of all $-1$s.  It thus suffices to show that
$$\det(-{11^T} + tA) = t^n \det A + t^{n-1}\CM(A),$$
where $\CM(A)$ is the Cayley-Menger determinant of $A$.

Let $w, z \in \IR^n$, and let $Q$ be any $n \times n$ matrix. We have
$$
  \det \begin{bmatrix} 1 & z^T \\ w & Q \end{bmatrix} = \det(Q  - w z^T),
$$
where we have used the well-known formula for computing the determinant
of a $2 \times 2$ block matrix.\footnote{The formula $ \det \begin{bmatrix} A & B \\ C & D\end{bmatrix} = \det(A)\det(D-C A^{-1}B)$ applies so long as $A$ is invertible.  
It follows from observing that
$\left[\begin{array}{cc} I & 0 \\ -CA^{-1} & I \end{array}\right]
\left[\begin{array}{cc} A & B \\ C & D \end{array}\right] = 
\left[\begin{array}{cc} A & B \\ 0 & -CA^{-1}B+D \end{array}\right].$}
On the other hand, the usual cofactor expansion along the first row gives 
$$
 \det \begin{bmatrix} 1 & z^T \\ w & Q \end{bmatrix} 
  = \det(Q) + \det\begin{bmatrix} 0 & z^T \\ w & Q \end{bmatrix}. 
$$
Therefore,
$$
 \det(- wz^T+ Q) = \det(Q) + \det\begin{bmatrix} 0 & z^T \\ w & Q \end{bmatrix} .
$$
In particular, taking $w = z = 1 \in \RR^n$ (the column vector of all ones) and $Q = tA$, we have
$\det(-{11^T} + tA) = \det(tA) + \CM(tA) = t^n \det A + t^{n-1} \CM(A).$
\end{proof}

Finally, to prove Theorem~\ref{thm:perturbation} we will need the following technical lemma.

\begin{lemma}\label{lemma:D0}
Fix $v \in \RR_\times^n$, and let $A$ be an $n \times n$ Hebbian matrix. If $(-1)^n\CM(A) \leq 0$, then $-vv^T+tA^v$ is not stable for any $t >0$.  
In particular, if there exists a $t>0$ such that $-vv^T + tA^v$ is stable, then $(-1)^n\CM(A)>0$.
\end{lemma}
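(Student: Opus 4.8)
The plan is to prove the contrapositive, which is precisely the ``in particular'' statement: if $M = -vv^T + tA^v$ is stable for some $t > 0$, then $(-1)^n\CM(A) > 0$. The first step is a reduction. Since $A$ is Hebbian it is symmetric, so $A^v = \diag(v)A\diag(v)$ and $vv^T$ are symmetric and hence $M$ is a real symmetric matrix; for such matrices stability is equivalent to negative definiteness. Writing $M = \diag(v)\bigl(-11^T + tA\bigr)\diag(v)$ with $\diag(v)$ invertible (because $v \in \RR_\times^n$), Sylvester's law of inertia (congruence preserves inertia) shows that $M$ is negative definite if and only if $N(t) \od -11^T + tA$ is negative definite. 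So I may work with $N(t)$ in place of $M$.

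The second step extracts a $t$-independent consequence of stability. Suppose $N(t)$ is negative definite for some $t > 0$. For any $x$ in the hyperplane $1^{\perp} = \{x : 1^T x = 0\}$ we have $x^T(-11^T)x = -(1^T x)^2 = 0$, so $x^T N(t) x = t\, x^T A x$. Negative definiteness of $N(t)$ therefore forces $x^T A x < 0$ for every nonzero $x \in 1^{\perp}$; that is, the restriction of the quadratic form $A$ to the $(n-1)$-dimensional subspace $1^{\perp}$ is negative definite. The crucial point is that this conclusion does not depend on the particular value of $t$ for which stability holds, so I never need continuity in $t$ or the exact stability threshold.

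The third step relates the sign of $\CM(A)$ to this restriction. Let $P$ be an $n\times(n-1)$ matrix whose columns form an orthonormal basis of $1^{\perp}$, so that $A$ restricted to $1^{\perp}$ is represented by the symmetric matrix $P^T A P$. The identity I would establish is $\CM(A) = -n\det(P^T A P)$. To prove it, complete $P$ to an orthogonal matrix $U = \left[\,\tfrac{1}{\sqrt n}1 \mid P\,\right]$, form the $(n+1)\times(n+1)$ orthogonal matrix $\tilde U$ having a $1$ in the top-left corner and $U$ in the lower-right block, and compute $\tilde U^T \left[\begin{array}{cc} 0 & 1^T \\ 1 & A\end{array}\right]\tilde U$. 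Using $U^T 1 = (\sqrt n,0,\dots,0)^T$, the transformed bordered matrix collapses so that a single cofactor expansion along the border row produces the factor $-n\det(P^T A P)$. Given this identity, negative definiteness of $P^T A P$ gives $\sgn\det(P^T A P) = (-1)^{n-1}$, whence $\sgn\CM(A) = (-1)^n$, i.e. $(-1)^n\CM(A) > 0$, which is exactly what the contrapositive requires.

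The main obstacle is the bordered-determinant identity $\CM(A) = -n\det(P^T A P)$; everything else is a short computation. In fact only its sign version, $\sgn\CM(A) = (-1)^n\,\sgn\det\bigl(A|_{1^{\perp}}\bigr)$, is needed, and this can alternatively be obtained by applying Sylvester's law of inertia directly to the bordered matrix $\left[\begin{array}{cc} 0 & 1^T \\ 1 & A\end{array}\right]$ without pinning down the constant. I expect the orthogonal-completion computation to be the one place requiring care, as the cofactor bookkeeping must track the sign contributed by the $\sqrt n$ border entries.
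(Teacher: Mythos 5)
Your argument is correct, and it takes a genuinely different --- and considerably more elementary --- route than the paper's. The paper proves the lemma by contradiction: it first reduces (via Cauchy interlacing) to the case where every proper principal submatrix satisfies $(-1)^{|\sigma|}\CM(A_\sigma)>0$, invokes Menger's and Schoenberg's characterizations of square distance matrices to control the eigenvalues of those submatrices, uses a convexity argument to propagate stability from one value of $t$ to all smaller $t$, applies the determinant lemma to force $\CM(A)=0$, and finishes with a two-case Cauchy-interlacing analysis of the bordered matrix $\left[\begin{array}{cc} 0 & 1^T \\ 1 & A\end{array}\right]$. You instead prove the contrapositive directly: the congruence $-vv^T+tA^v=\diag(v)(-11^T+tA)\diag(v)$ together with Sylvester's law of inertia replaces the paper's determinant-based reduction; the observation that negative definiteness of $-11^T+tA$ forces the quadratic form of $A$ to be negative definite on the hyperplane $1^\perp$ isolates a $t$-independent consequence of stability; and your bordered-determinant identity $\CM(A)=-n\det(P^TAP)$ --- which I have verified (the orthogonal conjugation sends the bordered matrix to one whose first row is $(0,\sqrt{n},0,\dots,0)$, and the cofactor expansion does give the factor $-n$) --- converts that into $\sgn\CM(A)=(-1)^n$. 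Your route needs no distance geometry, no convexity lemma, and no interlacing, and it never uses the Hebbian hypothesis, so it actually establishes the conclusion for an arbitrary symmetric $A$. What you lose is only economy of means in the larger context: the machinery the paper deploys here (Menger's and Schoenberg's theorems, the determinant lemma) is reused immediately afterwards in the proof of the surrounding perturbation theorem, whereas your identity is self-contained. Either way, your proof is complete and, if anything, cleaner.
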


\noindent For its proof, we will need a simple convexity result.

\begin{lemma}\label{lemma:convexity}
Let $M,N$ be real symmetric $n \times n$ matrices so that $M$ is negative semidefinite (i.e., all eigenvalues are $\leq 0$) and $N$ is strictly negative definite (i.e., stable, with all eigenvalues $<0$).  Then $tM + (1-t)N$ is strictly negative definite (i.e., stable) for all $0\leq t<1.$ 
\end{lemma}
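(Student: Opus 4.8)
Lemma~\ref{lemma:convexity} is a standard convexity fact about quadratic forms, and I would prove it directly by examining the associated quadratic form rather than tracking eigenvalues.

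The plan is as follows. Fix any $0 \leq t < 1$ and set $P = tM + (1-t)N$; since $M$ and $N$ are symmetric, so is $P$, and it suffices to show $x^T P x < 0$ for every nonzero $x \in \RR^n$. First I would write
$$x^T P x = t\, x^T M x + (1-t)\, x^T N x.$$
Because $M$ is negative semidefinite, $x^T M x \leq 0$, and since $t \geq 0$ the first term $t\, x^T M x$ is $\leq 0$. Because $N$ is strictly negative definite, $x^T N x < 0$ for every nonzero $x$, and since $1 - t > 0$ (using $t < 1$), the second term $(1-t)\, x^T N x$ is strictly negative for every nonzero $x$. Adding a nonpositive quantity to a strictly negative one yields a strictly negative quantity, so $x^T P x < 0$ for all nonzero $x$. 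Hence $P = tM + (1-t)N$ is strictly negative definite, i.e.\ all its eigenvalues are negative and it is stable.

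The only point requiring care is the boundary of the interval: the argument uses $1 - t > 0$ in an essential way, which is exactly why the statement requires $t < 1$ rather than $t \leq 1$ (at $t = 1$ one would be left with $M$ alone, which is merely negative semidefinite and need not be stable). I do not anticipate a genuine obstacle here; the result is a one-line consequence of the sign analysis of the two terms in the quadratic form, and the strict definiteness of $N$ together with the open condition $t < 1$ does all the work. No appeal to eigenvalue perturbation or interlacing is needed, since working directly with the quadratic form $x^T P x$ bypasses any difficulty in controlling how eigenvalues of a convex combination relate to those of the summands.
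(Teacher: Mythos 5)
Your proposal is correct and is essentially identical to the paper's own proof, which likewise verifies $x^T(tM+(1-t)N)x = t\,x^TMx + (1-t)\,x^TNx < 0$ for all nonzero $x$ directly from the sign of each term. The only difference is that you spell out the role of $t<1$ more explicitly, which the paper leaves implicit.
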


\begin{proof}
$M$ and $N$ satisfy $x^TMx \leq 0$ and $x^TNx < 0$ for all $x \in \RR^n$,  so we have $x^T(tM+(1-t)N)x<0$ for all nonzero $x \in \RR^n$ if $0 \leq t < 1$.
\end{proof}

The proof of Lemma~\ref{lemma:D0} relies on Lemmas~\ref{lemma:gendet} and~\ref{lemma:convexity}, which we have just proven,
and also on some well-known results from classical distance geometry that are
collected in Appendix A.  These include facts about stable symmetric
matrices (Cauchy's interlacing theorem, Corollary~\ref{cor:simplicialprop}, and Lemma~\ref{lemma:signcondition}) as well
as facts about square distance matrices (Lemma~\ref{lemma:realizability}, Proposition~\ref{prop:sd-det}, and
Corollary~\ref{cor:nondeg}).  These facts are also used in the proof of Theorem~\ref{thm:perturbation}.

\begin{proof}[{Proof of Lemma~\ref{lemma:D0}}]
Since A is symmetric, so are $A^v$ and $-vv^T+tA^v$ for any $t$.  Hence, if any principal submatrix of $-vv^T+tA^v$ is unstable then $-vv^T+tA^v$ is also unstable, by Corollary~\ref{cor:simplicialprop}.  Therefore, without loss of generality we can assume $(-1)^{|\sigma|}\CM(A_\sigma) > 0$ for all proper principal submatrices $A_\sigma$, with $|\sigma|<n$ (otherwise, we use this argument on a smallest principal submatrix such that $(-1)^{|\sigma|}\CM(A_\sigma) 
\leq 0$).  By Lemma~\ref{lemma:realizability}, this implies that $A_\sigma$ is a nondegenerate square distance matrix for all $\sigma$ such that $|\sigma|<n$, and so we know by Proposition~\ref{prop:sd-det} that $(-1)^{|\sigma|}\det A_\sigma < 0$  
and that each $A_\sigma$ such that $1<|\sigma|<n$ has one positive eigenvalue and all other eigenvalues negative.

We prove the lemma by contradiction.  Suppose there exists a $t_0>0$ such that $-vv^T + t_0A^v$ is stable.  Applying Lemma~\ref{lemma:convexity} with $M = -vv^T$ and $N = -vv^T+t_0A^v$, we have that $-vv^T+(1-t)t_0A^v$ is stable for all $0 \leq t < 1$.  It follows that $-vv^T + tA^v$ is stable for all $0<t\leq t_0$. 
 Now Lemma~\ref{lemma:signcondition} 
 implies that $(-1)^n \det(-vv^T + t A^v) >0$ for all $0<t\leq t_0$. By
  Lemma~\ref{lemma:gendet}, this is equivalent to having $(-1)^n(t\det
  A + \CM(A)) > 0$ for all $0<t \leq t_0$.
 By assumption, $(-1)^n \CM(A) \leq 0$. But, if $(-1)^n\CM(A) < 0$, then there would exist a small enough $t>0$ such that $(-1)^n(t\det  A + \CM(A)) < 0$.  Therefore we conclude that $\CM(A) = 0$ and hence $(-1)^n\det A >0$.

Next, let $\lambda_1 \leq ... \leq \lambda_n \leq \lambda_{n+1}$ denote the eigenvalues of the Cayley-Menger matrix $ CM(A) = \left[\begin{array}{cc} 0 & 1^T \\ 1 &A\end{array}\right]$, and observe that $A$, $A_{[n-1]}$, and $CM(A_{[n-1]})$ are all principal submatrices of $CM(A)$.  Since everything is symmetric, Cauchy's interlacing theorem applies.  We have seen above that $A_{[n-1]} $ has one positive eigenvalue and all others negative, so by Cauchy interlacing $\lambda_{n+1}>0$ and $\lambda_{n-2}<0$.  Because $\CM(A) = \det CM(A) = 0$, then $CM(A)$ must have a zero eigenvalue, while $\det A \neq 0$ implies that it is unique.   We thus have two cases.

Case 1: Suppose $\lambda_{n-1} = 0$ and thus $\lambda_n >0$.
Since we assume $(-1)^{n-1}\CM(A_{[n-1]}) >0$, the $n \times n$ matrix $CM(A_{[n-1]})$ must have an odd number of positive eigenvalues, but 
by Cauchy interlacing the top two eigenvalues must be positive, so we have a contradiction.

Case 2: Suppose $\lambda_{n} = 0$ and thus $\lambda_{n-1}<0$.  Then by Cauchy interlacing $A$ has exactly one positive eigenvalue.  On the other hand, the fact that $(-1)^n \det A > 0$ implies that $A$ has an even number of positive eigenvalues, which is a contradiction.
\end{proof}

\noindent We can now prove Theorem~\ref{thm:perturbation}.

\begin{proof}[{Proof of Theorem~\ref{thm:perturbation}}]
We prove $(4) \Rightarrow (3) \Rightarrow (2) \Rightarrow (1) \Rightarrow (4)$.  

(4) $\Rightarrow$ (3) $\Rightarrow$ (2) is obvious.

(2) $\Rightarrow$ (1): Suppose there exists a $t>0$ such that $-vv^T+tA^v$ is stable.  Then, by 
Corollary~\ref{cor:simplicialprop} and Lemma~\ref{lemma:D0},  $(-1)^{|\sigma|}\CM(A_\sigma) > 0$ for all principal submatrices $A_\sigma$.  By  Lemma~\ref{lemma:realizability}  it follows that $A$ is a nondegenerate square distance matrix.

(1) $\Rightarrow$ (4): 
Suppose $A$ is a nondegenerate square distance matrix.  
By Lemma~\ref{lemma:realizability} we have $(-1)^{|\sigma|}\CM(A_\sigma)>0$ for all $A_\sigma$, while Proposition~\ref{prop:sd-det} implies
$(-1)^{|\sigma|}\det(A_\sigma)<0$ for all $A_\sigma$ with $|\sigma|>1$.   This implies that
for $|\sigma|>1$  we have $-\dfrac{\CM(A_\sigma)}{\det(A_\sigma)}>0$ (by Corollary~\ref{cor:nondeg}),  and that if $\varepsilon>0$,
$$(-1)^{|\sigma|}\left(\varepsilon \det(A_\sigma) + \CM(A_\sigma)\right) > 0
\;\; \Leftrightarrow \;\; \varepsilon<-\dfrac{\CM(A_\sigma)}{\det(A_\sigma)}.$$ 
Applying now  Lemma~\ref{lemma:gendet}, 
 $$(-1)^{|\sigma|}\det(-vv^T+\varepsilon A^v)_\sigma>0
 \;\; \Leftrightarrow \;\; \varepsilon<-\dfrac{\CM(A_\sigma)}{\det(A_\sigma)}.$$
 For $|\sigma| = 1$, we have diagonal entries $A_\sigma = A_\sigma^v = 0$ and $(-vv^T)_\sigma < 0$, so $(-1)\det(-vv^T + \varepsilon A^v)_\sigma > 0$ for all $\varepsilon$.
Using Lemma~\ref{lemma:signcondition}, we conclude (assuming $\varepsilon>0$):
$$-vv^T + \varepsilon A^v \mbox{ is stable}\;\; \Leftrightarrow \;\;\varepsilon < \delta,$$ where 
$$\delta = \min \left\{-\dfrac{\CM(A_\sigma)}{\det(A_\sigma)}\right\}_{\sigma \subseteq [n]}>0.$$
It remains only to show that  $\delta = -\CM(A)/\det(A).$  Note that we can not use Lemma~\ref{lemma:inequality} from the Main Text because that lemma follows from Proposition~\ref{prop:main-result2}, and is hence a consequence of Theorem~\ref{thm:perturbation}.  

On the other hand, because the matrix $-vv^T+\varepsilon A^v$ changes from stable to unstable at $\varepsilon = \delta$, by continuity of the eigenvalues as functions of $\varepsilon$ it must be that 
$$\det(-vv^T + \delta A^v) = 0.$$
Using Lemma~\ref{lemma:gendet} it follows that
$\delta \det(A) + \CM(A) = 0,$
which implies $\delta = -\CM(A)/\det(A).$
\end{proof}

\subsection{Proofs of Proposition~\ref{prop:main-result2}, Theorem~\ref{thm:main-result} and Theorem~\ref{thm:main-result2}}\label{sec:thm2-proof}

Here we prove our main results from Sections~\ref{sec:main-result} and~\ref{sec:exact}.  We begin with the proof of Proposition~\ref{prop:main-result2}.

\begin{proof}[Proof of Proposition~\ref{prop:main-result2}]
Setting $v = 1 \in \RR_\times^n$ (the column vector of all ones) in Theorem~\ref{thm:perturbation} yields a slightly weaker version of Proposition~\ref{prop:main-result2}, as the hypothesis in Theorem~\ref{thm:perturbation} is that $A$ is {\it Hebbian}, which is more constrained than the Proposition~\ref{prop:main-result2} hypothesis that $A$ is symmetric with zero diagonal.  
To see why Proposition~\ref{prop:main-result2} holds more generally, suppose $A$ is symmetric with zero diagonal but {\it not} Hebbian.  Then there exists an off-diagonal pair of negative entries, $A_{ij} = A_{ji} < 0$, and the $2 \times 2$ principal submatrix 
$$(-11^T + \varepsilon A)_{\{ij\}} = \left(\begin{array}{cc} -1 & -1 + \varepsilon A_{ij} \\ -1 + \varepsilon A_{ij} & -1\end{array}\right)$$
is unstable as it has negative trace and negative determinant.  It follows from Cauchy's interlacing theorem (see Corollary~\ref{cor:simplicialprop} in Appendix A) that $-11^T + \varepsilon A$ is unstable for any $\varepsilon > 0$.  Correspondingly, condition (a) in Proposition~\ref{prop:main-result2} is violated, as the existence of negative entries guarantees that $A$ cannot be a nondegenerate square distance matrix. 
\end{proof}

To prove Theorems~\ref{thm:main-result} and~\ref{thm:main-result2}, we will need the following two corollaries of Proposition~\ref{prop:main-result2}.  
First, recall the definitions for $\geom(A), \geom_\varepsilon(A)$, $\delta(A)$ from Section~\ref{sec:main-result}.
Applying Proposition~\ref{prop:main-result2} to each of the 
principal submatrices of the perturbed matrix $-11^T + \varepsilon A$ we obtain:

\begin{corollary}\label{cor:cor4}  If $A$ is a symmetric matrix with zero diagonal, and $\varepsilon>0$, then
$$\stab(-11^T + \varepsilon A) = \geom_\varepsilon(A).$$
For $0<\varepsilon < \delta(A)$, $\stab(-11^T + \varepsilon A) = \geom(A).$
\end{corollary}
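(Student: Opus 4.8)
The plan is to reduce the statement to a submatrix-by-submatrix application of Proposition~\ref{prop:main-result2}, which already does all the real work. The first thing I would record is the elementary but crucial observation that taking a principal submatrix commutes with the perturbation: for any $\sigma \subseteq [n]$,
$$(-11^T + \varepsilon A)_\sigma = -11^T + \varepsilon A_\sigma,$$
where on the right-hand side $-11^T$ now denotes the all-$(-1)$s matrix of size $|\sigma| \times |\sigma|$. This holds because restricting the all-$(-1)$s matrix to an index set again yields an all-$(-1)$s matrix, and restriction is linear. Since $A$ is symmetric with zero diagonal, so is every $A_\sigma$, hence each submatrix is exactly of the form to which Proposition~\ref{prop:main-result2} applies.

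Next I would unwind the definition of $\stab$. By definition, $\sigma \in \stab(-11^T + \varepsilon A)$ iff the principal submatrix $(-11^T + \varepsilon A)_\sigma = -11^T + \varepsilon A_\sigma$ is stable. For $|\sigma| > 1$, Proposition~\ref{prop:main-result2} says this happens iff (a) $A_\sigma$ is a nondegenerate square distance matrix and (b) $0 < \varepsilon < |\CM(A_\sigma)/\det(A_\sigma)|$ --- which is precisely the membership condition for $\sigma \in \geom_\varepsilon(A)$. This yields the set equality $\stab(-11^T + \varepsilon A) = \geom_\varepsilon(A)$ on all subsets of size at least two.

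I would then dispose of the low-cardinality cases by hand, since Proposition~\ref{prop:main-result2} is cleanest for $|\sigma| \geq 2$. For a singleton $\sigma = \{i\}$ we have $A_\sigma = [0]$, so $(-11^T + \varepsilon A)_\sigma = [-1]$ is stable (its only eigenvalue is $-1$), matching the convention that $\{i\} \in \geom_\varepsilon(A)$ for all $\varepsilon > 0$; the empty set is handled trivially or by the same convention. This confirms $\stab(-11^T + \varepsilon A) = \geom_\varepsilon(A)$ in full.

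Finally, for the second claim I would simply invoke the equivalence already established in the main text, that $\geom_\varepsilon(A) = \geom(A)$ precisely when $0 < \varepsilon < \delta(A)$ (a direct consequence of Lemma~\ref{lemma:inequality} together with the definition of $\delta(A)$). Combining this with the first equality gives $\stab(-11^T + \varepsilon A) = \geom(A)$ for $0 < \varepsilon < \delta(A)$. There is no genuine obstacle here: the entire substance sits inside Proposition~\ref{prop:main-result2}, and the only points requiring a moment's care are the bookkeeping fact that the perturbed matrix restricts to perturbed submatrices, and the degenerate singleton case that falls outside the $|\sigma| \geq 2$ hypothesis.
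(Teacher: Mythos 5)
Your proposal is correct and matches the paper's approach exactly: the paper obtains this corollary by applying Proposition~\ref{prop:main-result2} to each principal submatrix of $-11^T + \varepsilon A$, which is precisely your argument, with the singleton convention and the $\varepsilon < \delta(A)$ reduction handled the same way. The extra bookkeeping you spell out (that restriction commutes with the perturbation) is left implicit in the paper but is the same reasoning.
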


Next, recall that $X(G)$ is the clique complex of the graph $G$.

\begin{corollary} \label{cor:cor5} Let $A$ be a symmetric $n \times n$ matrix with zero diagonal, and $\varepsilon > 0$.  Let $G$ be the graph on $n$ vertices having $(ij) \in G$ if and only if $A_{ij} \geq 0$.  For any $n \times n$ matrix $S$ with $S_{ij}=S_{ji} \geq 0$ and $S_{ii} = 0$, if $S$ ``matches'' $A$ on $G$ (i.e., if $S_{ij} = A_{ij}$ for all $(ij) \in G$), then
$$\geom_\varepsilon(A) = \geom_\varepsilon(S) \cap X(G).$$
In particular, $\geom(A) = \geom(S) \cap X(G).$
\end{corollary}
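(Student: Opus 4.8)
The plan is to prove the two set inclusions in $\geom_\varepsilon(A) = \geom_\varepsilon(S) \cap X(G)$ separately, with both resting on a single structural observation: any $\sigma$ for which $A_\sigma$ is a nondegenerate square distance matrix is automatically a clique of $G$. Once $\sigma$ is known to be a clique, the matching hypothesis forces $A_\sigma = S_\sigma$ exactly, after which the two defining conditions of $\geom_\varepsilon$ -- nondegeneracy of the square distance matrix and the bound $|\CM(\cdot)/\det(\cdot)| > \varepsilon$ -- transfer verbatim between $A$ and $S$, since these quantities depend only on the restricted submatrix.

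First I would establish the structural observation. Suppose $A_\sigma$ is a nondegenerate square distance matrix. If $|\sigma| = 1$, then $\sigma$ is trivially a clique. If $|\sigma| \geq 2$, then by definition $A_\sigma$ arises from an affinely independent configuration of points $\{p_i\}_{i \in \sigma}$; affine independence forces the points to be pairwise distinct, so each off-diagonal entry satisfies $A_{ij} = \norm{p_i - p_j}^2 > 0$. Since $(ij) \in G$ precisely when $A_{ij} \geq 0$, every pair $i,j \in \sigma$ is an edge, and hence $\sigma \in X(G)$. For the inclusion $\geom_\varepsilon(A) \subseteq \geom_\varepsilon(S) \cap X(G)$, I take $\sigma \in \geom_\varepsilon(A)$; then $A_\sigma$ is a nondegenerate square distance matrix, so the observation gives $\sigma \in X(G)$. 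Because $\sigma$ is a clique and $S$ matches $A$ on $G$, the diagonals agree (both zero) and the off-diagonals agree on every edge, so $S_\sigma = A_\sigma$; hence $S_\sigma$ is a nondegenerate square distance matrix with $|\CM(S_\sigma)/\det(S_\sigma)| = |\CM(A_\sigma)/\det(A_\sigma)| > \varepsilon$, placing $\sigma$ in $\geom_\varepsilon(S)$.

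For the reverse inclusion, I take $\sigma \in \geom_\varepsilon(S) \cap X(G)$. Being a clique, every pair in $\sigma$ is an edge of $G$, so the matching hypothesis again yields $A_\sigma = S_\sigma$; the nondegeneracy and the ratio bound furnished by $\sigma \in \geom_\varepsilon(S)$ then carry over to $A_\sigma$, giving $\sigma \in \geom_\varepsilon(A)$. Combining the two inclusions gives the equality, and the ``in particular'' statement $\geom(A) = \geom(S) \cap X(G)$ follows identically upon dropping the ratio condition throughout (equivalently, by letting $\varepsilon \to 0$, since $\geom(A) = \lim_{\varepsilon \to 0} \geom_\varepsilon(A)$ and intersection with $X(G)$ commutes with this limit). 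I do not anticipate a serious difficulty: the only substantive point is the structural observation, which reduces to the elementary fact that an affinely independent configuration consists of distinct points, so that the edge-defining condition $A_{ij} \geq 0$ is met. Everything else is bookkeeping of the definitions, with the only care needed in the singleton case, where $A_\sigma = [0] = S_\sigma$ lies in $\geom(A)$ and $\geom_\varepsilon(A)$ by convention and $\sigma$ is trivially a clique.
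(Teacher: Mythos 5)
Your proof is correct and is precisely the routine verification the paper leaves implicit: the paper states Corollary~\ref{cor:cor5} without proof, and the intended argument is exactly yours. The key point --- that nondegeneracy of $A_\sigma$ forces strictly positive off-diagonal entries and hence $\sigma \in X(G)$, after which the matching hypothesis gives $A_\sigma = S_\sigma$ and the defining conditions of $\geom_\varepsilon$ transfer verbatim --- is the right one, and your handling of the singleton convention is careful and accurate.
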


We can now prove Theorems~\ref{thm:main-result} and~\ref{thm:main-result2}.

\begin{proof}[Proof of Theorem~\ref{thm:main-result}]
Any network $W$ obtained via the Encoding Rule (equation~\eqref{eq:encoding-rule}) has the form $-D+W = -11^T + \varepsilon A$, where $A$ is symmetric with zero diagonal and ``matches'' the (nonnegative) synaptic strength matrix $S$ precisely on the entries $A_{ij}$ such that $(ij) \in G(\C)$.  All other off-diagonal entries of $A$ are negative.  It follows that
\begin{eqnarray*}
\hspace{.35in}\P(W) &=& \stab(-11^T + \varepsilon A) = \geom_\varepsilon(A)\\
 &=& \geom_\varepsilon(S) \cap X(G(\C)),
\end{eqnarray*}
where the last two equalities are due to Corollaries~\ref{cor:cor4} and~\ref{cor:cor5}, respectively.
\end{proof}

\begin{proof}[Proof of Theorem~\ref{thm:main-result2}]
$(\Leftarrow)$  This is an immediate consequence of Theorem~\ref{thm:main-result}.

$(\Rightarrow)$  Suppose there exists a symmetric network $W$ with $\P(W) = \C$, and observe by Theorem~\ref{thm:thm1} that  
$\P(W) = \stab(-11^T+A),$
for some symmetric $n \times n$ matrix $A$ with zero diagonal.  By Corollaries~\ref{cor:cor4} and~\ref{cor:cor5}, 
$$\C = \P(W) = \geom_\varepsilon(A) = \geom_\varepsilon(S) \cap X(G),$$
where $\varepsilon = 1$, $G$ is the graph associated to $A$ (as in Corollary~\ref{cor:cor5}) and $S$ is an $n \times n$ matrix with $S_{ij} = S_{ji} \geq 0$ and zero diagonal that ``matches'' $A$ on $G$.  It remains only to show that $\geom_\varepsilon(S) \cap X(G) = \geom_\varepsilon(S) \cap X(G(\C))$. 
Since $\C = \geom_\varepsilon(A)$, any element $\{ij\} \in \C$ must have corresponding $A_{ij}>0$, so $G(\C) \subseteq G$ and hence $X(G(\C)) \subseteq X(G)$.
On the other hand, $\C = \C \cap X(G(\C))$, so we conclude that $\C =  \geom_\varepsilon(S) \cap X(G(\C)).$ 
\end{proof}

\section{Discussion} \label{sec:discussion}
 
 Understanding the relationship between the connectivity matrix and the  
 activity patterns of a neural network is one of the central challenges in theoretical 
 neuroscience.  We have found that in the context of symmetric threshold-linear networks, 
 one can obtain an unexpectedly precise understanding of
 the binary activity patterns stored by network steady states.  
 In particular, we have arrived at a complete and precise combinatorial characterization of spurious states, 
 something that has not yet been achieved in the context of the Hopfield model 
 \cite{amit_spin-glass_1985, AGS87, Amit1989, HertzKroghPalmer, RoudiTreves2003}.
 Moreover, we have shown that network solutions
 to the NE problem can be obtained constructively, using a simple Encoding Rule.
  A new concept that
 emerges from our results is that of {\it geometric balance}, whereby the excitatory synapses between neurons
 in a stored pattern must satisfy a set of geometric constraints, ensuring they are appropriately bounded
 and balanced in their strengths.  
 
As a consequence of our main results, we have discovered that threshold-linear networks naturally
 encode neural codes arising from low-dimensional receptive fields (such as place fields)
 while introducing very few spurious states.  Remarkably, these codes can be ``learned'' by 
 the network from a highly undersampled set of patterns.
 Neural codes representing (continuous) parametric stimuli, such as place field codes, have typically
 been modeled as arising from continuous attractor networks whose synaptic matrices
 have symmetric ``Mexican hat''-type connectivity \cite{Ben-Yishai1995,PathIntegration}.  
This is in large part due to the fact that there
 is a well-developed mathematical handle on these networks 
\cite{Amari1977,Bressloff12,IHT11}.
Our work shows that one can have fine mathematical control over a much wider class of networks, encompassing all symmetric connectivity matrices.  
It may thus provide a novel foundation for understanding -- and engineering -- neural networks with prescribed steady state properties. 
 
\subsection*{Acknowledgments}
The authors would like to thank Christopher Hillar, Caroline Klivans, Katie Morrison and Bernd Sturmfels for valuable comments, 
as well as Nora Youngs and Zachary Roth for assistance with figures.  
CC was supported by NSF DMS 0920845, NSF DMS 1225666, 
a Woodrow Wilson Fellowship, and the Sloan Research Fellowship.  
AD was supported by the Max Planck Society and the DFG via SFB/Transregio 71 ``Geometric Partial Differential Equations.''
VI was supported by NSF DMS 0967377 and NSF DMS 1122519.

\section{Appendices} \label{sec:appendices}

\subsection{Appendix A: Stable symmetric matrices and square-distance matrices}

In this appendix we review some classical facts about stable symmetric matrices and square-distance matrices that are critical to many of our proofs.  Everything in this section is well-known.

\subsubsection{Stable symmetric matrices}
Here we summarize some well-known facts about the stability of symmetric matrices that we use in various proofs.
The first is Cauchy's interlacing theorem, which relates eigenvalues of a symmetric matrix to those of its principal submatrices.
Recall that the eigenvalues of a symmetric matrix are always real.

\begin{theorem}[Cauchy's interlacing theorem \cite{HornJohnson}]\label{thm:cauchy} Let $A$ be a symmetric $n \times n$ matrix, and let $B$ be an $m \times m$ principal submatrix of $A$.  If the eigenvalues of $A$ are $\alpha_1 \leq ...\alpha_j... \leq \alpha_n$ and those of $B$ are $\beta_1 \leq ... \beta_j ... \leq \beta_m$, then 
$\alpha_j \leq \beta_j \leq \alpha_{n-m+j}$ for all $j$.
\end{theorem}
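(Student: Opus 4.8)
The plan is to derive both interlacing inequalities from the Courant--Fischer variational characterization of the eigenvalues of a symmetric matrix, which expresses each eigenvalue simultaneously as a min--max and as a max--min of Rayleigh quotients over subspaces of prescribed dimension. First I would fix the index set $\sigma \subset [n]$ with $|\sigma| = m$ so that $B = A_\sigma$, and identify $B$ with the restriction of the quadratic form $x \mapsto x^T A x$ to the coordinate subspace $U = \{x \in \RR^n \mid x_i = 0 \text{ for } i \notin \sigma\}$. The key elementary observation is that the natural inclusion $\RR^m \hookrightarrow \RR^n$ with image $U$ is an \emph{isometry} that intertwines the two Rayleigh quotients: for $y \in \RR^m$ corresponding to $x \in U$ one has $y^T B y = x^T A x$ and $\norm{y} = \norm{x}$. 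Consequently the variational problems defining the $\beta_j$ are identical to those defining the $\alpha_j$, except that the competing subspaces are constrained to lie inside $U$.

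For the lower bound $\alpha_j \leq \beta_j$ I would use the min--max form $\alpha_j = \min_{\dim V = j}\max_{0 \neq x \in V}(x^T A x)/(x^T x)$. Applying the same formula to $B$ writes $\beta_j$ as the identical minimization, but with $V$ now ranging only over the $j$-dimensional subspaces of $U$ rather than over all of $\RR^n$. Minimizing over a smaller family can only increase the value, so $\alpha_j \leq \beta_j$.

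For the upper bound $\beta_j \leq \alpha_{n-m+j}$ I would instead use the dual max--min form $\alpha_k = \max_{\dim W = n-k+1}\min_{0 \neq x \in W}(x^T A x)/(x^T x)$. Choosing $k = n-m+j$ makes the relevant dimension $n-k+1 = m-j+1$, which is exactly the dimension appearing in the max--min characterization of $\beta_j$ for $B$. Here the admissible subspaces $W$ for $\beta_j$ are restricted to lie in $U$, a subfamily of all $(m-j+1)$-dimensional subspaces of $\RR^n$, so maximizing over the smaller family can only decrease the value, giving $\beta_j \leq \alpha_{n-m+j}$.

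I expect the only real care to lie in the bookkeeping: matching the subspace dimensions in the two variational formulas so that the shifted index $n-m+j$ emerges correctly, and verifying the isometry claim so that the Rayleigh quotients genuinely transport between $B$ and $A$. Once these are pinned down, both inequalities follow immediately from the monotonicity of $\min$ and $\max$ under shrinking or enlarging the family of admissible subspaces. An alternative would be to induct on $n-m$, reducing to the single-deletion case $m = n-1$ and tracking how eigenvalues move under rank-one bordering; but the Courant--Fischer argument disposes of all $m$ simultaneously and avoids that iteration, so I would adopt it.
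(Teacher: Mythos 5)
Your proof is correct. Note that the paper does not prove this statement at all: Cauchy's interlacing theorem is quoted in Appendix A as a classical result with a citation to Horn and Johnson, so there is no in-paper argument to compare against. Your Courant--Fischer derivation is the standard textbook proof, and the details check out: the inclusion of $\RR^m$ onto the coordinate subspace $U$ does intertwine the Rayleigh quotients, the min--max form over the restricted family of $j$-dimensional subspaces of $U$ gives $\alpha_j \leq \beta_j$, and the dimension count $n-(n-m+j)+1 = m-j+1$ in the max--min form gives $\beta_j \leq \alpha_{n-m+j}$. This is exactly the level of argument one would expect if the paper had chosen to include a proof, and it is preferable to the induction on $n-m$ you mention as an alternative, since it handles all $m$ at once.
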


\noindent Some immediate consequences of this theorem are:

\begin{corollary}\label{cor:simplicialprop}
Any principal submatrix of a stable symmetric matrix is stable.  Any symmetric matrix containing an unstable principal submatrix is unstable.
\end{corollary}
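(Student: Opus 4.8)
The plan is to deduce both assertions directly from Cauchy's interlacing theorem (Theorem~\ref{thm:cauchy}), which has just been recorded. The key observation is that for a symmetric matrix stability admits an especially clean reformulation: since all eigenvalues are real, such a matrix is stable precisely when its \emph{largest} eigenvalue is strictly negative. Thus both statements reduce to tracking the top eigenvalue.

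First I would fix notation. Let $A$ be a stable symmetric $n \times n$ matrix and let $B$ be an $m \times m$ principal submatrix. Then $B$ is again symmetric, so its eigenvalues $\beta_1 \leq \cdots \leq \beta_m$ are real, as are those of $A$, which I write as $\alpha_1 \leq \cdots \leq \alpha_n$. By the reformulation above, stability of $A$ means exactly that $\alpha_n < 0$, and stability of $B$ will follow once I show $\beta_m < 0$.

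Next I would invoke the interlacing inequalities $\alpha_j \leq \beta_j \leq \alpha_{n-m+j}$ at the single index $j = m$. This yields $\beta_m \leq \alpha_{n-m+m} = \alpha_n$, and combining with $\alpha_n < 0$ gives $\beta_m < 0$. Since $\beta_m$ is the largest eigenvalue of $B$, every eigenvalue of $B$ is strictly negative, so $B$ is stable; this establishes the first assertion. The second assertion is then immediate as the contrapositive of the first: were $A$ stable, the first part would force every principal submatrix to be stable, contradicting the presence of an unstable one, so $A$ must itself be unstable.

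The only point requiring any care is choosing the correct entry of the interlacing chain, namely the instance of the upper bound $\beta_j \leq \alpha_{n-m+j}$ at $j=m$, which is exactly what bounds the top eigenvalue of the submatrix by the top eigenvalue of the full matrix. Beyond this bit of index bookkeeping there is no genuine obstacle, as the statement is a direct corollary of Theorem~\ref{thm:cauchy}.
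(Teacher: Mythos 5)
Your proof is correct and follows exactly the route the paper intends: the paper presents this corollary as an immediate consequence of Cauchy's interlacing theorem without writing out the details, and your argument (bounding the top eigenvalue of the submatrix via $\beta_m \leq \alpha_n < 0$, then taking the contrapositive for the second assertion) is precisely that omitted verification.
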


\begin{corollary}\label{cor:stab-simplicial}
Let $A$ be a symmetric $n \times n$ matrix with strictly negative diagonal.  Then $\stab(A)$ is a simplicial complex.
\end{corollary}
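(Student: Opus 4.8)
The plan is to verify directly the two defining properties of a simplicial complex for the set $\stab(A) = \{\sigma \subset [n] \mid A_\sigma \text{ is stable}\}$, since both properties turn out to be immediate consequences of facts already established. Recall that a collection $\Delta \subset 2^{[n]}$ is a simplicial complex precisely when (1) every singleton $\{i\}$ lies in $\Delta$, and (2) $\Delta$ is closed under passing to subsets. I would check these in turn.

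For property (1), I would note that for each $i \in [n]$ the principal submatrix $A_{\{i\}}$ is the $1 \times 1$ matrix $[A_{ii}]$, whose single eigenvalue is $A_{ii}$. Since $A$ has strictly negative diagonal, $A_{ii} < 0$, so this matrix is stable and hence $\{i\} \in \stab(A)$. This step uses only the hypothesis on the diagonal and requires no further input. For property (2), suppose $\sigma \in \stab(A)$ and $\tau \subseteq \sigma$; I must show $A_\tau$ is stable. The key observation is that $A_\tau$ is itself a principal submatrix of $A_\sigma$, and that $A_\sigma$ is a \emph{symmetric} stable matrix — symmetric because $A$ is, and stable because $\sigma \in \stab(A)$. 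I would then invoke Corollary~\ref{cor:simplicialprop}, which states that every principal submatrix of a stable symmetric matrix is again stable, to conclude that $A_\tau$ is stable and therefore $\tau \in \stab(A)$.

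There is no genuine obstacle here: the entire argument reduces to the downward-closure step, and that step is a one-line application of Corollary~\ref{cor:simplicialprop}. The only point worth flagging is that this corollary — and hence the whole result — depends essentially on symmetry, since it rests on Cauchy's interlacing theorem (Theorem~\ref{thm:cauchy}), which applies only to symmetric matrices. For nonsymmetric $A$ with negative diagonal the analogous closure property can fail, which is exactly why the statement is restricted to the symmetric case. With both defining properties verified, $\stab(A)$ is a simplicial complex.
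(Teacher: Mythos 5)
Your proof is correct and follows essentially the same route as the paper's: property (1) from the strictly negative diagonal, and property (2) by applying Corollary~\ref{cor:simplicialprop} (itself a consequence of Cauchy's interlacing theorem) to the principal submatrices. You simply spell out the details more explicitly than the paper does.
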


\begin{proof}  First, recall the definitions of $\stab(A)$ and simplicial complex from Section~\ref{sec:permitted}.
We need to check the two properties in the definition of a simplicial complex.
Property (1) holds for $\stab(A)$, because $A$ has strictly negative diagonal.
Property (2) follows from Corollary~\ref{cor:simplicialprop}.
\end{proof}

\noindent Another well-known consequence of Cauchy's interlacing theorem is the following lemma. Here $A_{[k]}$ refers to the principal submatrix obtained by taking the upper left $k \times k$ entries of $A$.

\begin{lemma}\label{lemma:signcondition}  Let $A$ be a real symmetric $n \times n$ matrix.  Then the following are equivalent:
\begin{enumerate}
\item $A$ is a stable matrix.
\item $(-1)^k\det(A_{[k]})>0$ for all $1 \leq k \leq n$.
\item $(-1)^{|\sigma|}\det(A_\sigma)>0$ for every $\sigma \subseteq [n]$.
\end{enumerate}
\end{lemma}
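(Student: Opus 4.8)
The plan is to prove the cyclic chain $(1) \Rightarrow (3) \Rightarrow (2) \Rightarrow (1)$, which is economical because $(3) \Rightarrow (2)$ is immediate and $(1) \Rightarrow (3)$ is a short eigenvalue count. Throughout I would use the fact that, since $A$ is symmetric, all its eigenvalues are real, so ``$A$ stable'' means simply that every eigenvalue is strictly negative (equivalently, that $A$ is negative definite). This lets me pass freely between the spectral statement and sign conditions on determinants.

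For $(1) \Rightarrow (3)$, assume $A$ is stable. By Corollary~\ref{cor:simplicialprop}, every principal submatrix $A_\sigma$ is again a stable symmetric matrix, so its $|\sigma|$ eigenvalues are all strictly negative; since $\det(A_\sigma)$ is the product of these eigenvalues, it has sign $(-1)^{|\sigma|}$, i.e. $(-1)^{|\sigma|}\det(A_\sigma) > 0$ for every $\sigma \subseteq [n]$. The implication $(3) \Rightarrow (2)$ is trivial, since the leading principal submatrices $A_{[k]}$ form a special family among all the $A_\sigma$, so condition (3) contains condition (2) verbatim.

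The substance is $(2) \Rightarrow (1)$, the Sylvester-criterion direction, which I would prove by induction on $n$ using Cauchy's interlacing theorem (Theorem~\ref{thm:cauchy}). The base case $n=1$ reads $-A_{11} > 0$, so the unique eigenvalue is negative. For the inductive step, assume the claim in dimension $n-1$. Given $A$ satisfying (2), the submatrix $A_{[n-1]}$ inherits condition (2) in dimension $n-1$, so by the inductive hypothesis it is stable; write its eigenvalues as $\beta_1 \le \cdots \le \beta_{n-1} < 0$ and those of $A$ as $\alpha_1 \le \cdots \le \alpha_n$. Interlacing with $m = n-1$ gives $\alpha_j \le \beta_j \le \alpha_{j+1}$ for $1 \le j \le n-1$, hence $\alpha_1, \dots, \alpha_{n-1} < 0$; the only eigenvalue not yet controlled is the largest, $\alpha_n$. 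To fix its sign I invoke the top determinant condition: (2) at $k=n$ gives $(-1)^n \det(A) > 0$ with $\det(A) = \alpha_n \prod_{i<n}\alpha_i \neq 0$. Since $\prod_{i<n}\alpha_i$ has sign $(-1)^{n-1}$, and $(-1)^n(-1)^{n-1} = -1$, the inequality forces $\alpha_n < 0$, so all eigenvalues of $A$ are negative and $A$ is stable.

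I expect the main obstacle to be precisely this last step: Cauchy interlacing alone controls only the $n-1$ smallest eigenvalues of $A$, so one nonnegative top eigenvalue $\alpha_n$ could in principle survive, and it must be excluded by feeding in the determinant sign condition $(-1)^n\det(A)>0$. Everything else is routine bookkeeping. (An equivalent shortcut would observe that (2) is exactly Sylvester's criterion applied to the positive definite matrix $-A$, but giving the self-contained interlacing argument keeps the lemma within the toolkit already assembled in this appendix.)
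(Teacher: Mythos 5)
Your proposal is correct and follows essentially the same route as the paper: the forward direction via Cauchy interlacing (through Corollary~\ref{cor:simplicialprop}), and the Sylvester direction $(2)\Rightarrow(1)$ by induction on $n$, using interlacing to control the bottom $n-1$ eigenvalues and the sign of $\det(A)$ to force the top one negative. The only differences are cosmetic: you organize the three conditions as a single cycle where the paper proves $(1)\Leftrightarrow(2)$ and remarks that $(1)\Leftrightarrow(3)$ is analogous, and you handle the final sign computation uniformly where the paper splits into the cases $n$ even and $n$ odd.
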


\begin{proof}  We prove (1) $\Leftrightarrow$ (2).  The equivalence between (1) and (3) follows using a very similar argument.

$(\Rightarrow)$ Assume $A$ is stable.  Then $\lambda_1(A) \leq ... \leq \lambda_n(A) < 0.$  By Cauchy's interlacing theorem, $\lambda_1(A) \leq \lambda_i(A_{[k]}) \leq \lambda_n(A)$ for all $i=1,...,k$ and $k=1,...,n$.  Therefore, all eigenvalues of the matrices $A_{[k]}$ are strictly negative, and hence $(-1)^k\det(A_{[k]})>0$ for all $k$.

$(\Leftarrow)$ We prove this by induction.  The base case is $n=1$: indeed, a $1 \times 1$ matrix $A = [a]$ is stable if $-\det(A_{[1]})>0$, i.e. if $a<0$. Now suppose $(\Leftarrow)$ of the theorem is true for $(n-1) \times (n-1)$ matrices, and also that $(-1)^k\det(A_{[k]}) > 0$, $k=1,...,n$, for an $n \times n$ matrix $A$.  This implies $A_{[n-1]}$ is stable.  By Cauchy's interlacing theorem, the highest eigenvalue $\lambda_{n-1}(A_{[n-1]})$ lies between the top two eigenvalues of $A$:
$$\lambda_{n-1}(A) \leq \lambda_{n-1}(A_{[n-1]}) \leq \lambda_n(A).$$
The stability of $A_{[n-1]}$ thus implies $\lambda_1(A) \leq ... \leq \lambda_{n-1}(A) < 0$.  It remains only to check that $\lambda_{n}(A)<0$.  For $n$ even, $(-1)^n\det(A_{[n]})>0$ implies $\lambda_1(A)\lambda_2(A)\cdots\lambda_n(A)>0$, hence $\lambda_n(A)<0$.  For $n$ odd, $(-1)^n\det(A_{[n]})>0$ implies $\lambda_1(A)\lambda_2(A)\cdots\lambda_n(A)<0$, hence $\lambda_n(A)<0$.  It follows that $A$ is stable.
\end{proof}

\subsubsection{Square distance matrices}
Recall from Section~\ref{sec:main-result} the definitions of square distance matrix, nondegenerate square distance matrix, and Cayley-Menger determinant.
Our convention is that the $1 \times 1$ zero matrix $[0]$ is a nondegenerate square distance matrix,
as $|\CM([0])| = 1 > 0$.  As an example, a $3 \times 3$ symmetric matrix $A$ with zero diagonal is a  nondegenerate square distance matrix if and only if the off-diagonal entries $A_{ij}$ are all positive, and their square roots ($\sqrt{A_{12}}, \sqrt{A_{13}},$ and $\sqrt{A_{23}}$) satisfy all three triangle inequalities.

There are two classical characterizations of square distance matrices.  The first, due to Menger \cite{Blumenthal}, relies on Cayley-Menger determinants.  The second, due to Schoenberg \cite{Schoenberg}, uses eigenvalues of principal submatrices.  Both are needed for our proof of Theorem~\ref{thm:perturbation}.

The relationship between Cayley-Menger determinants and simplex volumes is well-known:

\begin{lemma}\label{lemma:cayley-menger}
 Let $p_1,..,p_k$ be $k$ points in a Euclidean space. Assume that $A_{ij}=\norm{p_i-p_j}^2$ is the matrix of square distances between these points. 
 Then the $(k-1)$-dim volume $V$ of the convex hull of the points $\{p_i\}_{i=1}^k$ can be computed as 
\begin{equation}\label{k-volume}
\hspace{.75in} V^2=\frac{(-1)^{k}}{2^{(k-1)}\left(\left(k-1\right)!\right)^2}\CM(A).
\end{equation}
In particular, if $A$ is a degenerate square distance matrix then $\CM(A)=0$.
\end{lemma}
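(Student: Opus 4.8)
The plan is to reduce the Cayley--Menger determinant to a Gram determinant of edge vectors, for which the volume formula is classical, and then to pass between the two by elementary, determinant-preserving row and column operations on the bordered matrix. I would fix the last point $p_k$ as a base point and set $v_i = p_i - p_k$ for $1 \leq i \leq k-1$. The $(k-1)$-dimensional volume of the simplex is then $V = \frac{1}{(k-1)!}\sqrt{\det M}$, where $M$ is the $(k-1)\times(k-1)$ Gram matrix with entries $M_{ij} = \langle v_i, v_j\rangle$; this is the standard fact that the parallelepiped spanned by $v_1,\dots,v_{k-1}$ has volume $\sqrt{\det M}$ and the simplex occupies a fraction $1/(k-1)!$ of it. Thus $V^2 = \frac{1}{((k-1)!)^2}\det M$, and the whole problem reduces to establishing the purely algebraic identity $\CM(A) = (-1)^k 2^{k-1}\det M$.

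To do this, I would first rewrite the Gram entries in terms of $A$ alone using the polarization identity: since $A_{ij} = \norm{p_i - p_j}^2$, one has $M_{ij} = \langle p_i - p_k, p_j - p_k\rangle = \tfrac{1}{2}(A_{ik} + A_{jk} - A_{ij})$. Next I would operate on the $(k+1)\times(k+1)$ Cayley--Menger matrix $\left[\begin{smallmatrix} 0 & 1^T \\ 1 & A\end{smallmatrix}\right]$: subtract the row indexed by $p_k$ from the rows indexed by $p_1,\dots,p_{k-1}$, and likewise subtract the $p_k$-column from the columns indexed by $p_1,\dots,p_{k-1}$. Because $A_{kk}=0$, the central $(k-1)\times(k-1)$ block becomes exactly $A_{ij} - A_{ik} - A_{jk} = -2M_{ij}$, while the corresponding border entries are zeroed out and a single $1$ survives in each $p_k$-border slot. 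Since these operations preserve the determinant, $\CM(A)$ equals the determinant of the resulting block matrix $\left[\begin{smallmatrix} 0 & 0^T & 1 \\ 0 & -2M & a \\ 1 & a^T & 0\end{smallmatrix}\right]$, where $a_i = A_{ik}$.

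Finally I would evaluate this determinant by two successive cofactor expansions: expanding along the top row (whose only nonzero entry is the $1$ in the last column) and then along the first column of the resulting minor (whose only nonzero entry is the $1$ in its last row) isolates the block $-2M$ and yields $\CM(A) = (-1)^k 2^{k-1}\det M$. Substituting into $V^2 = \frac{1}{((k-1)!)^2}\det M$ produces the stated identity $V^2 = \frac{(-1)^k}{2^{k-1}((k-1)!)^2}\CM(A)$. The degenerate case is then immediate: if $A$ is a degenerate square distance matrix the generating points are affinely dependent, so $V=0$ and hence $\CM(A)=0$. I expect the only real obstacle to be the sign and scalar bookkeeping, namely tracking the factor $(-2)^{k-1}$ extracted from the central block together with the cofactor signs $(-1)^k$ coming from the two expansions, since a slip there would corrupt the constant; the underlying ingredients (polarization, determinant invariance under row/column additions, and the Gram volume formula) are all routine.
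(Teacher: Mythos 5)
Your proof is correct, and every step checks out, including the delicate constant: the polarization identity gives $A_{ij}-A_{ik}-A_{jk}=-2M_{ij}$, the row/column subtractions produce exactly the block matrix you describe (the $(k,k)$ corner survives as $A_{kk}=0$), and the two cofactor expansions contribute signs $(-1)^{k}$ and $(-1)^{k+1}$ whose product $-1$, combined with $\det(-2M)=(-1)^{k-1}2^{k-1}\det M$, yields $\CM(A)=(-1)^{k}2^{k-1}\det M$ and hence the stated formula. For comparison: the paper does not prove this lemma at all --- it is quoted in Appendix A as a classical fact with a citation to Blumenthal --- so there is no in-paper argument to match; your derivation is the standard Gram-determinant reduction and is a complete, self-contained proof. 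The only point worth making explicit in a final write-up is the last sentence of the lemma: degeneracy of $A$ is defined via affine dependence of \emph{some} realizing configuration, and since realizations of a square distance matrix are unique up to isometry, affine dependence forces $V=0$ for every realization, whence $\CM(A)=0$ by your formula.
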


\noindent This leads to Menger's characterization of square distance matrices.  Recall that $A_\sigma$ is the principal submatrix obtained by restricting $A$ to the index set $\sigma$.

\begin{lemma}\label{lemma:realizability}
Let $A$ be an $n \times n$ matrix satisfying $A_{ij}=A_{ji} \geq 0$ and $A_{ii} = 0$ for all $i,j \in [n]$ (i.e., $A$ is a Hebbian matrix).  Then, 
\begin{enumerate}
\item $A$ is a square distance matrix if and only if 
 $(-1)^{|\sigma|} \CM(A_\sigma) \geq 0$ for every $A_\sigma$.
\item $A$ is a nondegenerate square distance matrix if and only if $(-1)^{|\sigma|} \CM(A_\sigma) >0$ for every  $A_\sigma$. 
\end{enumerate}
\end{lemma}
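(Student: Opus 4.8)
The plan is to treat the two directions separately; the forward implications follow instantly from the volume formula, and all the real work is in the converse. If $A$ is a square distance matrix realized by points $p_1,\dots,p_n$, then for every $\sigma\subseteq[n]$ the submatrix $A_\sigma$ is the square distance matrix of the subconfiguration $\{p_i\}_{i\in\sigma}$, so Lemma~\ref{lemma:cayley-menger} gives $(-1)^{|\sigma|}\CM(A_\sigma)=c_{|\sigma|}\,V_\sigma^2\ge 0$, where $c_{|\sigma|}>0$ is a positive constant and $V_\sigma$ is the $(|\sigma|-1)$-volume of the corresponding simplex. When $A$ is nondegenerate the $p_i$ are affinely independent, hence so is every subconfiguration, forcing $V_\sigma^2>0$ and the strict inequality; the singleton case is the convention $(-1)^1\CM([0])=1>0$. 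This disposes of ``$\Rightarrow$'' in both (1) and (2) at once.

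For the converse my preferred route is to pass to the Gram matrix relative to the last index. Set
\[
G_{ij}=\tfrac12\bigl(A_{in}+A_{jn}-A_{ij}\bigr),\qquad i,j\in[n-1],
\]
a symmetric $(n-1)\times(n-1)$ matrix depending only on $A$. The reduction I would use is the standard equivalence: $A$ is a square distance matrix if and only if $G$ is positive semidefinite, and $A$ is a \emph{nondegenerate} square distance matrix if and only if $G$ is positive definite. Indeed, if $p_1,\dots,p_n$ realize $A$ then $G_{ij}=\langle p_i-p_n,\,p_j-p_n\rangle$ is a genuine Gram matrix, hence positive semidefinite, and positive definite exactly when the $p_i-p_n$ are linearly independent, i.e.\ the $p_i$ are affinely independent; conversely a factorization $G=P^TP$ produces vectors whose pairwise and base squared distances reproduce $A_{ij}$ and $A_{in}$ by direct expansion.

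It then remains only to translate definiteness of $G$ into the Cayley--Menger sign conditions, and the bridge is the determinant identity
\[
\det(G_\tau)=\frac{(-1)^{|\tau|+1}}{2^{|\tau|}}\,\CM\!\bigl(A_{\tau\cup\{n\}}\bigr)\qquad\text{for all }\tau\subseteq[n-1],
\]
which I would establish by elementary row/column operations on the bordered Cayley--Menger matrix (alternatively, since both sides are polynomials in the entries $A_{in}$ that agree on the open set of genuine realizations by the volume formula, they agree identically). Granting this, the principal minors of $G$ are, up to positive factors, exactly the quantities $(-1)^{|\sigma|}\CM(A_\sigma)$ with $n\in\sigma$. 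Hence the hypothesis $(-1)^{|\sigma|}\CM(A_\sigma)>0$ for all $\sigma$ makes every leading principal minor of $G$ positive, so $G$ is positive definite by Sylvester's criterion and $A$ is nondegenerate; and the hypothesis $(-1)^{|\sigma|}\CM(A_\sigma)\ge0$ makes every principal minor of $G$ nonnegative, so $G$ is positive semidefinite by the classical principal-minor characterization and $A$ is a square distance matrix. This handles parts (1) and (2) uniformly.

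The main obstacle is pinning down the determinant identity, and in particular getting the sign $(-1)^{|\tau|+1}$ and the factor $2^{-|\tau|}$ correct uniformly in $|\tau|$. I would anchor it on the small cases (for $\tau=\{i\}$ one checks $\det G_\tau=G_{ii}=A_{in}=\tfrac12\CM(A_{\{i,n\}})$) and then invoke the polynomial-identity principle, which conveniently sidesteps any worry about whether a real configuration exists. A secondary point worth flagging is precisely why I favor the Gram reduction over the more classical inductive construction: the latter places $p_1,\dots,p_{n-1}$ by induction and solves for $p_n$, showing that its squared height off the base affine hull equals $(-1)^n\CM(A)$ up to a positive multiple of the base volume, but that argument stumbles on the degenerate-base case of part (1), whereas the linear-algebraic argument treats degenerate and nondegenerate configurations on the same footing.
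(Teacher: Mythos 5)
Your proof is correct, but it takes a genuinely different route from the paper's: the paper disposes of this lemma purely by citation, attributing (1) to the Corollary of Theorem 42.2 and (2) to Theorem 41.1 of Blumenthal \cite{Blumenthal} (Menger's classical characterization), whereas you give a self-contained argument by reducing to Schoenberg's Gram-matrix criterion \cite{Schoenberg} --- which is, incidentally, the same circle of ideas the paper invokes for Proposition~\ref{prop:sd-det}, so your route is consonant with its toolkit. The forward implications via Lemma~\ref{lemma:cayley-menger} are what any proof would do; the substance of your approach is the identity $\det(G_\tau)=\tfrac{(-1)^{|\tau|+1}}{2^{|\tau|}}\CM(A_{\tau\cup\{n\}})$, which converts the principal minors of the base-point Gram matrix into exactly the Cayley--Menger quantities $(-1)^{|\sigma|}\CM(A_\sigma)$ for $\sigma\ni n$ (up to the positive factor $2^{-|\tau|}$), so that the hypothesis on all $\sigma$ is consumed only through those $\sigma$ containing $n$. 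Your signs and constants check out (e.g.\ for the unit equilateral triangle one gets $\det G=3/4$ and $\CM(A)=-3$, consistent with the stated identity), and you are right to invoke the all-principal-minors test for semidefiniteness in part (1), where Sylvester's leading-minor criterion alone would fail. What your route buys is a complete proof that treats the degenerate and nondegenerate cases uniformly; what the paper's route buys is brevity and a pointer to the canonical source. Two points worth tightening if you write this out in full: the polynomial-identity justification of the determinant identity tacitly uses that nondegenerate square distance matrices of each size fill out a set with nonempty interior in the space of symmetric zero-diagonal matrices (true, but it deserves a sentence, or can be bypassed entirely by performing the row and column reductions on the bordered matrix $\CM(A_{\tau\cup\{n\}})$ directly); and the computation $G_{ii}=A_{in}$ uses $A_{ii}=0$, so the identity as stated is specific to zero-diagonal $A$ --- which is fine here, since that is part of the Hebbian hypothesis.
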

\begin{proof}
(1) is equivalent to the Corollary of Theorem 42.2 in \cite{Blumenthal}.
(2) is equivalent to Theorem 41.1 in \cite{Blumenthal}. 
\end{proof}

Schoenberg's characterization implies that if a matrix
is a square distance matrix, then the determinant of any principal submatrix has opposite sign to that of its Cayley-Menger determinant.

\begin{proposition}\label{prop:sd-det}
Let $A$ be an $n \times n$ square distance matrix that is not the zero matrix.  Then: 
\begin{enumerate}
\item $A$ has one strictly positive eigenvalue and $n-1$ eigenvalues that are less than or equal to zero. In particular,
  $(-1)^{|\sigma|} \det(A_\sigma) \leq 0$ for
  every principal submatrix $A_{\sigma}$.
\item If $A$ is a nondegenerate square distance matrix, $A$ has no zero eigenvalues and\\
  $(-1)^{|\sigma|} \det(A_\sigma) < 0$ for every principal submatrix $A_\sigma$ with $|\sigma|>1$. 
  \end{enumerate}
\end{proposition}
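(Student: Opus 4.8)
The plan is to reduce both parts to a single \emph{inertia} statement for $A$ obtained from Schoenberg's realization, and then read off the determinant signs as signed products of eigenvalues. Write $A_{ij} = \norm{p_i - p_j}^2$ for points $p_1,\dots,p_n$ in Euclidean space, which exist because $A$ is a square distance matrix. The key computation I would carry out first is to evaluate the quadratic form of $A$ on the hyperplane $1^\perp \od \{x \in \RR^n : 1^T x = 0\}$: expanding $\norm{p_i-p_j}^2 = \norm{p_i}^2 + \norm{p_j}^2 - 2\,p_i\cdot p_j$ and using $1^T x = 0$ to annihilate the first two terms gives
$$x^T A x = -2\,\Big\| \sum_{i=1}^n x_i\, p_i \Big\|^2 \leq 0 \quad \text{for all } x \text{ with } 1^T x = 0.$$
Thus $A$ is negative semidefinite on the $(n-1)$-dimensional subspace $1^\perp$; this is precisely Schoenberg's characterization, and it is the crux of the argument.

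Next I would convert this subspace statement into a global eigenvalue count. Since the number of strictly positive eigenvalues of a symmetric matrix equals the maximal dimension of a subspace on which it is positive definite, and any such subspace $U$ must meet $1^\perp$ only at $0$ (a nonzero common vector would satisfy $x^TAx>0$ and $x^TAx\le 0$ simultaneously), we get $\dim U \leq 1$, so $A$ has at most one positive eigenvalue. Because $A$ has zero diagonal, $\operatorname{tr}(A) = \sum_i \lambda_i = 0$; as $A \neq 0$ it cannot have all eigenvalues $\leq 0$, so it has \emph{exactly} one strictly positive eigenvalue and $n-1$ eigenvalues $\leq 0$, proving the eigenvalue claim in (1).

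For the determinant signs in (1), I would observe that each principal submatrix $A_\sigma$ is itself the square distance matrix of the subconfiguration $\{p_i\}_{i \in \sigma}$. If $A_\sigma = 0$ (for instance when $|\sigma| = 1$) then $\det A_\sigma = 0$ and the inequality holds trivially; otherwise the eigenvalue count just established applies to $A_\sigma$, giving one positive and $|\sigma|-1$ nonpositive eigenvalues, so $\det A_\sigma$ is a positive number times $|\sigma|-1$ nonpositive factors, yielding $(-1)^{|\sigma|}\det(A_\sigma) \leq 0$.

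Finally, for (2) I would upgrade semidefinite to definite. Nondegeneracy means the $p_i$ are affinely independent, so $\sum_i x_i p_i = 0$ together with $1^T x = 0$ forces $x = 0$; hence the displayed form is strictly negative on $1^\perp \setminus \{0\}$, i.e.\ $A$ is negative definite on an $(n-1)$-dimensional subspace and therefore has at least $n-1$ negative eigenvalues. Combined with the single positive eigenvalue this forces exactly one positive and $n-1$ negative eigenvalues, so $A$ has no zero eigenvalue. The same applies to every $A_\sigma$ with $|\sigma|>1$, since a subconfiguration of affinely independent points is affinely independent; the determinant is then a nonzero product of one positive and $|\sigma|-1$ negative numbers, giving the strict sign $(-1)^{|\sigma|}\det(A_\sigma) < 0$. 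I expect the main obstacle to be the passage from ``negative (semi)definite on an $(n-1)$-dimensional subspace'' to the precise eigenvalue inertia; this is where care with the min--max characterization of eigenvalues (or a direct appeal to Cauchy's interlacing theorem, Theorem~\ref{thm:cauchy}) is needed, rather than any hard computation.
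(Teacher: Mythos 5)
Your proof is correct, and it is genuinely more self-contained than what the paper provides: the paper does not prove Proposition~\ref{prop:sd-det} at all, but simply cites \cite[Theorem 6.2.16]{MoniqueLaurent} and notes that the result ``can also be proven directly from Theorem 1 of Schoenberg's 1935 paper.'' You have essentially carried out that second suggestion in full. The identity $x^TAx = -2\bigl\Vert\sum_i x_i p_i\bigr\Vert^2$ on the hyperplane $1^Tx=0$ is exactly Schoenberg's computation, and your passage from ``(semi)definite on an $(n-1)$-dimensional subspace'' to the inertia of $A$ via the variational characterization of the number of positive (resp.\ negative) eigenvalues is sound; the trace-zero observation correctly pins down that there is exactly one positive eigenvalue, and the restriction to subconfigurations handles all principal submatrices uniformly (including the trivial case $|\sigma|=1$ in part (1), and correctly excluding it in part (2), where affine independence of subconfigurations gives the strict signs). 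The one place to be slightly careful is the step ``at least one positive eigenvalue'': the clean way to say it, as you implicitly do, is that a symmetric matrix with all eigenvalues $\leq 0$ and zero trace must be the zero matrix, contradicting $A\neq 0$. What your approach buys over the paper's is a short, elementary, and fully verifiable argument in place of an external citation; what it costs is nothing, since the only tools used (Sylvester-type inertia counts, or equivalently Courant--Fischer) are no heavier than the Cauchy interlacing machinery the paper already invokes elsewhere in Appendix~A.
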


\begin{proof}
This Proposition is contained in \cite[Theorem 6.2.16]{MoniqueLaurent}.
It can also be proven directly from Theorem 1 of Schoenberg's 1935 paper \cite{Schoenberg}. 
 \end{proof}
 
 \begin{corollary}\label{cor:nondeg}
If $A$ is an $n \times n$ nondegenerate square distance matrix with $n > 1$, then
$$-\dfrac{\CM(A)}{\det A} > 0.$$
\end{corollary}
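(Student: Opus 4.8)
The plan is to read off the sign of $\CM(A)$ and the sign of $\det A$ separately, by applying the two classical characterizations already assembled in Appendix A to the full matrix $A = A_{[n]}$, and then to combine them by taking the quotient. No new analytic input is needed; the corollary is a bookkeeping statement about parity signs.

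First I would invoke Lemma~\ref{lemma:realizability}(2). Since $A$ is a nondegenerate square distance matrix, specializing $\sigma = [n]$ gives $(-1)^n\CM(A) > 0$. In particular $\CM(A) \neq 0$, and $\sgn(\CM(A)) = (-1)^n$. Next I would invoke Proposition~\ref{prop:sd-det}(2), again with $\sigma = [n]$: because $A$ is nondegenerate and $n = |\sigma| > 1$, we get $(-1)^n\det A < 0$. Hence $\det A \neq 0$ (consistent with the fact that a nondegenerate square distance matrix has no zero eigenvalue), and $\sgn(\det A) = (-1)^{n+1}$. This nonvanishing of $\det A$ is precisely what makes the ratio $\CM(A)/\det A$ well-defined, and is the only place the hypothesis $n > 1$ is used.

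Finally, forming the quotient, the common factor $(-1)^n$ cancels and the signs combine as $\sgn\!\left(\CM(A)/\det A\right) = (-1)^n / (-1)^{n+1} = -1$. Thus $\CM(A)/\det A < 0$, and therefore $-\CM(A)/\det A > 0$, as claimed.

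The main obstacle is essentially nonexistent: all the substantive work resides in Lemma~\ref{lemma:realizability} and Proposition~\ref{prop:sd-det}, so the only thing to be careful about is tracking the parity of the powers of $-1$ and confirming that both $\CM(A)$ and $\det A$ are strictly nonzero (guaranteed by nondegeneracy and $n>1$) so that the ratio is genuinely signed rather than indeterminate.
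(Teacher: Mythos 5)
Your proof is correct and is exactly the argument the paper intends: the corollary is left without an explicit proof precisely because it follows immediately from combining Lemma~\ref{lemma:realizability}(2) (giving $(-1)^n\CM(A)>0$) with Proposition~\ref{prop:sd-det}(2) (giving $(-1)^n\det A<0$ for $n>1$), which is what you do. Your sign bookkeeping and the observation that $n>1$ is needed only to invoke the determinant sign condition are both accurate.
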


\subsection{Appendix B: Some more facts about permitted sets of symmetric threshold-linear networks}
 \label{sec:nonsym-rank1}

This section proves a few additional (and novel) facts about permitted sets in symmetric threshold-linear networks.  These were not included in the main text in order not to disrupt the flow of the exposition.

Recall from Theorem~\ref{thm:stable-set} that
the permitted sets $\P(W)$, where $W$ is a threshold-linear network with dynamics given by equation~\eqref{eq:dynamics}, always have the form
$$\P(W) = \stab(-D+W).$$
Here we show that when $W$ is {\it symmetric} (like the networks obtained using the Encoding Rule~\eqref{eq:encoding-rule}), $\P(W)$ can always be expressed as $\stab(-11^T+A)$ or $\stab(-xy^T+B)$, where $-xy^T$ is any rank $1$ matrix
having strictly negative diagonal, and $A, B$ are square matrices with zero diagonal.  
In what follows, we use the notation $\RR_{\times}^n$ and $A^v$ defined at the beginning of Section~\ref{sec:thm4}.

\begin{lemma}\label{lemma:DMD}
Let $M$ be a symmetric $n \times n$ matrix, and $v \in \RR^n_\times$.  Then,
$$\stab(M^v) = \stab(M).$$
In other words, a principal submatrix $M_\sigma^v$ is stable if and only if $M_\sigma$ is stable.
\end{lemma}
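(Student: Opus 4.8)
The plan is to reduce the statement about all principal submatrices to a single statement about full symmetric matrices, and then to recognize the operation $M \mapsto M^v$ as a \emph{congruence} transformation, which preserves inertia. First I would record the compatibility of $(\cdot)^v$ with passage to principal submatrices: writing $v_\sigma$ for the restriction of $v$ to the index set $\sigma$, a direct check of entries gives $(M^v)_\sigma = \diag(v_\sigma)\, M_\sigma\, \diag(v_\sigma) = (M_\sigma)^{v_\sigma}$, since $(M^v)_{ij} = v_i v_j M_{ij}$. Because $v \in \RR_\times^n$ forces $v_\sigma \in \RR_\times^{|\sigma|}$, it then suffices to prove the full-matrix claim: for any symmetric $N$ and any $v \in \RR_\times^n$, $N$ is stable if and only if $N^v$ is stable. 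Applying this claim to each $N = M_\sigma$ yields $\stab(M^v) = \stab(M)$.

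For the full-matrix claim, observe that $N^v = \diag(v) N \diag(v) = \diag(v)^T N \diag(v)$ is congruent to $N$ via $\diag(v)$, which is invertible precisely because every entry of $v$ is nonzero. By Sylvester's law of inertia, congruent symmetric matrices share the same inertia, i.e. the same numbers of positive, negative, and zero eigenvalues. Since a real symmetric matrix has only real eigenvalues, it is stable (all eigenvalues having strictly negative real part) if and only if it is negative definite, i.e. if and only if its inertia is $(0,n,0)$. As inertia is a congruence invariant, $N$ is negative definite if and only if $N^v$ is, which is exactly the claim.

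An equivalent, more hands-on route avoids quoting Sylvester's law and instead invokes Lemma~\ref{lemma:signcondition}. Since $\det((M^v)_\rho) = \det(\diag(v_\rho))^2\,\det(M_\rho) = \big(\prod_{i \in \rho} v_i^2\big)\det(M_\rho)$ and $\prod_{i \in \rho} v_i^2 > 0$, we get $\sgn \det((M^v)_\rho) = \sgn \det(M_\rho)$ for every $\rho \subseteq [n]$. Hence the sign conditions $(-1)^{|\rho|}\det(M_\rho) > 0$ and $(-1)^{|\rho|}\det((M^v)_\rho) > 0$ hold simultaneously, and Lemma~\ref{lemma:signcondition} converts this into the equivalence of stability of $M_\sigma$ and of $M^v_\sigma$ for each $\sigma$. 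The lemma is essentially routine either way; the only point requiring genuine care is the restriction identity $(M^v)_\sigma = (M_\sigma)^{v_\sigma}$, which is what guarantees that the congruence (or the determinant bookkeeping) descends correctly to every principal submatrix rather than merely to the full matrix.
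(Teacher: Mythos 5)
Your proposal is correct, and your second route (the determinant-sign bookkeeping via Lemma~\ref{lemma:signcondition}, using $\sgn\det((M^v)_\rho) = \sgn\det(M_\rho)$) is exactly the paper's own proof; the Sylvester's-law-of-inertia variant you give first is an equally valid congruence-based phrasing of the same fact. Nothing is missing.
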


\begin{proof}  
By Lemma~\ref{lemma:signcondition}, $\tau \in \stab(M)$ if and only if 
$(-1)^{|\sigma|} \det(M_\sigma) > 0$
for every $\sigma \subseteq \tau$.  Observe that, since $M^v = \diag(v)M\diag(v)$, we have $\sgn(\det(M^v_\sigma)) = \sgn(\det(M_\sigma))$ for all $\sigma \subseteq [n]$.  It follows that $\tau \in \stab(M^v)$ if and only if $\tau \in \stab(M)$.  
\end{proof}

\begin{lemma}\label{lemma:13}
For any symmetric threshold-linear network $W$ on $n$ neurons, there exists a symmetric $n \times n$ matrix $A$ with zero diagonal such that
$$\P(W) = \stab(-11^T + A).$$
\end{lemma}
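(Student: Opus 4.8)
The plan is to reduce the statement to the diagonal-congruence invariance of stability recorded in Lemma~\ref{lemma:DMD}. By Theorem~\ref{thm:stable-set} we have $\P(W) = \stab(-D+W)$, so set $M \od -D+W$. Since $W$ is symmetric and $D$ is diagonal, $M$ is symmetric; and by the standing assumption on these networks (that $-D+W$ has strictly negative diagonal), we have $M_{ii} < 0$ for every $i$. The whole point is that stability of principal submatrices is unaffected by a positive diagonal rescaling, so we are free to normalize the diagonal of $M$ to $-1$.

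First I would perform that normalization. Define $v \in \RR^n_\times$ by $v_i \od 1/\sqrt{-M_{ii}}$; this is well defined and nonzero precisely because $M_{ii} < 0$, which is exactly what places $v$ in $\RR^n_\times$. Then the congruent matrix $M^v = \diag(v)M\diag(v)$ has entries $M^v_{ij} = v_i v_j M_{ij}$, so its diagonal entries are $M^v_{ii} = v_i^2 M_{ii} = -1$ for all $i$, while it remains symmetric.

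Next I would simply read off $A$. Setting $A \od M^v + 11^T$, the matrix $A$ is symmetric (a sum of symmetric matrices) and has zero diagonal, since $A_{ii} = M^v_{ii} + 1 = 0$. By construction $M^v = -11^T + A$. Finally, Lemma~\ref{lemma:DMD} applies because $v \in \RR^n_\times$, giving $\stab(M^v) = \stab(M)$, and hence $\P(W) = \stab(M) = \stab(M^v) = \stab(-11^T + A)$, as claimed.

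I do not expect any genuine obstacle here: the only thing requiring care is checking that the rescaling factors are legitimate, i.e. that $v_i$ is real and nonzero, which is guaranteed by the strictly negative diagonal of $-D+W$. All the substance is packaged into Lemma~\ref{lemma:DMD}, whose proof in turn rests on the determinantal sign characterization of $\stab$ in Lemma~\ref{lemma:signcondition} together with the observation that $\sgn(\det(M^v_\sigma)) = \sgn(\det(M_\sigma))$ for every $\sigma$.
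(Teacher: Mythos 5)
Your proof is correct and follows essentially the same route as the paper: both normalize the diagonal of $-D+W$ to $-1$ by a diagonal congruence (your $v$ is the entrywise reciprocal of the paper's $x$, and the resulting $A$ is identical) and then invoke Lemma~\ref{lemma:DMD} to conclude $\stab(-D+W)=\stab(-11^T+A)$. No gaps.
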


\begin{proof}
Let $x  \in \RR_{\times}^n$ be the vector such that $\diag(-xx^T) = \diag(-D+W)$, and write
$$-D+W = -xx^T + (-D+W + xx^T),$$
where the term in parentheses is symmetric and has zero diagonal.  This can be rewritten as
$$-D+W = \diag(x)(-11^T + A)\diag(x) = (-11^T + A)^x,$$
where 
$$A = \diag(x)^{-1} (-D+W + xx^T) \diag(x)^{-1}$$
is a symmetric $n \times n$ matrix with zero diagonal.   It follows from Lemma~\ref{lemma:DMD} that
$\P(W) = \stab(-D+W) = \stab(-11^T+A).$ 
\end{proof}

Lemma~\ref{lemma:13} implies that all sets of permitted sets  $\P(W)$ for symmetric networks $W$ have the form
$\P(W) = \stab(-11^T + A)$,
where $A$ is a symmetric matrix having zero diagonal.  The following Proposition implies
that all such codes can also be obtained by perturbing around {\it any} rank 1 matrix with negative diagonal, not necessarily
symmetric.  
Note that if $x,y \in \RR^n_\times$, the rank 1 matrix $-xy^T$ has strictly negative diagonal
if and only if $x_iy_i>0$ for all $i \in [n]$.

\begin{proposition}\label{prop:-1perturbation}
Fix  $x,y \in \RR^n_\times$ with $x_iy_i>0$ for all $i \in [n]$.  For any symmetric threshold-linear network $W$
on $n$ neurons, there exists an $n \times n$ matrix $B$ with zero diagonal such that
$$\P(W)  = \stab(-xy^T + B) .$$
\end{proposition}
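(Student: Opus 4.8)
The plan is to reduce to the symmetric rank-1 case already handled in Appendix B, and then transport the answer by a diagonal similarity — the key being that diagonal conjugation preserves $\stab$ even for non-symmetric matrices. First I would invoke Lemma~\ref{lemma:13} to write $\P(W) = \stab(-11^T + A)$ for some symmetric $A$ with zero diagonal. Since $x_iy_i > 0$ for every $i$, the quantities $z_i \od \sqrt{x_iy_i}$ and $d_i \od \sqrt{y_i/x_i}$ are real and strictly positive (note $y_i/x_i = x_iy_i/x_i^2 > 0$), so $z, d \in \RR^n_\times$.

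Next I would symmetrize the rank-1 part. Applying Lemma~\ref{lemma:DMD} with the symmetric matrix $M = -11^T + A$ and $v = z$ gives $\P(W) = \stab(M) = \stab(M^z)$. A direct computation shows $M^z = \diag(z)(-11^T + A)\diag(z) = -zz^T + A^z$, where $A^z = \diag(z)A\diag(z)$ is symmetric with zero diagonal (as $A_{ii}=0$). Thus $\P(W) = \stab(-zz^T + A^z)$, reducing the problem to replacing the symmetric rank-1 matrix $-zz^T$ by the prescribed $-xy^T$.

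The crucial step is that conjugation by a nonsingular diagonal matrix preserves $\stab$: for any matrix $M$ and $d \in \RR^n_\times$, the principal submatrix $(\diag(d)^{-1}M\diag(d))_\sigma = \diag(d_\sigma)^{-1}M_\sigma\diag(d_\sigma)$ is similar to $M_\sigma$, hence has the same eigenvalues and the same stability; crucially, this argument does not require symmetry. A short calculation with the above $d$ verifies $\diag(d)(-xy^T)\diag(d)^{-1} = -zz^T$, equivalently $-xy^T = \diag(d)^{-1}(-zz^T)\diag(d)$. I would therefore set $B \od \diag(d)^{-1} A^z \diag(d)$, which has zero diagonal because $A^z$ does and diagonal conjugation acts entrywise. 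Then $-xy^T + B = \diag(d)^{-1}(-zz^T + A^z)\diag(d)$, and by similarity-invariance of $\stab$ we conclude $\stab(-xy^T + B) = \stab(-zz^T + A^z) = \P(W)$, as desired.

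I expect the only subtle point to be the observation that diagonal similarity preserves $\stab$ for non-symmetric matrices — in contrast to the sign-of-principal-minors characterization (Lemma~\ref{lemma:signcondition}) and to Lemma~\ref{lemma:DMD}, both of which are stated only in the symmetric setting and cannot be used on $-xy^T + B$ directly. Everything else is routine bookkeeping: checking that $x_iy_i > 0$ makes both square-root vectors $z,d$ well-defined and nonzero, and verifying the two elementary identities $\diag(z)(-11^T)\diag(z) = -zz^T$ and $\diag(d)(-xy^T)\diag(d)^{-1} = -zz^T$.
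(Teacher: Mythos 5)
Your proposal is correct and follows essentially the same route as the paper's proof: reduce to $\stab(-11^T+A)$ via Lemma~\ref{lemma:13}, symmetrize to $\stab(-zz^T+A^z)$ with $z_i=\sqrt{x_iy_i}$ via Lemma~\ref{lemma:DMD}, and then conjugate by $\diag(\sqrt{y_i/x_i})$ to reach $-xy^T$, with $B$ defined exactly as in the paper. The ``subtle point'' you flag — that diagonal similarity preserves $\stab$ without any symmetry hypothesis — is precisely the paper's Lemma~\ref{lemma:conj}, so nothing is missing.
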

\noindent The proof of this Proposition constructs the matrix $B$ explicitly, and relies on the following Lemma.

\begin{lemma}\label{lemma:conj}
Let $M$ be any $n\times n$ matrix, and $T$ an  $n \times n$ invertible diagonal matrix. Then $$\stab(TMT^{-1}) = \stab(M).$$
\end{lemma}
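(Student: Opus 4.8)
The plan is to reduce this statement about \emph{all} principal submatrices to the elementary fact that conjugate matrices share the same spectrum. The key observation — and the only point requiring any care — is that conjugation by a \emph{diagonal} matrix commutes with the operation of restricting to a principal submatrix. First I would write $T = \diag(t_1,\ldots,t_n)$ with each $t_i \neq 0$, so that $(TMT^{-1})_{ij} = t_i M_{ij} t_j^{-1}$. Fixing an index set $\sigma = \{s_1,\ldots,s_k\} \subseteq [n]$, the principal submatrix $(TMT^{-1})_\sigma$ has entries $t_{s_i} M_{s_i s_j} t_{s_j}^{-1}$. On the other hand, letting $T_\sigma = \diag(t_{s_1},\ldots,t_{s_k})$ denote the corresponding principal submatrix of $T$, the product $T_\sigma M_\sigma (T_\sigma)^{-1}$ has precisely these same entries. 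Hence
$$(TMT^{-1})_\sigma = T_\sigma M_\sigma (T_\sigma)^{-1}$$
for every $\sigma \subseteq [n]$.

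With this identity in hand, the rest is immediate. Since $T$ is invertible with all diagonal entries nonzero, each $T_\sigma$ is itself an invertible diagonal matrix, so $(TMT^{-1})_\sigma$ is similar to $M_\sigma$ and therefore has the same eigenvalues. In particular, $(TMT^{-1})_\sigma$ is stable if and only if $M_\sigma$ is stable. By the definition of $\stab(\cdot)$, this shows $\sigma \in \stab(TMT^{-1})$ if and only if $\sigma \in \stab(M)$, and as $\sigma$ was arbitrary we conclude $\stab(TMT^{-1}) = \stab(M)$.

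The only thing worth flagging as a potential pitfall — and the sole place where the hypotheses are genuinely used — is that the identity $(TMT^{-1})_\sigma = T_\sigma M_\sigma (T_\sigma)^{-1}$ relies essentially on $T$ being diagonal: for a general invertible $T$, restriction to an index set does \emph{not} commute with conjugation, and the lemma would break down. So the entire substance of the argument is this commutation step, after which the comparison of stability is nothing more than the invariance of eigenvalues under similarity.
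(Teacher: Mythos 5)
Your proposal is correct and follows essentially the same route as the paper: both rest on the identity $(TMT^{-1})_\sigma = T_\sigma M_\sigma T_\sigma^{-1}$ (valid because $T$ is diagonal) and then invoke invariance of the spectrum under similarity. Your version simply spells out the commutation of restriction with diagonal conjugation in more detail, which the paper states without comment.
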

\begin{proof}
We have $(TMT^{-1})_{\sigma} = T_{\sigma}M_{\sigma}T^{-1}_{\sigma}$. Since conjugation preserves
the eigenvalue spectrum, the statement follows.  
\end{proof}

\begin{proof}[Proof of Proposition~\ref{prop:-1perturbation}]
Let $W$ be a symmetric threshold-linear network on $n$ neurons.  By Lemma~\ref{lemma:13}, there exists a symmetric $n \times n$ matrix $A$ with zero diagonal such that $\P(W) = \stab(-11^T+A)$.
It thus remains only to construct an $n \times n$ matrix $B$ with zero diagonal such that 
$$\stab(-xy^T + B) = \stab(-11^T + A).$$
We prove that this can always be done in two steps: first, we prove that it can be done
in the special case $x = y$, and then we show that $B$ can be constructed in general.

Step 1:  Fix $x = y \in \RR_\times^n$, and observe that
 $-xx^T+A^x = (-11^T+A)^x$, so
by Lemma~\ref{lemma:DMD} we have $\stab(-xx^T+A^x) = \stab(-11^T+A)$.
Letting $B = A^x$ we obtain the desired statement.

Step 2: Fix $x, y \in \RR_\times^n$ so that $x_iy_i>0$ for all $i \in [n]$, and let $T$ be the diagonal matrix
with entries $T_{ii} = \sqrt{y_i/x_i}.$
Then
$$( T (-xy^T) T^{-1})_{ij} = \sqrt{\dfrac{y_i}{x_i}}(-x_iy_j)\sqrt{\dfrac{x_j}{y_j}}=-\sqrt{x_iy_i}\sqrt{x_jy_j},$$
so $T(-xy^T)T^{-1} = -zz^T$ for $z \in \RR_\times^n$ having entries $z_i = \sqrt{x_iy_i}$.
It follows from Step 1 that
$\stab(-11^T+A) = \stab(-zz^T+A^z) = \stab(T(-xy^T)T^{-1} + A^z).$
Let $$B = T^{-1}A^zT.$$  Then, using Lemma~\ref{lemma:conj},
$\stab(-xy^T+B) = \stab(T(-xy^T+B)T^{-1}) = \stab(-11^T+A).$
Since $A$ has zero diagonal, so do $A^z$ and $B$.  Note that $B$ can be obtained explicitly,
using the expression for $A$ in the proof of Lemma~\ref{lemma:13}. 
\end{proof}

\subsection{Appendix C: Remarks on the ratio $-\frac{\CM(A)}{\det(A)}$}

\noindent{\bf Remark 1.} If $A$ is an $n \times n$ nondegenerate square distance matrix for $n > 1$, then the ratio  $-\dfrac{\CM(A)}{\det(A)}$ has a very nice geometric interpretation:
$$-\dfrac{\CM(A)}{\det(A)} = \left|\dfrac{\CM(A)}{\det(A)}\right| = \dfrac{1}{2\rho^2},$$
where $\rho$ is the radius of the unique sphere circumscribed on the points used to generate $A$. 
This is proven in \cite[Proposition 9.7.3.7]{Berger}, where it is also shown that $\det(A) \neq 0$ not
only if $A$ is a  nondegenerate square distance matrix, but also if $A$ is a {\it degenerate} square
distance matrix corresponding to $n$ points in general position in $\RR^{n-2}$.  Since $\CM(A)$
vanishes in this case, we see that the ratio $-\dfrac{\CM(A)}{\det(A)}$ goes to zero
as $n$ points that are initially in general position in $\RR^{n-1}$ approach general position on a hyperplane of
dimension $n-2$.  
\medskip

\noindent{\bf Remark 2.}
The previous remark has important implications for the apparent ``fine-tuning'' that is involved
in eliminating spurious cliques by arranging points to be collinear, or coplanar, so that the corresponding
principal submatrix  $A_\sigma$ is degenerate (as in Figure 2B).  Since $-11^T + \varepsilon A_\sigma$
is only stable for 
$$0 < \varepsilon < -\dfrac{\CM(A_\sigma)}{\det(A_\sigma)} = \dfrac{1}{2\rho^2},$$ 
where $\rho$ is the radius of the circumscribed sphere, then by making the points $\{p_i\}_{i \in \sigma}$ corresponding to $A_\sigma$
{\it approximately} degenerate, $\rho$ can be made large enough so that $-11^T + \varepsilon A_\sigma$
is unstable -- without the fine-tuning required to make $A_\sigma$ exactly degenerate.

Similarly, exact solutions for $k$-skeleta of clique complexes, obtained using a $k$-sparse universal $S$
which is a {\it degenerate} square distance matrix, are also not as fine-tuned as they might first appear.
If, in fact, $S$ is a nondegenerate square distance matrix, corresponding to a configuration of $n$ points in $\RR^{n-1}$ 
that {\it approximately} lies on a $k$-dimensional plane, the value of
$\delta(S_\sigma)$ will be very small for any pattern of size $|\sigma| > k+1$; one can thus choose $\varepsilon$ large enough to ensure that 
$\geom_\varepsilon(S) = \{\sigma \subset [n] \mid |\sigma| \leq k+1 \},$ as in the case where $S$ is truly degenerate. 
\medskip

\noindent{\bf Remark 3.} It is quite simple to understand the scaling properties of $-\CM(A)/\det(A)$.  If $A$ is any $n \times n$ matrix, then  
$\CM(tA) = t^{n-1}\CM(A),$ while $\det(tA) = t^n \det(A)$, so
$$-\dfrac{\CM(tA)}{\det(tA)} = \dfrac{1}{t}\left(-\dfrac{\CM(A)}{\det(A)}\right),$$
independent of $n$.  If $A_{ij} = \norm{p_i-p_j}^2$, for $p_1,...,p_n \in \RR^{n-1}$, and we scale the 
position vectors so that $p_i \mapsto t p_i$ for each $i \in [n]$,
then $A \mapsto t^2 A$ and we have
$$-\dfrac{\CM(A)}{\det(A)} \mapsto \dfrac{1}{t^2}\left(-\dfrac{\CM(A)}{\det(A)}\right).$$
This is consistent with the fact that the radius $\rho$ of the circumscribed sphere scales
as $\rho \mapsto t \rho$ in this case (see Remark 1).
\medskip

\noindent{\bf Remark 4.}  Consider an $n \times n$ matrix $A$ satisfying the Hebbian conditions
$A_{ij} = A_{ji} \geq 0$ and $A_{ii} = 0$.  If $n$ is large, it is computationally intensive to test
whether or not $A$ is a nondegenerate square distance matrix using the criteria
of Lemma~\ref{lemma:realizability}, which potentially require
computing $\CM(A_\sigma)$ for all $\sigma \subset [n]$.  

On the other hand, our results imply that in order
to test whether or not a Hebbian matrix $A$ is a nondegenerate square distance matrix it is enough to check the stability of the matrix
$$-11^T + \varepsilon A \quad \mbox{for} \quad \varepsilon = \dfrac{1}{2}\left|\dfrac{\CM(A)}{\det(A)}\right|.$$
Here the factor of $1/2$ was chosen arbitrarily, and can be replaced with any number $0 < c < 1$.
For large $n$, this is a computationally efficient strategy, as it requires checking the eigenvalues of just one matrix.

\subsection{Appendix D: The input-output relationship of the network}

In this appendix we discuss the relationship between the inputs and outputs of the network with dynamics given by equation~\eqref{eq:dynamics}:
\begin{equation*}
 \dot x = -Dx+\left[W x + b  \right]_+\!,
\end{equation*}
with notation as described in Section~\ref{sec:background}.  While the inputs correspond to arbitrary vectors $b \in \RR^n$, the outputs of the network correspond to stable fixed points of the dynamics.  We consider two types of outputs: firing rate vectors $x^* \in \RR^n_{\geq 0}$ and binary patterns $\sigma = \supp(x^*),$ corresponding to subsets of co-active neurons at the fixed points.

The observations in this section all stem from a prior result, Proposition~\ref{prop:prop2.1} below~\cite{CDI12}.  Here we also use the notation ``$x < y$'' for vectors $x,y \in \RR^n$ to indicate that $x_i < y_i$ for each $i \in [n]$.  The symbols $>$ and $\leq$ are interpreted analogously.

 \begin{proposition}[{\cite[Proposition 2.1]{CDI12}}]\label{prop:prop2.1}
Consider the threshold-linear network $W$ (not necessarily symmetric) with dynamics given by equation~\eqref{eq:dynamics}, in the presence of a particular fixed input $b$.  Let $\sigma \subset [n]$ be a subset of neurons, and $\bar \sigma$ its complement.  Then, 
a point $x^* \in \RR^n$ with $x^*_\sigma >0$ and $x^*_{\bar \sigma} = 0$ is a fixed point if and only if 
\begin{itemize}
\item[(i)] $b_\sigma = (D - W)_\sigma x^*_\sigma$, and
\item[(ii)] $b_{\bar\sigma} \leq -W_{\bar\sigma \sigma}x^*_\sigma$,
\end{itemize}
where $W_{\bar\sigma \sigma}$ is the submatrix with rows and columns restricted to $\bar\sigma$ and $\sigma$, respectively.
In particular, if $\det(D-W)_\sigma \neq 0$, then there exists at most one nonnegative fixed point with support $\sigma$ and it is given by 
$$x_\sigma^* = (D-W)_\sigma^{-1}b_\sigma \quad \text{and} \quad x_{\bar\sigma}^* = 0,$$ 
provided that $(D-W)_\sigma^{-1}b_\sigma>0$ and properties (i) and (ii) hold.  Moreover, if $x^*$ is a fixed point and $b_{\bar\sigma} <  -W_{\bar\sigma \sigma}x^*_\sigma$, then $x^*$ is asymptotically stable if and only if $(-D+W)_\sigma$ is stable.
 \end{proposition}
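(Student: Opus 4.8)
The plan is to read conditions (i) and (ii) straight off the fixed-point equation, and then to handle the stability claim by using the strict inequality in the hypothesis to confine $x^*$ to the interior of a single linear region of the piecewise-linear vector field.

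First I would rewrite the fixed-point condition for equation~\eqref{eq:dynamics}: a point $x^*$ satisfies $\dot x = 0$ if and only if $Dx^* = [Wx^*+b]_+$ coordinatewise. Splitting $[n]$ into the active set $\sigma = \supp(x^*)$ and its complement $\bar\sigma$, I would observe that for $i \in \sigma$ we have $(Dx^*)_i > 0$, which forces the argument $(Wx^*+b)_i$ to be strictly positive so that the threshold nonlinearity acts as the identity; using $x^*_{\bar\sigma}=0$ this yields $D_\sigma x^*_\sigma = W_\sigma x^*_\sigma + b_\sigma$, i.e. condition (i). For $i \in \bar\sigma$ we have $(Dx^*)_i = 0$, which forces $[(Wx^*+b)_i]_+ = 0$, i.e. $(Wx^*+b)_i \leq 0$; again using $x^*_{\bar\sigma}=0$ this is exactly condition (ii). Both implications are reversible: given (i) and (ii) together with the sign constraints $x^*_\sigma>0$ and $x^*_{\bar\sigma}=0$, the same case analysis shows the nonlinearity takes the asserted values on each block, so $\dot x = 0$. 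This establishes the equivalence. The uniqueness clause is then immediate: when $\det(D-W)_\sigma \neq 0$, condition (i) is a nonsingular linear system in $x^*_\sigma$, so $x^*_\sigma = (D-W)_\sigma^{-1}b_\sigma$ is the only candidate, and the extra hypotheses $(D-W)_\sigma^{-1}b_\sigma > 0$ together with (i) and (ii) guarantee this candidate is an actual fixed point with support exactly $\sigma$.

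The main work, and the step I expect to be the genuine obstacle, is the stability claim, since the vector field in equation~\eqref{eq:dynamics} is only piecewise linear and hence fails to be differentiable across the hyperplanes where a coordinate of $Wx+b$ changes sign. Here I would use the strict inequality $b_{\bar\sigma} < -W_{\bar\sigma \sigma}x^*_\sigma$ crucially: combined with condition (i) it gives $(Wx^*+b)_\sigma > 0$ and $(Wx^*+b)_{\bar\sigma} < 0$ strictly, so by continuity there is an open neighborhood of $x^*$ on which the sign pattern of $Wx+b$ is constant. On this neighborhood the nonlinearity is smooth (the identity on $\sigma$-coordinates, identically zero on $\bar\sigma$-coordinates), so the dynamics are genuinely affine-linear with constant Jacobian
$$J = \begin{bmatrix} (-D+W)_\sigma & W_{\sigma\bar\sigma} \\ 0 & -D_{\bar\sigma} \end{bmatrix}.$$
I would then note that $J$ is block upper-triangular, so its spectrum is the union of the spectra of $(-D+W)_\sigma$ and of $-D_{\bar\sigma}$; the latter is a diagonal matrix with strictly negative entries and is therefore automatically stable. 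Hence $J$ is stable if and only if $(-D+W)_\sigma$ is stable, and since $x^*$ lies in the interior of a region where the flow is linear, asymptotic stability of the fixed point is equivalent to stability of $J$ by the standard theory for linear systems.

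The delicate point to justify carefully is precisely this reduction to a linear region: it is the strict negativity of $(Wx^*+b)_{\bar\sigma}$, rather than the merely weak inequality $\leq 0$ of condition (ii), that prevents the inactive coordinates from sitting on a switching boundary. This is exactly why the weak inequality is enough to characterize fixed points but the strict inequality is needed for the stability conclusion, and I would be explicit about this distinction when writing the neighborhood argument.
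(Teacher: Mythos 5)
Your argument is correct and complete: the block case analysis on $Dx^* = [Wx^*+b]_+$ gives (i) and (ii) and their converse, and the stability claim is handled properly by using the strict inequality to place $x^*$ in the interior of a linearity region with block-triangular Jacobian $\left[\begin{smallmatrix} (-D+W)_\sigma & W_{\sigma\bar\sigma} \\ 0 & -D_{\bar\sigma}\end{smallmatrix}\right]$. Note that this paper does not reprove the proposition but imports it from \cite[Proposition 2.1]{CDI12}; your proof is the natural one and matches the approach taken there.
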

 
 A simple corollary of this proposition is that for a given permitted set $\sigma \in \P(W)$, 
 the neurons in $\sigma$ may, depending on the input, be co-activated via any firing rate vector $x^* \in \RR^n_{\geq 0}$ that has support $\sigma$, although this vector is unique for a given input $b$.  
  
 \begin{corollary}
 Let $\sigma \in \P(W)$ be a permitted set of a threshold-linear network (not necessarily symmetric) with dynamics given by equation~\eqref{eq:dynamics}.  Then, for any $x^* \in \RR^n_{\geq 0}$ with $\supp(x^*) = \sigma$ (i.e., $x^*_\sigma > 0$ and $x^*_{\bar\sigma} = 0$), there exists an input $b \in \RR^n$ such that $x^*$ is the unique stable fixed point of~\eqref{eq:dynamics} whose subset of active neurons is exactly $\sigma$.  
 \end{corollary}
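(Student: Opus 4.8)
The plan is to read the required input $b$ directly off of Proposition~\ref{prop:prop2.1}. First I would record the two structural facts that make the construction work. Since $\sigma \in \P(W) = \stab(-D+W)$ by Theorem~\ref{thm:stable-set}, the principal submatrix $(-D+W)_\sigma$ is stable; in particular none of its eigenvalues vanish, so $(D-W)_\sigma = -(-D+W)_\sigma$ is invertible and $\det(D-W)_\sigma \neq 0$. This is exactly the hypothesis under which Proposition~\ref{prop:prop2.1} guarantees \emph{at most one} nonnegative fixed point with support $\sigma$. Moreover every fixed point of~\eqref{eq:dynamics} is automatically nonnegative, since at a fixed point $x = D^{-1}[Wx+b]_+ \geq 0$; hence uniqueness of a fixed point whose active set is exactly $\sigma$ will follow for free once existence is in hand.

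Next I would choose $b$ to force $x^*$ to be that fixed point. Set $b_\sigma := (D-W)_\sigma\, x^*_\sigma$, which is precisely condition (i) of Proposition~\ref{prop:prop2.1}, and pick $b_{\bar\sigma}$ strictly below threshold, i.e. $b_{\bar\sigma} < -W_{\bar\sigma\sigma}\, x^*_\sigma$ componentwise (for instance, subtract a strictly positive vector from $-W_{\bar\sigma\sigma}\, x^*_\sigma$). Because $x^*_\sigma > 0$ by the hypothesis $\supp(x^*) = \sigma$, conditions (i) and (ii) of Proposition~\ref{prop:prop2.1} both hold, so $x^*$ is a fixed point of~\eqref{eq:dynamics} with support $\sigma$; by the invertibility of $(D-W)_\sigma$ noted above, it is the only fixed point with that support.

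Finally I would invoke the stability clause of Proposition~\ref{prop:prop2.1}: because $b$ was chosen so that the inequality in (ii) is \emph{strict}, $x^*$ is asymptotically stable if and only if $(-D+W)_\sigma$ is stable -- and the latter holds precisely because $\sigma \in \P(W)$. Thus $x^*$ is an asymptotically stable fixed point and the unique fixed point (hence the unique stable one) whose active set is exactly $\sigma$, which is the claim. There is no deep obstacle here; the one point requiring care is the deliberate use of the strict inequality in (ii) rather than mere feasibility, since strictness is what activates the ``if and only if'' stability criterion, whereas the uniqueness assertion rests instead on the nonvanishing of $\det(D-W)_\sigma$ supplied by $\sigma \in \P(W)$.
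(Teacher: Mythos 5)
Your proposal is correct and follows essentially the same route as the paper: choose $b_\sigma = (D-W)_\sigma x^*_\sigma$ and $b_{\bar\sigma} < -W_{\bar\sigma\sigma}x^*_\sigma$ strictly, then invoke Proposition~\ref{prop:prop2.1} together with the invertibility of $(D-W)_\sigma$ (which follows from $\sigma \in \P(W)$) for uniqueness and asymptotic stability. Your added observation that every fixed point is automatically nonnegative is a harmless clarification not spelled out in the paper, but the argument is otherwise identical.
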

 
 \begin{proof}
 Choose any $b \in \RR^n$ such that $b_\sigma = (D-W)_\sigma x^*_\sigma$ and $b_{\bar\sigma} < -W_{\bar\sigma \sigma}x^*_\sigma$.  Observe that $(D-W)_\sigma$ is invertible because it is a stable matrix, since $\sigma \in \P(W)$.  Then, by Proposition~\ref{prop:prop2.1}, $x^*$ is the unique fixed point with support $\sigma$ in the presence of input $b$, and $x^*$ is asymptotically stable.  (Note, however, that stable fixed points with other supports may also arise for the same input $b$.)
 \end{proof}

The above results made no special assumptions about $W$; in particular, they  did not assume symmetricity.
Suppose now that $-W$ is nonnegative, as in the typical output of the Encoding Rule, and let $\sigma \in \P(W)$ be a permitted set.  Then $\sigma$ can be activated as an output binary pattern of the network by choosing any input $b \in \RR^n$ such that $b_\sigma = (D-W)_\sigma y$, for some $y \in \RR_{> 0}^{|\sigma|},$ and $b_{\bar\sigma}<0$.

The flexibility of possible output firing rate vectors, in contrast to the sharp constraints on output binary patterns of co-active neurons, suggests that the input-output relationship of threshold-linear networks should be regarded as fundamentally combinatorial in nature.

\subsection{Appendix E: Details related to generation of PF codes for Figure 4}

\hspace{.2in}
\vspace{-.1in}

\begin{wrapfigure}{r}{.45\linewidth}
\vspace{-.25in}
\begin{center}
\includegraphics[width=2in]{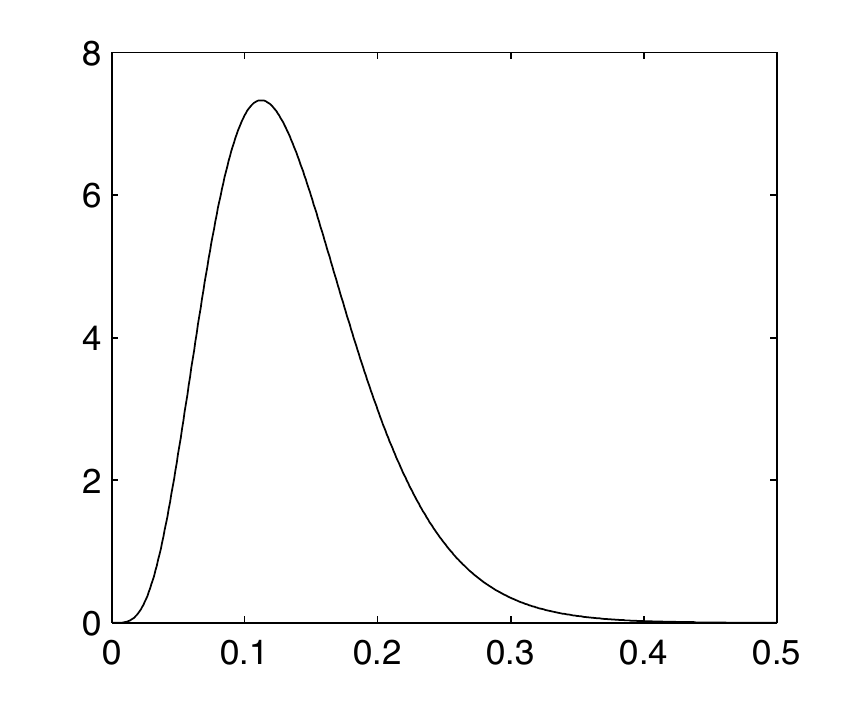}
\end{center}
\caption{\small Gamma distribution used for generating random place field radii; this fits the experimentally-observed mean and variability (see \cite[Figure 4B]{gap}).}
\end{wrapfigure}

\noindent To produce Figure 4, we generated random $k$-sparse PF codes with circular place fields, $n =$ 80-100 neurons, and $k = .1n$.  
For each code, $n$ place field centers were selected uniformly at random from a square box environment of side length $1$, and $n$ place field radii were drawn independently from an experimentally observed gamma distribution (Figure 5).  
We then computed the $2$-skeleton for each PF code, with pairwise and triple overlaps of place fields determined from simple geometric 
considerations.  The full PF code was obtained as the Helly completion of the $2$-skeleton (see Lemma~\ref{lemma:helly}).  Finally, to obtain the $k$-sparse
PF code, we restricted the full code to its $(k-1)$-skeleton, thereby eliminating patterns of size larger than $k$.  

\pagebreak

\bibliographystyle{spmpsci}
\bibliography{references}

\begin{thebibliography}{10}
\providecommand{\url}[1]{{#1}}
\providecommand{\urlprefix}{URL }
\expandafter\ifx\csname urlstyle\endcsname\relax
  \providecommand{\doi}[1]{DOI~\discretionary{}{}{}#1}\else
  \providecommand{\doi}{DOI~\discretionary{}{}{}\begingroup
  \urlstyle{rm}\Url}\fi

\bibitem{Amari1977}
Amari, S.: Dynamics of pattern formation in lateral-inhibition type neural
  fields.
\newblock Biol Cybern \textbf{27}(2), 77--87 (1977)

\bibitem{Amit1989}
Amit, D.: Modeling Brain Function: the World of Attractor Neural Networks.
\newblock Cambridge University press. (1989)

\bibitem{amit_spin-glass_1985}
Amit, D., Gutfreund, H., Sompolinsky, H.: Spin-glass models of neural networks.
\newblock Physical Review. A \textbf{32}(2), 1007{\textendash}1018 (1985)

\bibitem{AGS87}
Amit, D., Gutfreund, H., Sompolinsky, H.: Statistical mechanics of neural
  networks near saturation.
\newblock Annals of Physics NY 173, 30  (1987)

\bibitem{Amitbook}
Amit, D.J.: Modeling brain function.
\newblock Cambridge University Press, Cambridge (1989).
\newblock The world of attractor neural networks

\bibitem{HP}
Andersen, P., Morris, R., Amaral, D., Bliss, T., O'Keefe, J.: The Hippocampus
  Book.
\newblock Oxford University Press (2006)

\bibitem{Barth:2012}
Barth, A.L., Poulet, J.F.: Experimental evidence for sparse firing in the
  neocortex.
\newblock Trends Neurosci \textbf{35}(6), 345--355 (2012)

\bibitem{Barvinok2002}
Barvinok, A.: A course in convexity, \emph{Graduate Studies in Mathematics},
  vol.~54 (2002)

\bibitem{Ben-Yishai1995}
Ben-Yishai, R., Bar-Or, R.L., Sompolinsky, H.: Theory of orientation tuning in
  visual cortex.
\newblock Proc Natl Acad Sci U S A \textbf{92}(9), 3844--8 (1995)

\bibitem{Berger}
Berger, M.: Geometry. {I}.
\newblock Universitext. Springer-Verlag, Berlin (1994).
\newblock Translated from the 1977 French original by M. Cole and S. Levy,
  Corrected reprint of the 1987 translation

\bibitem{fiser:2011}
Berkes, P., Orban, G., Lengyel, M., Fiser, J.: Spontaneous cortical activity
  reveals hallmarks of an optimal internal model of the environment.
\newblock Science \textbf{331}, 83--87 (2011)

\bibitem{Blumenthal}
Blumenthal, L.M.: Theory and applications of distance geometry.
\newblock Oxford, at the Clarendon Press (1953)

\bibitem{BottTu}
Bott, R., Tu, L.W.: Differential forms in algebraic topology.
\newblock Springer-Verlag, New York (1982)

\bibitem{Bressloff12}
Bressloff, P.C.: Spatiotemporal dynamics of continuum neural fields.
\newblock J. Phys. A \textbf{45}(3) (2012)

\bibitem{cohen-grossberg:1983}
Cohen, M., Grossberg, S.: Absolute stability of global pattern formation and
  parallel memory storage by competitive neural networks.
\newblock IEEE Transactions on Systems, Man, and Cybernetics \textbf{SMC-13},
  815--826 (1983)

\bibitem{CDI12}
Curto, C., Degeratu, A., Itskov, V.: Flexible memory networks.
\newblock Bulletin of Mathematical Biology \textbf{74}(3), 590--614 (2012)

\bibitem{gap}
Curto, C., Itskov, V.: Cell groups reveal structure of stimulus space.
\newblock PLoS Comput Biol \textbf{4}(10) (2008)

\bibitem{neuro-coding}
Curto, C., Itskov, V., Morrison, K., Roth, Z., Walker, J.: Combinatorial neural
  codes from a mathematical coding theory perspective.
\newblock Neural Computation  (In press)

\bibitem{neural-ring}
Curto, C., Itskov, V., Veliz-Cuba, A., Youngs, N.: The neural ring: an
  algebraic tool for analyzing the intrinsic structure of neural codes.
\newblock arXiv:1212.4201 [q-bio.NC]  (2012)

\bibitem{DayanAbbott}
Dayan, P., Abbott, L.F.: Theoretical neuroscience.
\newblock MIT Press, Cambridge, MA (2001)

\bibitem{MoniqueLaurent}
Deza, M.M., Laurent, M.: Geometry of cuts and metrics, \emph{Algorithms and
  Combinatorics}, vol.~15.
\newblock Springer-Verlag, Berlin (1997)

\bibitem{ErmentroutTerman}
Ermentrout, G., Terman, D.: Mathematical Foundations of Neuroscience.
\newblock Springer (2010)

\bibitem{Seung:2003}
Hahnloser, R.H., Seung, H.S., Slotine, J.J.: Permitted and forbidden sets in
  symmetric threshold-linear networks.
\newblock Neural Comput \textbf{15}(3), 621--638 (2003)

\bibitem{HertzKroghPalmer}
Hertz, J., Krogh, A., Palmer, R.G.: Introduction to the theory of neural
  computation.
\newblock Addison-Wesley, Redwood City, CA (1991)

\bibitem{hillar:2012}
Hillar, C., Tran, N., Koepsell, K.: Robust exponential binary pattern storage
  in little-hopfield networks.
\newblock arXiv:1206.2081 [q-bio.NC]  (2012)

\bibitem{Hopfield:1982}
Hopfield, J.J.: Neural networks and physical systems with emergent collective
  computational abilities.
\newblock Proc. Natl. Acad. Sci. \textbf{79}(8), 2554--2558 (1982)

\bibitem{Hopfield:1984}
Hopfield, J.J.: Neurons with graded response have collective computational
  properties like those of two-sate neurons.
\newblock Proc. Natl. Acad. Sci. \textbf{81}, 3088--3092 (1984)

\bibitem{HornJohnson}
Horn, R., Johnson, C.: Matrix Analysis.
\newblock Cambridge University Press (1985)

\bibitem{Hromadka:2008}
Hrom{\'a}dka, T., Deweese, M.R., Zador, A.M.: Sparse representation of sounds
  in the unanesthetized auditory cortex.
\newblock PLoS Biol \textbf{6}(1) (2008)

\bibitem{HuffmanPless03}
Huffman, W., Pless, V.: Fundamentals of Error-Correcting Codes.
\newblock Cambridge University Press, Cambridge, MA (2003)

\bibitem{IHT11}
Itskov, V., Hansel, D., Tsodyks, M.: Short-term facilitation may stabilize
  parametric working memory trace.
\newblock Frontiers in Computational Neuroscience \textbf{5}, 1--19 (2011)

\bibitem{kenet:2003}
Kenet, T., Bibitchkov, D., Tsodyks, M., Grinvald, A., Arieli, A.: Spontaneously
  emerging cortical representations of visual attributes.
\newblock Nature \textbf{425}, 954--956 (2003)

\bibitem{luczak:2009}
Luczak, A., Bartho, P., Harris, K.: Spontaneous events outline the realm of
  possible sensory responses in neocortical populations.
\newblock Neuron \textbf{62}, 413--425 (2009)

\bibitem{MacWilliamsSloane83}
MacWilliams, F.J., Sloane, N.J.A.: The Theory of Error-Correcting Codes.
\newblock North-Holland, Amsterdam (1983)

\bibitem{PathIntegration}
McNaughton, B.L., Battaglia, F.P., Jensen, O., Moser, E.I., Moser, M.B.: Path
  integration and the neural basis of the 'cognitive map'.
\newblock Nat Rev Neurosci \textbf{7}(8), 663--78 (2006)

\bibitem{Muller:1996}
Muller, R.: A quarter of a century of place cells.
\newblock Neuron \textbf{17}(5), 813--822 (1996)

\bibitem{Cliquer}
Niskanen, S., Ostergard, P.: Cliquer - routines for clique searching (2010).
\newblock Available at http://users.tkk.fi/pat/cliquer.html

\bibitem{O:1976}
O'Keefe, J.: Place units in the hippocampus of the freely moving rat.
\newblock Exp. Neurol. \textbf{51}, 78--109 (1976)

\bibitem{ON:1978}
O'Keefe, J., Nadel, L.: The Hippocampus as a Cognitive Map.
\newblock Clarendon Press, Oxford, UK (1978)

\bibitem{OsborneBialek08}
Osborne, L., Palmer, S., Lisberger, S., Bialek, W.: The neural basis for
  combinatorial coding in a cortical population response.
\newblock Journal of Neuroscience \textbf{50}(28), 13,522--13,531 (2008)

\bibitem{Oxley2011}
Oxley, J.: Matroid theory, \emph{Oxford Graduate Texts in Mathematics},
  vol.~21, second edn.
\newblock Oxford University Press, Oxford (2011)

\bibitem{Petersen:1998:PNAS}
Petersen, C.C.H., Malenka, R.C., Nicoll, R.A., Hopfield, J.J.: All-or-none
  potentiation at {CA3}-{CA1} synapses.
\newblock PNAS \textbf{95}, 4732--4737 (1998)

\bibitem{RoudiTreves2003}
Roudi, Y., Treves, A.: Disappearance of spurious states in analog associative
  memories.
\newblock Phys. Rev. E \textbf{67}, 041,906 (2003)

\bibitem{Schoenberg}
Schoenberg, I.J.: Remarks to {M}aurice {F}r\'echet's article ``{S}ur la
  d\'efinition axiomatique d'une classe d'espace distanci\'es vectoriellement
  applicable sur l'espace de {H}ilbert''.
\newblock Ann. of Math. (2) \textbf{36}(3), 724--732 (1935).
\newblock \doi{10.2307/1968654}.
\newblock \urlprefix\url{http://dx.doi.org/10.2307/1968654}

\bibitem{Shriki:2003}
Shriki, O., Hansel, D., Sompolinsky, H.: Rate models for conductance-based
  cortical neuronal networks.
\newblock Neural Comput \textbf{15}(8), 1809--1841 (2003)

\bibitem{Seung:2002}
Xie, X., Hahnloser, R.H., Seung, H.S.: Selectively grouping neurons in
  recurrent networks of lateral inhibition.
\newblock Neural Comput \textbf{14}, 2627--46 (2002)

\bibitem{Ulanovsky:2013}
Yartsev, M.M., Ulanovsky, N.: Representation of three-dimensional space in the
  hippocampus of flying bats.
\newblock Science \textbf{340}(6130), 367--372 (2013)

\bibitem{yuste:2005}
Yuste, R., MacLean, J., Smith, J., Lansner, A.: The cortex as a central pattern
  generator.
\newblock Nat Rev Neurosci \textbf{6}, 477--83 (2005)

\end{thebibliography}

\end{document}